\newtheorem{lemma}{Lemma}
\newtheorem{proposition}{Proposition}
\newtheorem{corollary}[proposition]{Corollary}
\theoremstyle{definition}
\newtheorem{definition}{Definition}
\theoremstyle{remark}
\newtheorem{remark}{Remark}
\def\var{{\mathrm{var}}}
\def\cov{{\mathrm{cov}}}
\def\E{{\mathrm{E}}}
\def\I{{\mathrm{I}\,}}
\def\vart{{\mathrm{var}_{\tilde\theta}}}
\def\covt{{\mathrm{cov}_{\tilde\theta}}}
\def\Et{{\mathrm{E}_{\tilde\theta}}}
\def\I{{\mathrm{I}\,}}
\def\It{{\mathrm{I}_{\tilde\theta}\,}}
\title{Impacts of Public Information\\on Flexible Information Acquisition\thanks{I thank seminar participants for valuable discussions and comments at Otaru University of Commerce. I acknowledge financial support by MEXT, Grant-in-Aid for Scientific Research (18H05217).}}
\author{Takashi Ui
\\Hitotsubashi University\\\texttt{oui@econ.hit-u.ac.jp}}
\date{April 2022}
\begin{document}

\maketitle

\begin{abstract}
Interacting agents receive public information at no cost and flexibly acquire private information at a cost proportional to entropy reduction. When a policymaker provides more public information, agents acquire less private information, thus lowering information costs. Does more public information raise or reduce uncertainty faced by agents? Is it beneficial or detrimental to welfare? To address these questions, we examine the impacts of public information on flexible information acquisition in a linear-quadratic-Gaussian game with arbitrary quadratic material welfare. More public information raises uncertainty if and only if the game exhibits strategic complementarity, which can be harmful to welfare. However, when agents acquire a large amount of information, more provision of public information increases welfare through a substantial reduction in the cost of information. We give a necessary and sufficient condition for welfare to increase with public information and identify optimal public information disclosure, which is either full or partial disclosure depending upon the welfare function and the slope of the best response.   \\
\noindent\textit{JEL classification}: C72, D82, E10. 
\newline 
\noindent\textit{Keywords}: public information; private information; crowding-out effect; linear-quadratic-Gaussian game; optimal disclosure; information design; rational inattention.

\end{abstract}

\newpage

\section{Introduction}

Consider a policymaker (such as a central bank) who discloses public information to agents (such as firms and consumers). 
Agents receive public information provided by the policymaker at no cost and flexibly acquire private information at a cost proportional to entropy reduction \citep{sims2003,yang2015,denti2020}. 
When the policymaker provides more public information, agents acquire less private information, thus lowering information costs. 
Then, does the total amount of information about an unknown state increase or decrease? 
Should the policymaker provide more information to increase welfare? 
What is the optimal disclosure of public information maximizing welfare?

We address these questions in a two-stage game based on a symmetric linear-quadratic-Gaussian (LQG) game with a continuum of agents, where a payoff function is quadratic and an underlying state is normally distributed. 
In period 1, the policymaker chooses the precision of a normally distributed public signal. 
The policymaker's objective function is social welfare, which is a material benefit minus the agents' information cost, and the material benefit is an arbitrary quadratic function such as the aggregate payoff. 
In period~2, each agent flexibly acquires private information in the LQG game. 
The second-period subgame is essentially the same as those studied by \citet{denti2020} and \cite{rigos2020}.  

We measure the total amount of information in terms of the reduction in entropy caused by receiving public and private signals, i.e., the mutual information. 
By the chain rule of information, it can be represented as the sum of the mutual information of a public signal and a state and the conditional mutual information of a private signal and a state. 
When the policymaker provides more precise public information, the former increases, but the latter decreases. 
This effect of public information is referred to as the crowding-out effect \citep{colombofemminis2014}.\footnote{This effect is also found in \citet{colombofemminis2008}, \citet{wong2008}, \citet{hellwigveldkamp2009}, \citet{myattwallace2012}, among others.}  

Our first main result shows that more provision of public information decreases 
the total amount of information if and only if the game exhibits strategic complementarity (i.e., the slope of the best response is strictly positive). 
This is because the crowding-out effect is more significant under strategic complementarity than under strategic substitutability \citep{colombofemminis2014}. 
Consider the benchmark case of no strategic interaction (i.e., the slope equals zero). 
In this case,  public and private signals are perfect substitutes because an agent pays no attention to the opponents' actions. Thus, when the policymaker increases one unit of public information, an agent reduces the same unit of private information, so the total amount of information remains constant. 
In a game with strategic complementarity, however, public and private signals are imperfect substitutes and each agent puts more value on public information than on private information from a coordination motive. 
Consequently, when the policymaker increases one unit of more valuable public information, an agent reduces more units of less valuable private information, so the total amount of information decreases.

To study the welfare implication of the above result, we represent the expected welfare (in the equilibrium of the second-period subgame) as a linear combination of the volatility (the variance of the average action) and the dispersion (the variance of individual actions around the average action) minus the cost of information. This follows \citet{uiyoshizawa2015} who study the social value of public information in the case of exogenous private information.

\begin{figure} 
  \centering
  \subfloat[$\alpha< 1/2$. The slope of the best response is small.]{\label{fig:case 1}
  \includegraphics[width=3.55cm, bb=0 0 450 464]{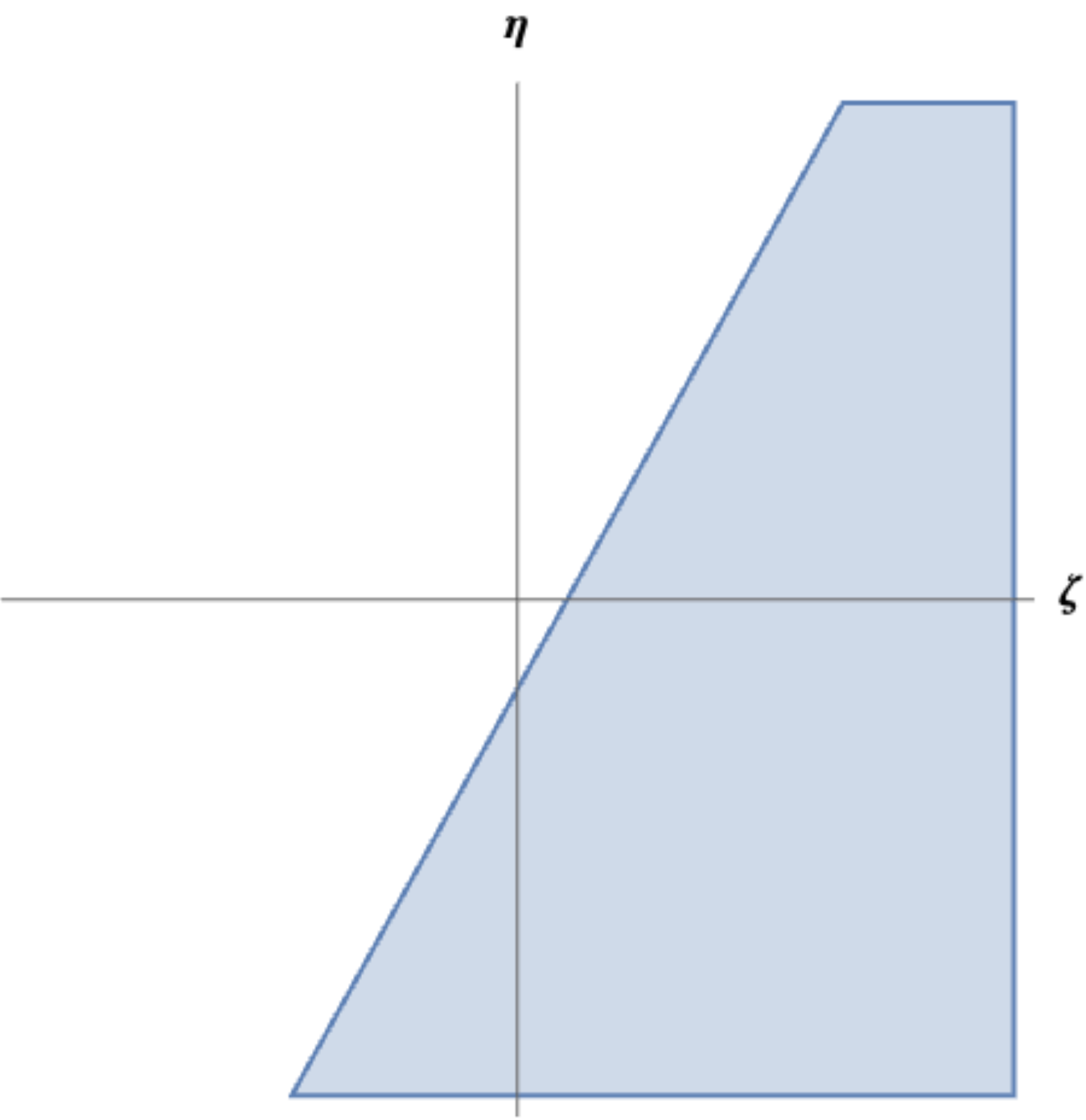}}\qquad \ 
  \subfloat[$\alpha>1/2$. The slope of the best response is large.]{\label{fig:case 2}\includegraphics[width=3.55cm, bb=0 0 450 464]{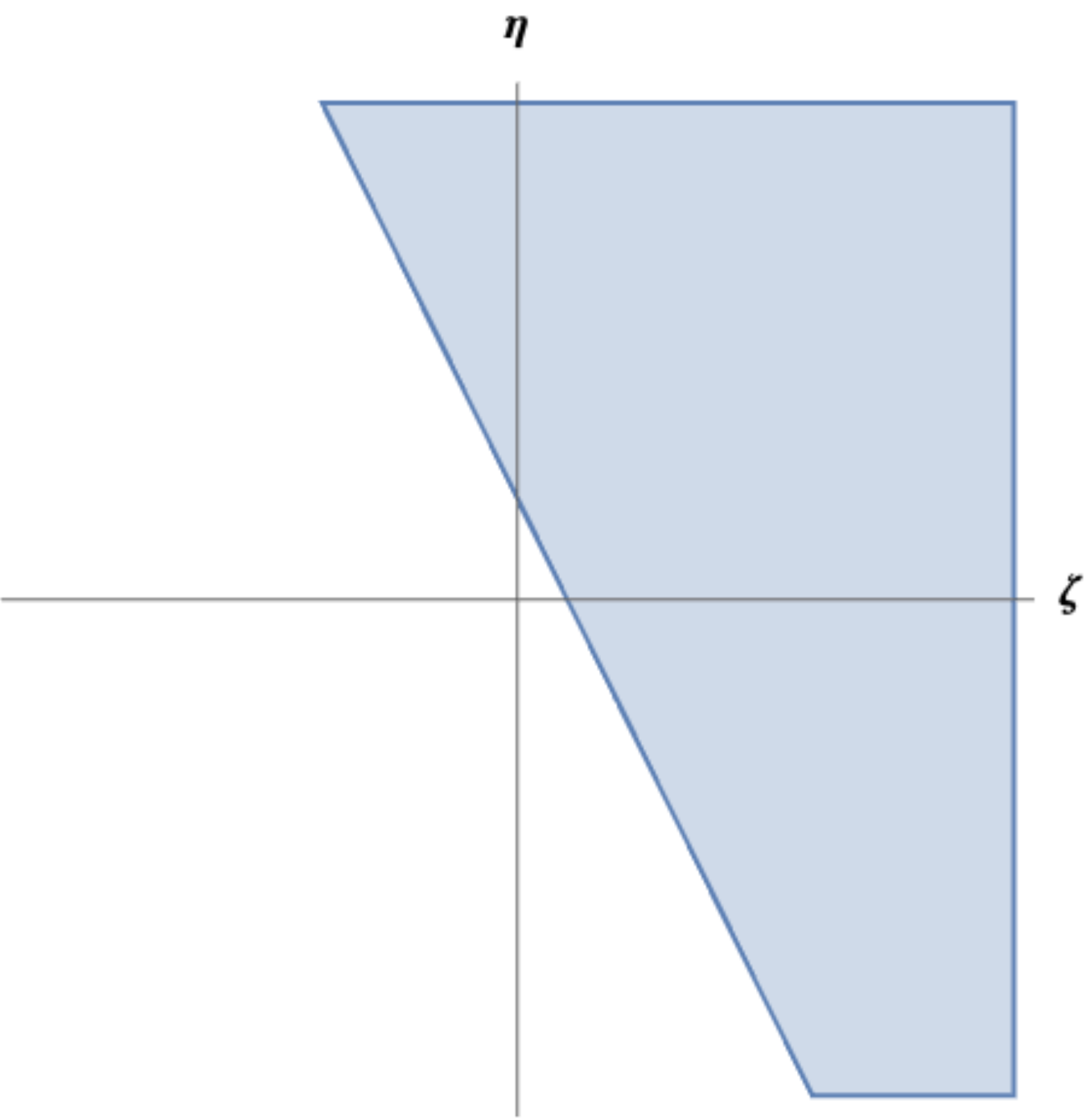}}
  \caption{Public information can be detrimental in the right region on the $\xi\eta$-plane defined by $\zeta -(1-2 \alpha ) \eta /(1-\alpha)^2>1$.} 
  \label{fig1}
\end{figure}

Our second main result is a necessary and sufficient condition for welfare to increase with public information, which depends upon the coefficients of volatility and dispersion and the slope of the best response. 
When the precision of public information is very low, 
more precise public information always increases welfare. This is because agents acquire a substantial amount of information facing considerable uncertainty, incurring an inefficiently large cost.  
When the precision of public information is not so low, more precise public information can reduce welfare if the coefficient of dispersion is greater than a threshold value determined by the coefficient of volatility (see Figure \ref{fig1}, where $\zeta$ is the coefficient of dispersion, $\eta$ is that of volatility, and $\alpha$ is the slope of the best response). 
In addition, the threshold value is increasing in the coefficient of volatility if the slope of the best response is not so large, but it is decreasing otherwise. 
This result is attributed to the following properties of dispersion and volatility. 
The dispersion decreases with public information because it equals the difference between the variance and the covariance of individual actions, and more precise public information brings them closer. 
In contrast, if the slope of the best response is not so large, the volatility increases with public information because it equals the covariance of individual actions, and more precise public information increases the covariance through an increase in the correlation coefficient. 
These properties are essentially the same as in the case of exogenous private information \citep{uiyoshizawa2015} or rigid information acquisition \citep{ui2022}. 
However, if the slope of the best response is very large, there is a sharp difference. That is, the volatility decreases with public information due to the crowding-out effect enhanced by strong strategic complementarity.

\begin{figure} 
  \centering
  \subfloat[Flexible information acquisitoin]{\label{fig:flexible}
    \includegraphics[width=3.55cm, bb=0 0 675 698]{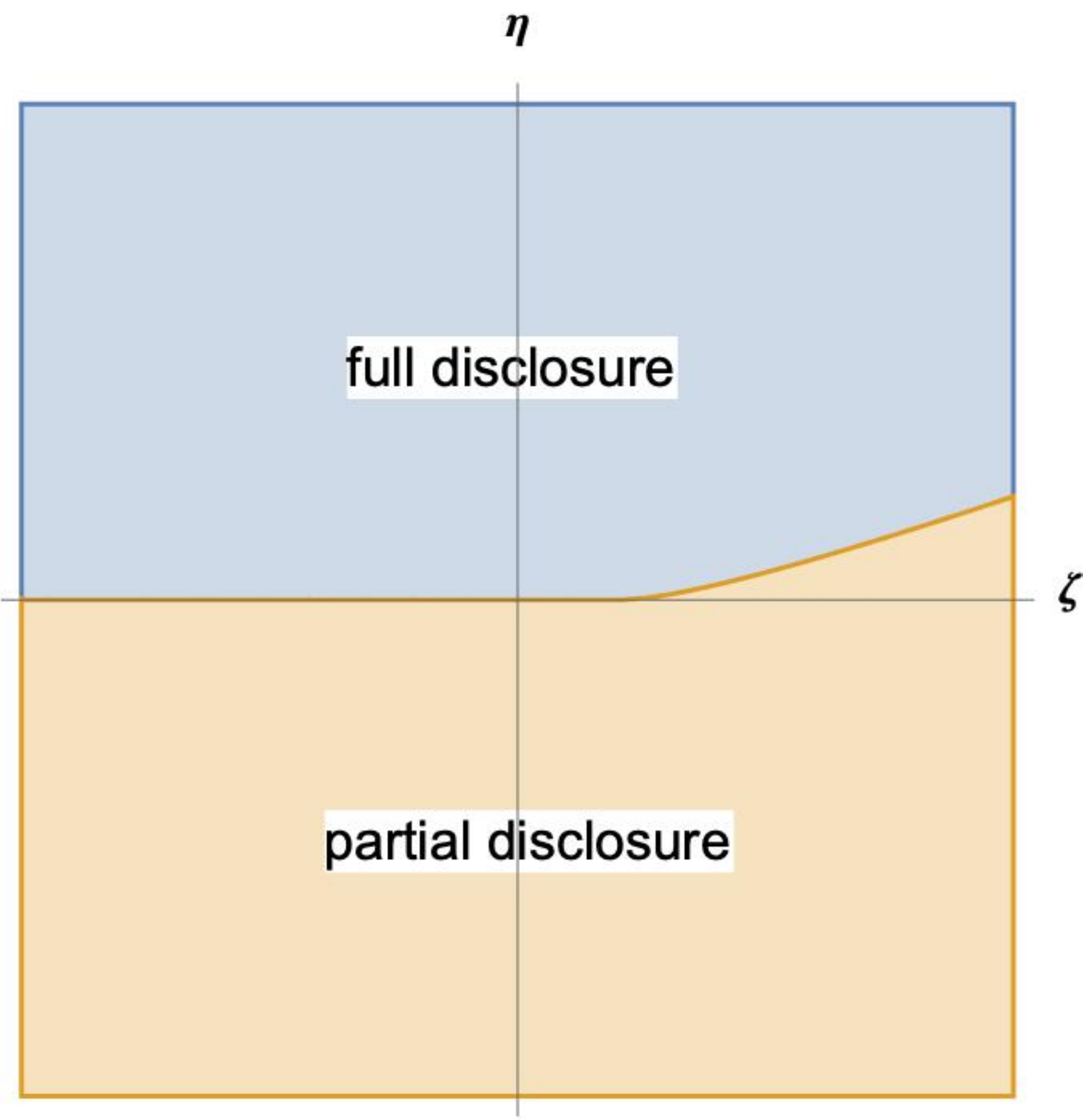}}\qquad \ 
  \subfloat[Exogenous private information]{\label{fig:exogenous}\includegraphics[width=3.55cm, bb=0 0 675 698]{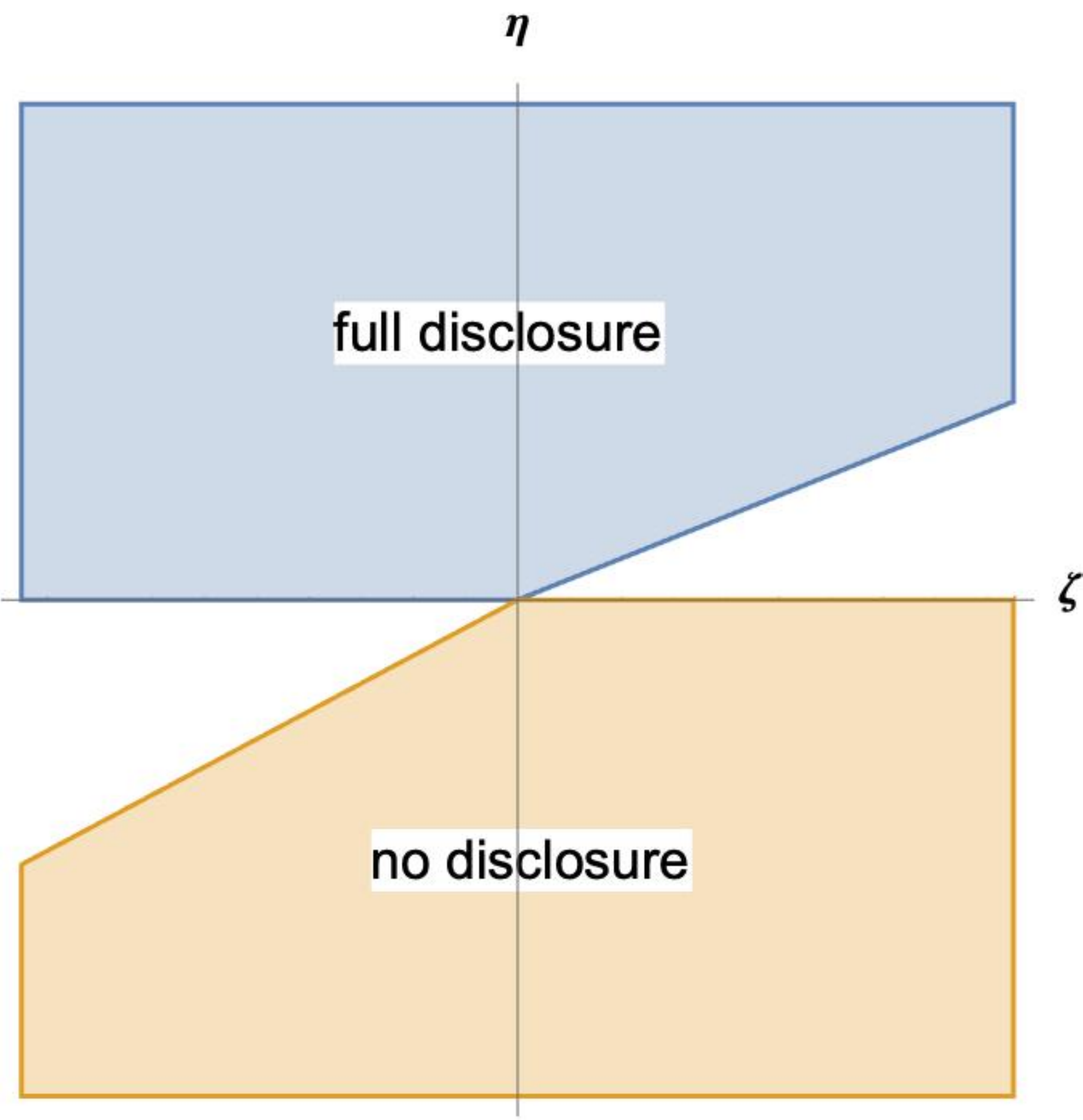}}
  \caption{The optimal disclosure on the $\zeta\eta$-plane. } 
  \label{fig2}
\end{figure}

Our last main result identifies the optimal precision of public information. 
We show that full disclosure (i.e., the precision is infinite) is optimal if the coefficient of volatility is strictly positive and greater than a threshold value determined by that of dispersion; partial disclosure (i.e., the precision is finite) is optimal otherwise (see Figure \ref{fig:flexible}). 
It should be noted that no disclosure is always suboptimal, which is in sharp contrast to the case of exogenous private information. \citet{uiyoshizawa2015} study optimal disclosure of public information under exogenous private information and show that no disclosure is optimal if the coefficient of volatility is small enough (see Figure \ref{fig:exogenous}). 
Even if the precision of private information is endogenously determined in the rigid information acquisition model  \citep{colombofemminis2014}, no disclosure can be optimal \citep{ui2022}.  

As applications of the above results, we consider a Cournot game \citep{vives1988} and an investment game \citep{angeletospavan2004} adopting the total net profit as a measure of firms' welfare. 
Mathematically, both games belong to the same class of games with different slopes of the best response: a Cournot game has a negative slope exhibiting strategic substitutability, and an investment game has a positive slope exhibiting strategic complementarity.
In the case of exogenous private information, 
more precise public information can reduce the total profit if and only if the game exhibits strong strategic {substitutability}, in which case no disclosure can be optimal \citep{morrisbergemann2013,angeletospavan2004}. 
This result remains valid in the case of rigid information acquisition \citep{ui2022}. 
A contrasting result holds in the case of flexible information acquisition. 
That is, more precise public information can reduce the total profit if and only if the game exhibits strong strategic {\em complementarity}, while full disclosure is always optimal. 
The difference arises from the crowding-out effect enhanced by strategic complementarity. 

We also apply our results to a beauty contest game \citep{morrisshin2002}. 
The welfare is the negative of the mean squared error of an individual action from the state minus the cost of information. 
In the case of exogenous private information, it is known that more precise public information can be harmful to welfare if the slope of the best response is greater than a threshold value \citep{morrisshin2002}. 
In the case of flexible information acquisition, a similar result holds, but the threshold value is less than in the case of exogenous private information; that is, welfare is more likely to decrease due to the crowding-out effect enhanced by strategic complementarity. 
This result is in clear contrast to the case of rigid information acquisition: the threshold value is greater than in the case of exogenous private information \citep{colombofemminis2008, ui2014, ui2022}.

This paper is organized as follows. 
We introduce the model and discuss an equilibrium of the second-stage subgame in Sections 2 and 3, respectively. 
The relationship between the total amount of information and the provision of public information is studied in Section~4. 
Section~5 derives the representation of welfare in terms of the volatility and dispersion. 
We study the welfare effect of public information in Section 6 and identify the optimal disclosure in Section~7. Section 8 is devoted to applications. All proofs are relegated to the appendices.

\subsection{Related literature}\label{Related literature}

Flexible information acquisition is modeled in the rational inattention framework \citep{sims2003}.\footnote{For a recent survey, see \citet{mackowiaketal2021}.} 
There are two types of flexible information acquisition in games. 
\citet{yang2015} considers independent information acquisition focusing on global games, where agents pay attention to a state alone. 
\citet{denti2020} considers correlated information acquisition, where agents pay attention not only to the state but also to the opponents' signals.  
\citet{rigos2020} studies LQG games under independent information acquisition,\footnote{The price-setting model of \citet{mackowiakwiederholt2009} is also an LQG game with independent information acquisition, but information acquisition technology is different.} and \citet{denti2020} (in an earlier version) studies them under correlated information acquisition. 
The second-period subgame of our model corresponds to the models of \citet{rigos2020} and \citet{denti2020}. 
We start with the assumption of correlated information acquisition and show that the set of equilibria in the second-period subgame is the same for both types of information acquisition. 

\citet{hebertlao2021} consider agents who flexibly acquire information about a state, a public signal, and the opponents' signals under a large class of information costs and study how properties of information costs relate to properties of equilibria.  
In particular, they examine whether agents correlate their actions with a public signal, which generates non-fundamental volatility, by assuming that a public signal is costly. 
In our paper, we assume that agents receive a public signal at no cost to study the direct impacts of public information provided by the policymaker. 

This paper contributes to the literature on the role of public information in symmetric LQG games with a continuum of agents.\footnote{LQG games are introduced by \citet{radner1962}. See \citet{vives2008} and \citet{ui2009,ui2016}.}  
In the case of exogenous private information, \citet{morrisshin2002} show that more provision of public information can be detrimental to welfare in a beauty contest game and attribute the harmful effect to agents' overreaction to a public signal under strategic complementarity.  
\citet{angeletospavan2007} study a symmetric LQG game and find a key factor for the social value of public information, which is referred to as the equilibrium degree of coordination relative to the socially optimal one. 
\citet{morrisbergemann2013} characterize the set of all Bayes correlated equilibria in this game and advocate a problem of finding optimal ones.\footnote{\citet{ui2020} considers information design in general LQG games by formulating it as semidefinite programming.}  
\citet{uiyoshizawa2015} solve the problem for an arbitrary quadratic welfare function by providing a necessary and sufficient condition for welfare to increase with private or public information. 
\citet{colombofemminis2014} integrate the symmetric LQG game and the model of rigid information acquisition with convex information costs and compare the equilibrium precision of private information with the socially optimal one.\footnote{Other models of information acquisition in symmetric LQG games are studied by \citet{hellwigveldkamp2009}, \citet{myattwallace2012,myattwallace2015}, among others.}   
\citet{ui2022} studies the welfare effect of public information in the model of \citet{colombofemminis2014} and identifies the optimal disclosure that maximizes welfare and the robust disclosure that maximizes the worst-case welfare, where the policymaker is uncertain about agents' information costs. 
In our paper, we assume flexible information acquisition and demonstrate that the impacts of public information are essentially different from the cases of exogenous private information and rigid information acquisition.  
We also show that the harmful effect of public information can be attributed to the crowding-out effect enhanced by strategic complementarity.

This paper is also related to the literature on Bayesian persuasion and information design\footnote{See survey papers by \citet{morrisbergemann2017} and \citet{kamenica2019} among others.} with a receiver who costly acquires information. 
\citet{lipnowskietal2020a} and \citet{bloedelsegal2020} consider a rationally inattentive receiver who can learn less information than what a sender provides.  In \citet{bloedelsegal2020}, a receiver pays costly attention to the sender's signals. In \citet{lipnowskietal2020a}, a receiver pays costly attention to a state, but the information is less than that provided by the sender. 
\citet{nizzottoetal2020} and \citet{matyskovamontes2021} consider a receiver who can acquire additional information after receiving a sender's signal at no cost, which is closer in spirit to our model. 
The models in the above papers are based on the Bayesian persuasion framework \citep{kamenicagentzkow2011}, and 
the sender's payoff is independent of the receiver's information cost.
Our model is an extension of a symmetric LQG game, and the policymaker's concern includes the agents' information costs.

\section{The model}\label{the model}

There are a policymaker and a continuum of agents indexed by $i\in [0,1]$. 
We consider the following two-period setting. 
In period~1, the policymaker chooses the precision of public information about an uncertain state and provides public information to agents at no cost. 
In period~2, each agent flexibly acquires private information at a cost proportional to the mutual information and chooses an action. 

Let $\theta$ denote an uncertain state, which is normally distributed with mean $\bar\theta$ and precision $\tau_\theta$ (the reciprocal of the variance).
Public information is given as a random variable $\tilde\theta$, which is normally distributed such that 
the posterior distribution of $\theta$ given $\tilde \theta$ is a normal distribution with mean $\tilde\theta$ and precision $\tau\geq\tau_\theta$; that is, $\E[\theta|\tilde\theta]=\tilde\theta$ and $\var[\theta|\tilde\theta]=1/\tau\leq 1/\tau_{\theta}$. Such a random variable $\tilde\theta$ exists and statisfies $\var[\tilde\theta]=\cov[\tilde\theta,\theta]=1/\tau_\theta-1/\tau$. 
In particular, if $\tau=\tau_\theta$, then $\var[\tilde\theta]=0$. 
We call $\tau$ the precision of public information, which is chosen by the policymaker in period 1.\footnote{In the literature, public information is often given as $y=\theta+\varepsilon$, where $\varepsilon$ is a normally distributed noise term with precision $\tau_y$. The conditional distribution of $\theta$ given $y$ is a normal distribution with mean $\E[\theta|y]$ and precision $\tau_\theta+\tau_y$. Thus, we can translate $y$ into $\tilde\theta$ by letting $\tilde \theta=\E[\theta|y]$ and $\tau=\tau_\theta+\tau_y$ because $\E[\theta|y]=\E[\theta|\E[\theta|y]]$.}

Agent $i$'s action is a real number $a_i \in \mathbb{R}$. 
We write $a=(a_i)_{i\in [0,1]}$, $a_{-i}=(a_j)_{j\neq i}$, and $A=\int a_idi$. 
Agent $i$'s payoff to an action profile $a$ is
\begin{align}
u_i({a}, \theta)=-a_i^2+2\alpha a_i A+2\beta \theta a_i+h(a_{-i},\theta), \label{payoff function cont}
\end{align}
which is symmetric and quadratic in $a$ and $\theta$. 
This game exhibits strategic complementarity if $\alpha>0$ and strategic substitutability if $\alpha<0$. We assume $\alpha<1$, which guerantees the existence and uniquenss of a symmetric equilibrium under a symmetric Gaussian information structure. We also assume $\beta >0$ without loss of generality.
By completing the square, we obtain  
\begin{align}
u_i({a}, \theta)=
-\left(a_i-(\alpha A+\beta\theta)\right)^2+h'(a_{-i},\theta), \label{payoff function cont 1}
\end{align}
where $h'(a_{-i},\theta)=h(a_{-i},\theta)-(\alpha A+\beta\theta)^2$. 
Thus, agent $i$'s best response equals the conditional expectation of a target $\alpha A+\beta\theta$. We call $\alpha$ the slope of the best response.  
Note that $h'(a_{-i},\theta)$ has no influence  on the best response, so we can assume $h'(a_{-i},\theta)=0$ when we study equilibria.


Agent $i$ acquires private information about a target $\alpha A+\beta\theta$ \citep{denti2020}. 
The cost of information is proportional to the mutual information of a private signal $s_i$ and a target $\alpha A+\beta\theta$. 
The mutual information of two random variables $x$ and $y$ is denoted by
\[
\I(x;y)\equiv \int p(x,y)\log \frac{p(x,y)}{p(x)p(y)}dxdy,
\] 
where $p(x,y)$ is the joint density function of $(x,y)$, and 
$p(x)$ and $p(y)$ are the marginal density functions of $x$ and $y$, respectively. 
We write $\It(x;y)$ for the mutual information with respect to the conditional distribution of $(x,y)$ given a public signal $\tilde\theta$: 
\[
\It(x;y)\equiv \int p(x,y|\tilde\theta)\log \frac{p(x,y|\tilde\theta)}{p(x|\tilde\theta)p(y|\tilde\theta)}dxdy. 
\] 
Then, the cost of a private signal $s_i$ is 
\begin{equation}
C(s_i)\equiv 
\lambda \cdot \It(\alpha A+\beta\theta;s_i),
\notag\label{def of cost}	
\end{equation}
where $\lambda>0$ is constant. 

Without loss of generality, we restrict attention to a direct signal $s_i=a_i$ that suggests an optimal action, as justified by the revelation principle. 
Agent $i$'s direct signal is optimal if it is a solution of an optimization problem 
\begin{equation}
\max_{a_i}-\Et[\left(a_i-(\alpha A+\beta\theta)\right)^2]-C(a_i) 
\label{optimal signal}
\end{equation}
where $a_i$ follows an arbitrary distribution, and $\Et$ denotes the conditional expectation operator when a public signal $\tilde\theta$ is given.  
An optimal direct signal  must satisfy $a_i=\Et[\alpha A+\beta \theta|a_i]$.  


Anticipating the agents' decision in period 2, the policymaker chooses $\tau$ in period 1 to maximize the expected value of a welfare function given by  
\begin{equation}
v(a,\theta)-\int C(a_i)di, \label{bp's payoff1}
\end{equation}
where $v(a,\theta)$ is symmetric and quadratic in $a$ and $\theta$, i.e.,  
\begin{equation}
v(a,\theta)\equiv c_1\int_0^1 a_j^2 dj+c_2\left(\int_0^1 a_j dj\right)^2+c_3\theta\int_0^1 a_j dj+c_4\int_0^1 a_j dj+c_5. \label{v function}
\end{equation}
A typical case is the aggregate payoff, i.e., $v(a,\theta)=\int u_i(a,\theta) di$.

\section{The second-period subgame}\label{section: The second-period subgame}

This section focuses on the second-period subgame and calculates an equilibrium. 
\citet{denti2020} (in an earlier version) obtains an equilibrium 
assuming $\tau< 2\beta^2/\lambda$, which guarantees the existence and uniquness of equilibrium. 
We drop this assumption and obtain the set of equilibria for all $\tau$ using a parameter called the information fraction \citep{bowsherswain2012}, which will play an essential role in our analysis.


In the second-stage subgame, a public signal $\tilde\theta$ with precision $\tau$ is given, and it is common knowledge that $\theta$ is normally distributed with mean $\tilde\theta$ and precision $\tau$. 
Thus, throughout this section, every probability distribution is understood as the conditional one given $\tilde\theta$. 
Let $\vart[x]$ and $\covt[x,y]$ denote the conditional variance of $x$ and the conditional covariance of $x$ and $y$, respectively. For example, $\vart[\theta]=1/\tau$. 
Henceforth, {\em conditional} will be omitted until the end of this section.

We focus on a symmetric equilibrium in which $(a_i,A,\theta)$ is jointly normally distributed. 
A joint distribution of $(a_i,A,\theta)$ is said to be an equilibrium if it satisfies the following conditions. 
\begin{enumerate}
	\item $(a_i,A,\theta)$ is jointly normaly distributed, where $\Et[\theta]=\tilde\theta$ and $\vart[\theta]=1/\tau$. 
	\item $a_i$ is an optimal direct signal. 
	\item $\Et[a_i]=\Et[A]$, $\vart[A]=\covt[a_i,A]$, and $\covt[A,\theta]=\covt[a_i,\theta]$.
\end{enumerate}
The last condition must hold because $A=\int a_idi$ and an equilibrium is symmetric, i.e., for all $i\neq j$, 
$\Et[a_i]=\Et[a_j]$, $\covt[a_i,A]=\covt[a_j,A]$, and $\covt[a_i,\theta]=\covt[a_j,\theta]$. 


We characterize an equilibrium using the ratio of the variance of an action $a_i$ to that of a target $\alpha A+\beta\theta$, which is denoted by  
\begin{equation}
\gamma\equiv \vart[a_i]/\vart[\alpha A+\beta\theta]=\vart[\Et[\alpha A+\beta\theta|a_i]]/\vart[\alpha A+\beta\theta]\leq 1
\notag \label{def of t}	
\end{equation}
and referred to as the information fraction \citep{bowsherswain2012}. 
It is known that the information fraction equals the square root of the correlation coefficient between $a_i$ and $\alpha A+\beta \theta$. 
Thus, it measures the amount of information about $\alpha A+\beta\theta$ contained in $a_i$. In fact, the mutual information of $a_i$ and $\alpha A+\beta\theta$ is represented as 
\begin{equation}
-\frac{\log(1-\gamma)}{2}\notag \label{cost formula}	
\end{equation}
when $a_i$ and $\alpha A+\beta\theta$ are jointly normally distributed. 
Note that $\gamma>0$ if and only if $\vart[a_i]>0$ and that $\gamma<1$ because the cost of information is infinite when $\gamma=1$.  

The next proposition shows that $\gamma$ satisfies either 
\begin{equation}
\tau=f(\gamma)\equiv 
\frac{2 \beta ^2}{\lambda}\cdot\frac{1-\gamma}{(1-\alpha  \gamma)^2}\label{main result eq 2}
\end{equation}
or $\gamma=0$ with $\tau\geq f(0)$ in every equilibrium. 

\begin{proposition}\label{proposition 1}
There exists an equilibrium with $\gamma>0$ if and only if there exists $\gamma>0$ satisfying \eqref{main result eq 2}. 
There exists an equilibrium with $\gamma=0$ if and only if $\tau\geq f(0)$. 
The number of equilibria is determined by $\alpha$, $\beta$, $\lambda$, and $\tau$.
\begin{enumerate}[\em(i)]

\item Suppose that $\alpha\leq 1/2$. Then, an equilibrium is unique: $\gamma>0$ if $\tau< f(0)$ and $\gamma=0$ if $\tau\geq f(0)$. 
\item Suppose that $\alpha> 1/2$. 
\begin{enumerate}[\em (a)]
\item If $\tau< f(0)$ or $\tau>f(({2 \alpha -1})/{\alpha })$, 
an equilibrium is unique: $\gamma>0$ in the former case and $\gamma=0$ in the latter case. 
\item If $\tau= f(0)$ or $\tau=f(({2 \alpha -1})/{\alpha })$, two equilibria exist: $\gamma=0$ in one of them. 
\item If $f(0)<\tau<f(({2 \alpha -1})/{\alpha })$, three equilibria exist: $\gamma=0$ in one of them. 
\end{enumerate}
\end{enumerate}
In every equilibrium, $\Et[a_i]=\Et[A]=\beta \tilde\theta/(1-\alpha)$. 
In an equilibrium with $\tau=f(\gamma)$, 
\begin{gather}
	A-\Et[A]=\beta \gamma(\theta-\tilde\theta)/(1-\alpha \gamma),\label{main equi 0}\\
	\vart[a_i]=\frac{\lambda \gamma}{2(1-\gamma)},
	\label{main equi 1}\\
	 \covt[a_i,A]=\vart[A]=\gamma\vart[a_i]=\frac{\lambda \gamma^2}{2(1-\gamma)}, \label{main equi 2}\\
	\covt[a_i,\theta]=\covt[A,\theta]=(\vart[a_i]-\alpha\covt[a_i,A])/\beta=\frac{\lambda  \gamma (1-\alpha  \gamma)}{2 \beta  (1-\gamma)}. \label{main equi 3}
\end{gather}
\end{proposition}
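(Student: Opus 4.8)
The plan is to reduce each agent's decision in the second-period subgame to the one-dimensional choice of the information fraction $\gamma$, then impose the symmetric-equilibrium consistency conditions of Section~\ref{section: The second-period subgame} to obtain the fixed-point equation \eqref{main result eq 2}, and finally read off the number of equilibria from the shape of $f$ on $[0,1)$. First, fix a candidate symmetric Gaussian profile and write $X=\alpha A+\beta\theta$. Since an optimal direct signal satisfies $a_i=\Et[X\mid a_i]$, the error $X-a_i$ is orthogonal to $a_i$, so $\Et[(a_i-X)^2]=\vart[X]-\vart[a_i]=(1-\gamma_i)\vart[X]$ with $\gamma_i\equiv\vart[a_i]/\vart[X]\in[0,1)$, and $C(a_i)=-\tfrac{\lambda}{2}\log(1-\gamma_i)$ by the formula quoted just before the proposition. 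Hence \eqref{optimal signal} becomes $\max_{\gamma_i\in[0,1)}\{-(1-\gamma_i)\vart[X]+\tfrac{\lambda}{2}\log(1-\gamma_i)\}$, which is strictly concave, with unique maximizer $\gamma_i=\max\{0,\,1-\lambda/(2\vart[X])\}$; thus $\gamma_i>0$ iff $\vart[X]>\lambda/2$, in which case $\vart[X]=\lambda/(2(1-\gamma_i))$.

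Next I would close the fixed point. In a symmetric equilibrium all $\gamma_i$ equal a common $\gamma$. Each $a_i$ is a linear function of the agent's Gaussian signal about $X$, and $X$ is itself linear in $\theta$ once the aggregate is pinned down, so $a_i-\Et[a_i]=b(\theta-\tilde\theta)+\nu_i$ with idiosyncratic $\nu_i$; integrating over the continuum gives $A-\Et[A]=b(\theta-\tilde\theta)$ and $X-\Et[X]=(\alpha b+\beta)(\theta-\tilde\theta)$. The projection identity $\covt[a_i,X]=\vart[a_i]$ reads $b(\alpha b+\beta)/\tau=\gamma(\alpha b+\beta)^2/\tau$, yielding $b=\beta\gamma/(1-\alpha\gamma)$, hence $\alpha b+\beta=\beta/(1-\alpha\gamma)$ and $\vart[X]=\beta^2/(\tau(1-\alpha\gamma)^2)$; and $\Et[a_i]=\Et[X]=\alpha\Et[A]+\beta\tilde\theta$ together with $\Et[a_i]=\Et[A]$ gives $\Et[a_i]=\Et[A]=\beta\tilde\theta/(1-\alpha)$. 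Combining with the first step: an equilibrium with $\gamma>0$ forces $\beta^2/(\tau(1-\alpha\gamma)^2)=\lambda/(2(1-\gamma))$, i.e.\ $\tau=f(\gamma)$; conversely, any $\gamma>0$ with $\tau=f(\gamma)$ gives $\vart[X]=\lambda/(2(1-\gamma))>\lambda/2$, so $\gamma$ is exactly the agents' best response, the implied noise variance $\gamma(1-\gamma)\vart[X]$ is positive, and conditions 1--3 hold by construction, so such an equilibrium exists. For $\gamma=0$, $a_i$ is constant, $A=\Et[A]$, $\vart[X]=\beta^2/\tau$, and $\gamma=0$ is a best response iff $\beta^2/\tau\le\lambda/2$, i.e.\ $\tau\ge f(0)$.

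It then remains to count. Differentiation shows $f'(\gamma)$ has the sign of $(2\alpha-1)-\alpha\gamma$ on $[0,1)$, since $\alpha<1$ and $\gamma<1$ keep $(1-\alpha\gamma)^3>0$. If $\alpha\le 1/2$, this expression is nonpositive, so $f$ decreases strictly from $f(0)$ to $0$: $\tau=f(\gamma)$ has one root $\gamma>0$ when $\tau<f(0)$ and none when $\tau\ge f(0)$, where $\gamma=0$ is the unique equilibrium --- this is case (i). If $\alpha>1/2$, $f$ increases strictly on $[0,(2\alpha-1)/\alpha)$ and decreases strictly on $((2\alpha-1)/\alpha,1)$, rising from $f(0)$ to the interior maximum $f((2\alpha-1)/\alpha)>f(0)$ and then falling to $0$; counting the roots $\gamma>0$ of $\tau=f(\gamma)$ and adding the $\gamma=0$ equilibrium precisely when $\tau\ge f(0)$ produces the three subcases of (ii). The moment formulas then follow by substitution: with $\tau=f(\gamma)$, $\vart[X]=\lambda/(2(1-\gamma))$, so $\vart[a_i]=\gamma\vart[X]$, $\covt[a_i,A]=\vart[A]=b^2/\tau=\gamma^2\vart[X]=\gamma\vart[a_i]$, $A-\Et[A]=b(\theta-\tilde\theta)=\beta\gamma(\theta-\tilde\theta)/(1-\alpha\gamma)$, and expanding $\vart[a_i]=\covt[a_i,X]=\alpha\covt[a_i,A]+\beta\covt[a_i,\theta]$ gives $\covt[a_i,\theta]=\covt[A,\theta]=(\vart[a_i]-\alpha\covt[a_i,A])/\beta$; these are \eqref{main equi 0}--\eqref{main equi 3}.

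The delicate step is the second one: justifying that in every symmetric Gaussian equilibrium the aggregate residual is exactly $b(\theta-\tilde\theta)$ (the continuum/averaging argument, together with the reduction of ``attention to the target'' to a linear signal in $\theta$), pinning down $b=\beta\gamma/(1-\alpha\gamma)$, and --- for the converse direction --- verifying that the candidate identified by the first-order condition actually satisfies $\vart[X]>\lambda/2$, so that it is a genuine optimum rather than merely a stationary point. The counting in the third step is then routine bookkeeping about the graph of $f$.
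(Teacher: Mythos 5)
Your proposal follows essentially the same route as the paper: reduce the individual problem to a one-parameter choice of the information fraction, close the symmetric fixed point to obtain $\tau=f(\gamma)$, and count equilibria from the shape of $f$ (your sign computation for $f'$ and the resulting case analysis match the paper's exactly, as do the moment formulas). Two places where your argument is incomplete are precisely where the paper does its real work. First, your reduction of \eqref{optimal signal} to $\max_{\gamma_i}\{-(1-\gamma_i)\vart[X]+\tfrac{\lambda}{2}\log(1-\gamma_i)\}$ is only an optimization within the jointly Gaussian family: the formula $C(a_i)=-\tfrac{\lambda}{2}\log(1-\gamma_i)$ and the identity $\Et[(a_i-X)^2]=(1-\gamma_i)\vart[X]$ presuppose Gaussianity, whereas \eqref{optimal signal} ranges over arbitrary distributions. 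The paper closes this by invoking Lemma \ref{rate distortion lemma} (the Gaussian rate-distortion theorem), which asserts that the unconstrained optimum is in fact Gaussian with exactly the moments you compute; you should cite it rather than rely on concavity in $\gamma_i$ alone.

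Second, the step you yourself flag as delicate --- that the aggregate residual is exactly $b(\theta-\tilde\theta)$ with $\nu_i$ idiosyncratic --- is not automatic under correlated information acquisition, where agents could in principle correlate their signal noises and generate an aggregate non-fundamental component; positing the decomposition $a_i-\Et[a_i]=b(\theta-\tilde\theta)+\nu_i$ assumes away exactly what needs to be shown. The paper's device is to observe that an agent attends only to the target, so $\Et[a_i\mid \alpha A+\beta\theta]=\Et[a_i\mid A,\theta]$; computing both sides with the symmetric-equilibrium moment conditions ($\covt[a_i,A]=\vart[A]$, $\covt[a_i,\theta]=\covt[A,\theta]$) gives $\Et[a_i\mid A,\theta]=\Et[a_i]+(A-\Et[A])$ and $\Et[a_i\mid \alpha A+\beta\theta]=\Et[a_i]+\gamma(\alpha A+\beta\theta-\Et[\alpha A+\beta\theta])$, whence $A-\Et[A]=\gamma(\alpha A+\beta\theta-\Et[\alpha A+\beta\theta])$ as an exact identity between random variables. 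Solving this for $A$ delivers \eqref{main equi 0} and rules out any aggregate noise without appealing to a law of large numbers. With these two repairs (citing Lemma \ref{rate distortion lemma} and replacing the posited noise decomposition by the conditional-expectation identity), your argument coincides with the paper's proof.
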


 This proposition says that there are two types of equilibria depending upon $\tau$ and $\gamma$. 
In the first type with $\tau=f(\gamma)$, agents acquire information. 
In the second type with $\gamma=0$ and $\tau\geq f(0)$, agents do not acquire information. 
For later use, we calculate the set of possible information fractions in the first type (a straightforward proof is omitted), which is illustrated in Figure \ref{fig3}. 
\begin{lemma}\label{def of gamma}
For each $\tau$, the set of information fractions satisfying $\tau=f(\gamma)$ is
\begin{equation}
\Gamma(\tau)\equiv \{\gamma\in [0,1]\mid \tau=f(\gamma)\}=
\begin{cases}
\{\overline\phi(\tau)\} & \text{ if } \tau< f(0) \text{ or if }\alpha\leq 1/2 \text{ and }\tau= f(0),\\
\{\overline\phi(\tau),\underline\phi(\tau)\} & \text{ if } \alpha>1/2 \text{ and }f(0)\leq\tau\leq f((2\alpha-1)/\alpha),\\
\emptyset &\text{ otherwise,}
\end{cases}
\notag
\end{equation}
where 
\[
\overline\phi(\tau)\equiv\frac{\alpha  \lambda  \tau -\beta ^2+\beta  \sqrt{\beta ^2-2 (1-\alpha ) \alpha  \lambda  \tau }}{\alpha ^2 \lambda  \tau },\ 
\underline\phi(\tau)\equiv\frac{\alpha  \lambda  \tau -\beta ^2-\beta  \sqrt{\beta ^2-2 (1-\alpha ) \alpha  \lambda  \tau }}{\alpha ^2 \lambda  \tau }.
\]
In addition, $\underline\phi(\tau)\leq(2\alpha-1)/\alpha\leq \overline\phi(\tau)$ if  $\alpha>1/2$ and 	$f(0)\leq\tau\leq f((2\alpha-1)/\alpha)$.
\end{lemma}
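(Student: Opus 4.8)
The strategy is to turn $\tau=f(\gamma)$ into a quadratic whose roots are exactly $\overline\phi(\tau)$ and $\underline\phi(\tau)$, and then use the shape of $f$ on $[0,1]$ to decide which roots are admissible. Since $\alpha<1$ and $0\le\gamma\le 1$ give $1-\alpha\gamma>0$, the equation $\tau=f(\gamma)$ is equivalent to
\[
g(\gamma):=\lambda(1-\alpha\gamma)^2\bigl(\tau-f(\gamma)\bigr)=\alpha^2\lambda\tau\,\gamma^2+2(\beta^2-\alpha\lambda\tau)\,\gamma+(\lambda\tau-2\beta^2)=0.
\]
For $\alpha\neq0$ this is a quadratic opening upward (the coefficient $\alpha^2\lambda\tau$ is positive because $\tau\ge\tau_\theta>0$); a short computation shows its discriminant equals $4\beta^2\bigl(\beta^2-2(1-\alpha)\alpha\lambda\tau\bigr)$, so the roots are precisely $\overline\phi(\tau)\ge\underline\phi(\tau)$ as defined in the statement, and they are real iff $\alpha\le0$ or $\tau\le f\bigl((2\alpha-1)/\alpha\bigr)$, using the identity $f\bigl((2\alpha-1)/\alpha\bigr)=\beta^2/\bigl(2\alpha(1-\alpha)\lambda\bigr)$. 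The degenerate case $\alpha=0$, in which $f$ is affine, is handled in one line.

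Next I would record the shape of $f$ on $[0,1]$. Differentiation gives $f'(\gamma)=\tfrac{2\beta^2}{\lambda}\cdot\bigl(2\alpha-1-\alpha\gamma\bigr)/(1-\alpha\gamma)^3$, and since $(1-\alpha\gamma)^3>0$ the sign of $f'$ is that of $2\alpha-1-\alpha\gamma$. Hence: if $\alpha\le1/2$ then $f$ is strictly decreasing on $[0,1]$, from $f(0)=2\beta^2/\lambda$ down to $f(1)=0$; if $\alpha>1/2$ then $f$ strictly increases on $[0,(2\alpha-1)/\alpha]$ and strictly decreases on $[(2\alpha-1)/\alpha,1]$, with maximum $f\bigl((2\alpha-1)/\alpha\bigr)$ and $f(0)=2\beta^2/\lambda<f\bigl((2\alpha-1)/\alpha\bigr)$ (the last inequality reduces to $(2\alpha-1)^2>0$). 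Together with continuity and $f(1)=0<\tau$, this fixes the number of solutions of $\tau=f(\gamma)$ in $[0,1]$ for each $\tau$: when $\alpha\le1/2$, one solution if $\tau\le f(0)$ and none if $\tau>f(0)$; when $\alpha>1/2$, none if $\tau>f\bigl((2\alpha-1)/\alpha\bigr)$, exactly one (lying on the decreasing branch, hence $\ge(2\alpha-1)/\alpha$) if $\tau<f(0)$, and exactly two (one per branch) if $f(0)\le\tau\le f\bigl((2\alpha-1)/\alpha\bigr)$, the two coinciding when $\tau=f\bigl((2\alpha-1)/\alpha\bigr)$.

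Finally I would match the algebraic roots to these solutions. Because $g$ has the sign of $\tau-f(\gamma)$ (up to the positive factor $\lambda(1-\alpha\gamma)^2$) and opens upward, a point $\gamma_0$ lies in $[\underline\phi(\tau),\overline\phi(\tau)]$ iff $g(\gamma_0)\le0$ iff $\tau\le f(\gamma_0)$. Evaluating at $\gamma_0=0,1,(2\alpha-1)/\alpha$ — where $g(1)=\lambda\tau>0$, $g(0)$ has the sign of $\tau-f(0)$, and $g\bigl((2\alpha-1)/\alpha\bigr)$ has the sign of $\tau-f\bigl((2\alpha-1)/\alpha\bigr)$ — then locates $0$, $1$ and $(2\alpha-1)/\alpha$ relative to the roots in every regime. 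For $\tau<f(0)$ this gives $\underline\phi(\tau)<0<\overline\phi(\tau)<1$, so only $\overline\phi(\tau)$ is admissible; for $\alpha>1/2$ and $f(0)\le\tau\le f\bigl((2\alpha-1)/\alpha\bigr)$ it gives $0\le\underline\phi(\tau)\le(2\alpha-1)/\alpha\le\overline\phi(\tau)\le1$, so both roots are admissible and the final displayed inequality of the lemma drops out; for $\alpha\le1/2$ strict monotonicity already forces uniqueness, and the sign of $g(0)$ identifies the unique solution as $\overline\phi(\tau)$, with $\overline\phi(f(0))=0$. Assembling the cases yields the stated formula for $\Gamma(\tau)$. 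The only real work is the bookkeeping in this last step: keeping track, across the boundary values $\tau=f(0)$ and $\tau=f\bigl((2\alpha-1)/\alpha\bigr)$ where roots coincide or equal $0$, of exactly which of $\overline\phi(\tau),\underline\phi(\tau)$ lands in $[0,1]$ — but each sub-case is settled by a single sign evaluation of the upward-opening quadratic $g$.
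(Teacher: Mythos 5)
Your proposal is correct and follows exactly the argument the paper treats as ``straightforward'' and omits: reduce $\tau=f(\gamma)$ to the upward-opening quadratic $g$, identify its roots with $\overline\phi(\tau)$ and $\underline\phi(\tau)$ via the discriminant, and then use the monotonicity of $f$ on $[0,1]$ (the same shape analysis the paper records in the proof of Proposition 1) together with sign evaluations of $g$ at $0$, $1$, and $(2\alpha-1)/\alpha$ to decide which roots land in $[0,1]$. The only blemish is the evaluation $g(1)=\lambda\tau$, which should be $g(1)=\lambda\tau(1-\alpha)^2$; since only its (positive) sign is used, the argument is unaffected.
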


\begin{figure} 
  \centering
  \subfloat[$\alpha\leq 1/2$. $f(\gamma)$ is decreasing.]{\label{fig:case (i)}
  \includegraphics[width=4cm, bb=0 0 675 440]{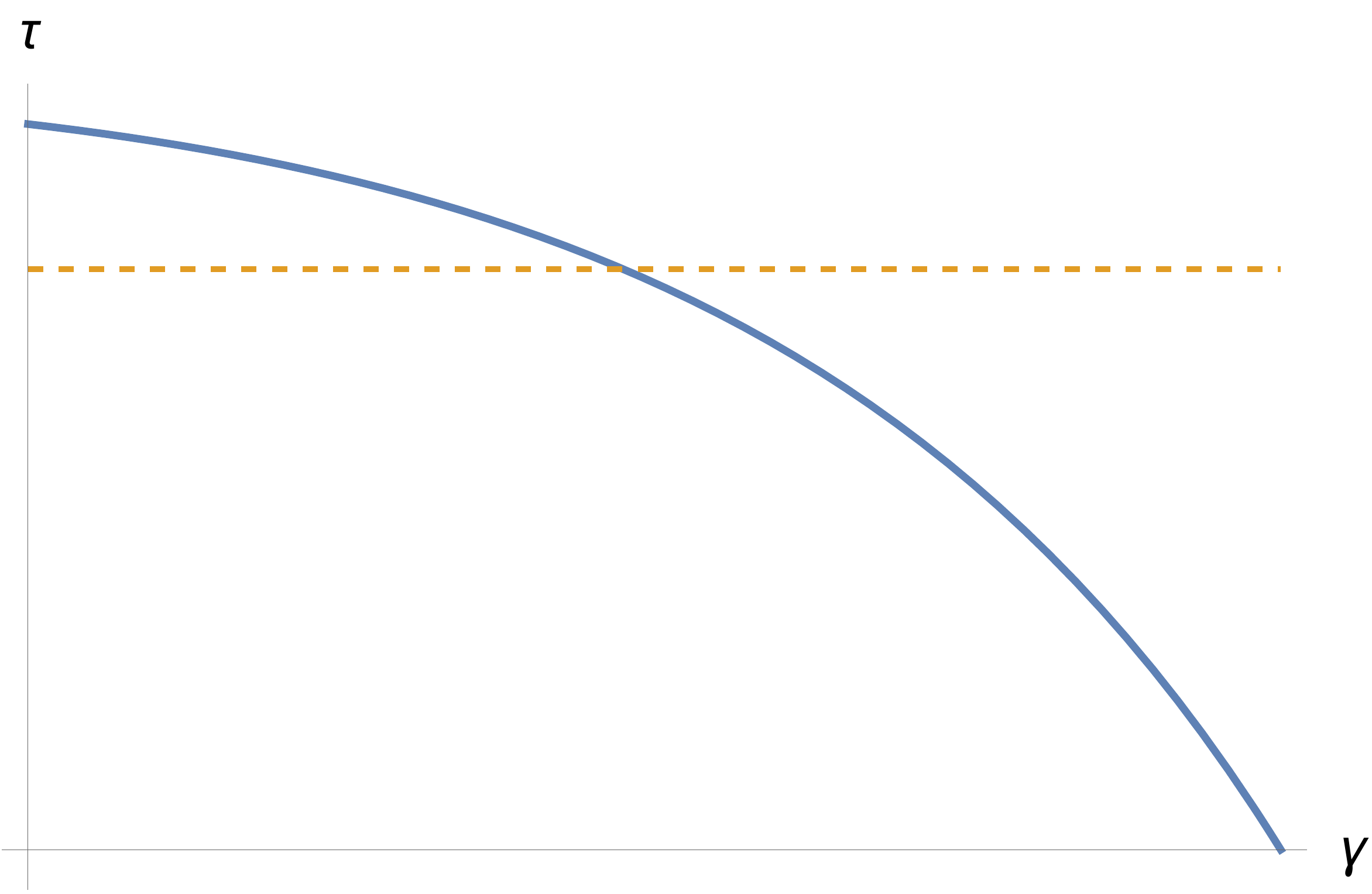}}\qquad \ 
  \subfloat[$\alpha > 1/2$. $f(\gamma)$ has a maximum at $\gamma=(2\alpha-1)/\alpha$.]{\label{fig:case (ii)}\includegraphics[width=4cm, bb=0 0 675 440]{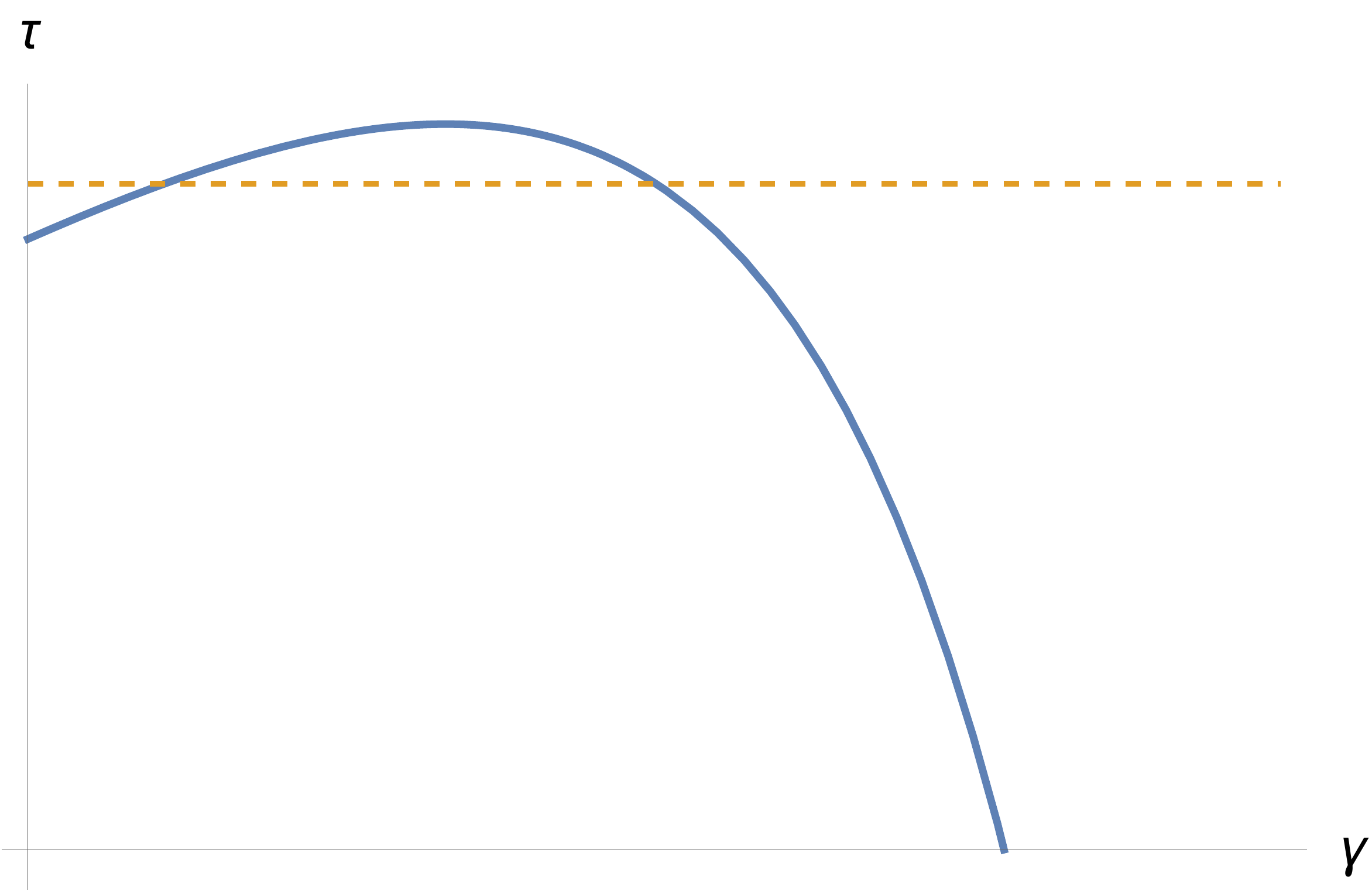}}
  \caption{The graphs of $\tau=f(\gamma)$ on the $\gamma\tau$-plane.} 
  \label{fig3}
\end{figure}


 The number of equilibria also depends upon the slope of the best response $\alpha$. 
If the slope is not too large (i.e., $\alpha\leq 1/2$), 
an equilibrium is unique. 
However, if the slope is large enough (i.e., $\alpha> 1/2$), multiple equilibria arise,   
which results from strong strategic complementarity. 
In a game with strategic complementarity, not only an action but also information is a strategic complement  \citep{hellwigveldkamp2009}. 
Thus, it can be an equilibrium strategy to acquire a small amount of information as well as a large amount of information when $f(0)<\tau<f(({2 \alpha -1})/{\alpha })$. 
On the other hand, an equilibrium is unique in the following cases. 
When $\tau<f(0)$, each agent has a strong incentive to acquire a large amount of information under considerable uncertainty, irrespective of how much information the opponents possess; when $\tau<f(0)$, each agent has no incentive to acquire information under little uncertainty.

In an equilibrium with $\gamma>0$, $A$ and $\alpha A+\beta\theta$ are completely determined by and affine in $\theta$, as \eqref{main equi 0} implies.  
Thus, actions are conditionally independent given $\theta$ because, otherwise, $A$ must also depend on the correlated component of actions. 
In addition, the mutual information of $a_i$ and $\alpha A+\beta\theta$ equals that of $a_i$ and $\theta$. 
This means that Proposition \ref{proposition 1} remains valid under the alternative assumption of independent information acqusition \citep{yang2015} that 
\begin{itemize}
	\item agents acquire private information about a state $\theta$ (rather than a target $\alpha A+\beta\theta$) at a cost proportional to the mutual information, i.e., $C(s_i)\equiv \lambda \cdot \It(\theta;s_i)$,
\item private signals are conditionally independent given $\theta$. 
\end{itemize}
\citet{rigos2020} calculates the set of equilibria in this case, which is shown to coincide with that in Proposition \ref{proposition 1}. 

Because \citet{denti2020} and \citet{rigos2020} do not show the equivalence of the two equilibrium concepts and do not use the information fraction to represent an equilbrium, we provide a proof for Proposition \ref{proposition 1}. 
It is an immediate consequence of the following well-known result.\footnote{This result is a restatement of Theorem 10.3.2 of \citet{coverthomas2006} and its proof.}

\begin{lemma}\label{rate distortion lemma}
Let $x$ be a normally distributed random variable with mean $\bar x$ and variance $\sigma^2$. 
Consider an optimization problem  
\begin{equation}
\max_y-\E[(x-y)^2]-\lambda\cdot \I[x;y], \label{CT lemma}
\end{equation} 
where $y$ is a random variable with an arbitrary distribution.
An optimal solution exists and satisfies the following.
\begin{itemize}
\item If $\lambda/2<\sigma^2$, then $y$ is a normally distributed random variable satisfying $\E[x|y]=y$, $\E[(x-y)^2]=\var[x]-\var[y]=\lambda/2$, and $\I(x;y)=1/2 \cdot\log (2\sigma^2/\lambda)$. 
\item If $\lambda/2\geq\sigma^2$, then $y=\bar x$ and $\I(x;y)=0$.
\end{itemize}
\end{lemma}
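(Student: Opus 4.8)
The plan is to treat \eqref{CT lemma} as a Gaussian rate--distortion problem and reduce it to a one-dimensional optimization over the distortion level $D\equiv\E[(x-y)^2]$. First I would bound $\I(x;y)$ from below by the rate--distortion function of a Gaussian source; then maximize the resulting explicit function of $D$; and finally exhibit a jointly Gaussian pair $(x,y)$ attaining the bound, which simultaneously establishes that an optimum exists and pins down its form.

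For the lower bound, fix any admissible $y$ and set $D=\E[(x-y)^2]$. If $\I(x;y)=+\infty$ there is nothing to prove, so assume it is finite; then $h(x\mid y)$ is well defined and $\I(x;y)=h(x)-h(x\mid y)=h(x)-h(x-y\mid y)\geq h(x)-h(x-y)$, since conditioning does not increase differential entropy. Using $h(x)=\tfrac12\log(2\pi e\,\sigma^2)$ and the maximum-entropy bound $h(x-y)\leq\tfrac12\log(2\pi e\,D)$, I obtain $\I(x;y)\geq\max\{0,\tfrac12\log(\sigma^2/D)\}$, hence
\[
-\E[(x-y)^2]-\lambda\,\I(x;y)\ \leq\ \psi(D)\ \equiv\ -D-\lambda\max\bigl\{0,\tfrac12\log(\sigma^2/D)\bigr\}.
\]
On $(0,\sigma^2]$ we have $\psi(D)=-D+\tfrac{\lambda}{2}\log(D/\sigma^2)$, which is strictly concave with $\psi'(D)=-1+\lambda/(2D)$, so it peaks at $D=\lambda/2$ when $\lambda/2<\sigma^2$ and is increasing throughout when $\lambda/2\geq\sigma^2$; on $[\sigma^2,\infty)$, $\psi(D)=-D$ is strictly decreasing. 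Therefore $\sup_{D>0}\psi(D)$ equals $-\lambda/2-\tfrac{\lambda}{2}\log(2\sigma^2/\lambda)$ if $\lambda/2<\sigma^2$ and equals $-\sigma^2$ if $\lambda/2\geq\sigma^2$.

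It remains to attain these values. If $\lambda/2<\sigma^2$, I would take independent $y\sim N(\bar x,\sigma^2-\lambda/2)$ and $z\sim N(0,\lambda/2)$ and put $x=y+z$; then a direct verification gives $x\sim N(\bar x,\sigma^2)$, $\E[x\mid y]=y$, $\E[(x-y)^2]=\var[x]-\var[y]=\lambda/2$, and $\I(x;y)=h(x)-h(z)=\tfrac12\log(2\sigma^2/\lambda)$, so the objective equals $\sup_{D>0}\psi(D)$. If $\lambda/2\geq\sigma^2$, the constant $y=\bar x$ yields $\I(x;y)=0$ and $\E[(x-y)^2]=\sigma^2$, again meeting the bound. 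In either case the upper bound of the previous paragraph is achieved, so an optimal solution exists and has the stated form.

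The step I expect to be delicate is the converse $\I(x;y)\geq\max\{0,\tfrac12\log(\sigma^2/D)\}$ for an \emph{arbitrary} joint law of $(x,y)$ --- possibly non-Gaussian, and degenerate when $y$ is a.s.\ constant --- together with the measure-theoretic care the entropy manipulations require (in particular, that $\I(x;y)<\infty$ forces the conditional law of $x$ given $y$ to admit a density, so that $h(x\mid y)$ is meaningful). This is exactly the Gaussian rate--distortion converse, and I would justify it as in the proof of Theorem~10.3.2 of \citet{coverthomas2006}, handling the case $\I(x;y)=+\infty$ separately as above.
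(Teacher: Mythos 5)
Your proof is correct and is essentially the paper's own argument: the paper gives no independent proof but notes in a footnote that the lemma is a restatement of Theorem~10.3.2 of Cover and Thomas (2006) and its proof, which is precisely the rate--distortion converse plus the Gaussian test-channel achievability that you reproduce, including the reduction to maximizing $\psi(D)=-D-\lambda\max\{0,\tfrac12\log(\sigma^2/D)\}$. The only point worth adding is that to conclude that \emph{every} optimizer (not merely one) has the stated form --- which is how the lemma is used in characterizing equilibria --- you would also invoke the equality conditions in your two entropy inequalities (forcing $x-y\sim N(0,\lambda/2)$ independent of $y$, hence $y$ Gaussian by Cram\'er's decomposition and $\E[x\mid y]=y$), but this is already contained in the cited Cover--Thomas argument.
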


Using this lemma, we can identify an optimal direct signal because \eqref{optimal signal} is \eqref{CT lemma} with $x=\alpha A+\beta\theta$ and $y=a_i$ when $A$ and $\theta$ are jointly normally distributed. 
For example, consider a special case of no strategic interaction, i.e., $\alpha=0$, 
where a target is $\beta\theta$. 
By Lemma \ref{rate distortion lemma}, 
if $\lambda/2<\vart[\beta\theta]=\beta^2/\tau$, then 
$\vart[a_i]=\vart[\beta\theta]-\lambda/2=\vart[a_i]/\gamma-\lambda/2$. 
By solving this, we obtain $\vart[a_i]={\lambda \gamma}/({2(1-\gamma)})$ and 
$\vart[\beta\theta]=\beta^2/\tau=\vart[a_i]/\gamma={\lambda}/({2(1-\gamma)})$. 
This is rewritten as 
$\tau={2 \beta ^2}({1-\gamma})/\lambda$, 
which is a special case of \eqref{main result eq 2} with $\alpha=0$. 
\begin{remark}
We assume the mutual information cost function in Proposition \ref{proposition 1}. 
The proposition remains valid under the assumption of other cost functions such as the Fisher information cost function \citep{hebertwoodford2021}. 
See Appendix \ref{Flexible information acquisition with the Fisher information costs}.
\end{remark}

\section{Total information}\label{The crowding-out effect of public information}

Each agent receives two types of information about $\theta$, public information $\tilde\theta$ in period 1 and private information $a_i$ in period 2. 
When the policymaker provides more precise public information, agents acquire less precise private information in a unique equilibrium with information acquisition. 
This effect of public information on private information is referred to as the crowding-out effect. 
In the presence of the crowding-out effect, does the total amount of information about $\theta$ increase with public information? 
We address this question in terms of mutual information. 

Assume that $\tau_\theta$ is sufficiently small, i.e.,  $\tau_\theta<f(0)$. 
Imagine that the policymaker chooses $\tau<f(0)$ in period 1 and agents follow the unique equilibrium in period 2, where the information fraction is $\gamma=\overline\phi(\tau)$ by Lemma \ref{def of gamma}. 
We measure the total amount of information about $\theta$ contained in $(\tilde\theta,a_i)$ by means of $\I(\tilde\theta,a_i; \theta)$, i.e., the mutual information of $(\tilde\theta,a_i)$ and $\theta$. 
By the chain rule of information, it holds that 
\begin{equation}
\I(\tilde\theta,a_i; \theta)=\I(\tilde\theta;\theta)+\E[\It(a_i;\theta)], \notag \label{total amount of information}
\end{equation}
where $\I(\tilde\theta;\theta)$ is the mutual information of $\tilde\theta$ and $\theta$, and 
$\It(a_i;\theta)$ is the mutual information of $a_i$ and $\theta$ under the conditional probability distribution given $\tilde\theta$. 
That is, the total amount of information is the sum of those of public information and private information, each of which is calculated as 
\begin{equation}
\I(\tilde\theta;\theta)=-\frac{\log \tau_\theta/\tau}{2},\ 
\E[\It(a_i;\theta)]=\It(a_i;\theta)=-\frac{\log (1-\overline\phi(\tau))}{2}.
\label{amount of each information}	
\end{equation}
Clearly, the public-information component $\I(\tilde\theta;\theta)$ is increasing in $\tau$. 
In contrast, the private-information component $\It(a_i;\theta)$ is decreasing in $\tau$ because the information fraction $\overline\phi(\tau)$ is deceasing in $\tau$. 

\begin{lemma}\label{lemma: crowding-out}
If $\tau<f(\max\{0,(2\alpha-1)/\alpha\})$, then $\overline\phi'(\tau)<0$.
\end{lemma}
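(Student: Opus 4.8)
The plan is to work directly from the explicit formula for $\overline\phi(\tau)$ given in Lemma \ref{def of gamma}, rather than differentiating the implicit relation $\tau = f(\gamma)$. Recall
\[
\overline\phi(\tau)=\frac{\alpha\lambda\tau-\beta^2+\beta\sqrt{\beta^2-2(1-\alpha)\alpha\lambda\tau}}{\alpha^2\lambda\tau}.
\]
First I would treat the benchmark case $\alpha=0$ separately (and, more generally, observe that for $\alpha\le 0$ one can argue directly): there $\overline\phi(\tau)=1-\lambda\tau/(2\beta^2)$ after simplification, which is visibly strictly decreasing for all $\tau>0$, and the hypothesis reduces to $\tau<f(0)=2\beta^2/\lambda$, i.e. $\overline\phi(\tau)>0$. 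For $\alpha\neq 0$ I would substitute $u=\sqrt{\beta^2-2(1-\alpha)\alpha\lambda\tau}$, so that $\alpha\lambda\tau=(\beta^2-u^2)/(2(1-\alpha))$, and rewrite $\overline\phi$ as a rational function of $u$; the numerator and denominator both factor nicely, and one finds an expression of the form $\overline\phi = \dfrac{2(\beta+u)}{\,(\beta+u)+ (\text{something})\,}$ or similar, from which monotonicity in $u$ is transparent. Since $u$ is a strictly decreasing function of $\tau$ on the relevant range, the sign of $\overline\phi'(\tau)$ is the opposite of the sign of $d\overline\phi/du$.

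The key steps, in order: (1) dispose of $\alpha\le 0$ by the direct computation above; (2) for $0<\alpha\le 1/2$, note that $f$ is strictly decreasing on $[0,1]$ (Figure \ref{fig3}(a)), hence a bijection onto its range, so $\overline\phi = f^{-1}$ is strictly decreasing wherever defined, i.e. for all $\tau < f(0)$ — and since $f(0)=f(\max\{0,(2\alpha-1)/\alpha\})$ here, this covers the stated hypothesis; (3) for $\alpha>1/2$, use that $f$ is strictly increasing on $[0,(2\alpha-1)/\alpha]$ and strictly decreasing on $[(2\alpha-1)/\alpha,1]$ with maximum $f((2\alpha-1)/\alpha)$ (Figure \ref{fig3}(b)), and that by the last sentence of Lemma \ref{def of gamma}, $\overline\phi(\tau)\ge (2\alpha-1)/\alpha$ on the interval $f(0)\le\tau\le f((2\alpha-1)/\alpha)$, so $\overline\phi$ is the branch of $f^{-1}$ taking values in the decreasing part of $f$; hence $\overline\phi'(\tau)<0$ there, and for $\tau<f(0)$ one has $\overline\phi(\tau)<(2\alpha-1)/\alpha$ lying on the increasing part — wait, this needs care.

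The main obstacle is exactly the bookkeeping in case (3): on the range $\tau<f(0)$ the unique solution $\gamma=\overline\phi(\tau)$ satisfies $\gamma<(2\alpha-1)/\alpha$ and lies where $f$ is \emph{increasing}, which would give $\overline\phi'>0$ — contradicting the claim — unless in fact $\overline\phi(\tau)$ for $\tau<f(0)$ still lies on the decreasing branch. Resolving this requires checking, via the explicit formula, which of the two roots $\overline\phi,\underline\phi$ survives as $\tau\downarrow 0$ and tracking it continuously; I expect $\overline\phi(\tau)\to 1$ as $\tau\to 0$ (the large-information equilibrium), so $\overline\phi$ always sits on the decreasing branch $\gamma\ge(2\alpha-1)/\alpha$, while $\underline\phi\to 0$ sits on the increasing branch and is simply absent for $\tau<f(0)$. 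Once this is pinned down — either by the $u$-substitution computation or by a continuity/limit argument using $\overline\phi(0^+)=1$ and the monotonicity of each branch of $f$ — the inequality $\overline\phi'(\tau)<0$ follows on all of $\{\tau : \tau<f(\max\{0,(2\alpha-1)/\alpha\})\}$, since that set is precisely $\{\tau<f(0)\}\cup\{f(0)\le\tau<f((2\alpha-1)/\alpha)\}$ when $\alpha>1/2$ and $\{\tau<f(0)\}$ when $\alpha\le 1/2$, and on each piece $\overline\phi$ inverts a strictly decreasing branch of $f$.
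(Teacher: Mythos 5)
Your core argument --- that $\overline\phi$ inverts a strictly decreasing branch of $f$ --- is the same mechanism as the paper's proof, which applies the implicit function theorem to get $\overline\phi'(\tau)=1/f'(\overline\phi(\tau))$ with
\[
f'(\gamma)=\frac{2\beta^2}{\lambda}\cdot\frac{(2-\gamma)\alpha-1}{(1-\alpha\gamma)^3},
\]
and then checks the sign of the numerator of that last fraction at $\gamma=\overline\phi(\tau)$. Two points in your write-up need tightening. First, the branch-identification issue you flag for $\alpha>1/2$ and $\tau<f(0)$ is real but should not be left at ``I expect'': on the increasing branch $\gamma\in[0,(2\alpha-1)/\alpha]$ the function $f$ takes values in $[f(0),f((2\alpha-1)/\alpha)]$, all of which exceed $\tau$, so the unique root of $f(\gamma)=\tau$ necessarily lies on the decreasing branch with $\gamma>(2\alpha-1)/\alpha$; equivalently, $\overline\phi$ is by construction the larger of the two roots, and the last sentence of Lemma \ref{def of gamma} already records $\overline\phi(\tau)\geq(2\alpha-1)/\alpha$, which is exactly what the paper invokes. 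Your limit heuristic $\overline\phi(\tau)\to 1$ as $\tau\to 0$ is correct but unnecessary. Second, ``the inverse of a strictly decreasing function is strictly decreasing'' yields monotonicity of $\overline\phi$ but not by itself the pointwise claim $\overline\phi'(\tau)<0$; for that you need $f'(\overline\phi(\tau))\neq 0$, i.e., the \emph{strict} inequalities $\overline\phi(\tau)>(2\alpha-1)/\alpha$ (when $\alpha>1/2$) and $\overline\phi(\tau)>0$ (when $\alpha\leq 1/2$, to cover the boundary case $\alpha=1/2$ where $f'(0)=0$), both of which hold under the hypothesis $\tau<f(\max\{0,(2\alpha-1)/\alpha\})$. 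Once these two points are settled, your cases (1)--(3) go through and the $u$-substitution you begin with is not needed; the paper's one-line computation of $f'$ is simply the more economical way to package the same argument, and it also handles $\alpha\leq 0$ and $0<\alpha\leq 1/2$ uniformly via $(2-\gamma)\alpha-1\leq-\gamma/2<0$.
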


By \eqref{amount of each information}, the total amount of information is 
\begin{equation}
\I(\tilde\theta,a_i;\theta)={I}_{\overline\phi}(\tau)\equiv
\frac{1}{2}\left(\log\frac{\tau}{1-\overline\phi(\tau)}-\log\tau_\theta\right)
=\frac{1}{2}\left(\log\frac{2 \beta ^2}{\lambda(1-\alpha  \overline\phi(\tau))^2}-\log\tau_\theta\right),
\label{total amount of information 2}
\end{equation}
where the second equality follows from $f(\overline\phi(\tau))=\tau$ and \eqref{main result eq 2}. 
By differentiating \eqref{total amount of information 2} with respect to $\tau$, we find that 
the total amount of information increases with the precision of public information if the game exhibits strategic substitutability and decreases if it exhibits strategic complementarity, as shown in the next proposition. 

\begin{proposition}\label{main proposition 0}
If $\tau<f(\max\{0,(2\alpha-1)/\alpha\})$ and $\gamma=\overline\phi(\tau)$, then 
\begin{equation}
\frac{d\I(\tilde\theta,a_i;\theta)}{d\tau}=\frac{d{{I}_{\overline\phi}}(\tau)}{d\tau}=\frac{\alpha\overline\phi'(\tau)}{1-\alpha\overline\phi(\tau)}. \label{main prop 1 eq}
\end{equation}
Thus, ${I}_{\overline\phi}$ 
is increasing in $\tau$ if $\alpha<0$, decreasing in $\tau$ if $\alpha>0$, and independent of $\tau$ if $\alpha=0$. 
\end{proposition}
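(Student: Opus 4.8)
The plan is to differentiate the closed-form expression for the total amount of information already obtained in \eqref{total amount of information 2} and then to read off the monotonicity from Lemma \ref{lemma: crowding-out}. Writing \eqref{total amount of information 2} in the form
\[
\I(\tilde\theta,a_i;\theta)={I}_{\overline\phi}(\tau)=\frac12\log\frac{2\beta^2}{\lambda}-\log\bigl(1-\alpha\overline\phi(\tau)\bigr)-\frac12\log\tau_\theta,
\]
every term on the right except the middle one is constant in $\tau$, so the whole computation reduces to differentiating $\log(1-\alpha\overline\phi(\tau))$. Before doing that I would record two preliminary facts. First, $\overline\phi$ is continuously differentiable on $\{\tau<f(\max\{0,(2\alpha-1)/\alpha\})\}$: on this range the discriminant $\beta^2-2(1-\alpha)\alpha\lambda\tau$ appearing in the explicit formula of Lemma \ref{def of gamma} is strictly positive, so $\overline\phi$ is given there by a smooth (indeed real-analytic) expression (equivalently, differentiate the identity $f(\overline\phi(\tau))=\tau$ and use that $f'(\overline\phi(\tau))\neq0$, which is implicit in Lemma \ref{lemma: crowding-out}). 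Second, since $0\le\overline\phi(\tau)\le1$ by Lemma \ref{def of gamma} and $\alpha<1$ by assumption, we have $\alpha\overline\phi(\tau)\le\alpha<1$ when $\alpha\ge0$ and $\alpha\overline\phi(\tau)\le0<1$ when $\alpha<0$; in all cases $1-\alpha\overline\phi(\tau)>0$, so the logarithm above (and the equality $\log\bigl((1-\alpha\overline\phi(\tau))^2\bigr)=2\log(1-\alpha\overline\phi(\tau))$) is legitimate.

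The second step is the chain rule. It gives
\[
\frac{d{I}_{\overline\phi}(\tau)}{d\tau}=-\frac{d}{d\tau}\log\bigl(1-\alpha\overline\phi(\tau)\bigr)=\frac{\alpha\overline\phi'(\tau)}{1-\alpha\overline\phi(\tau)},
\]
which, together with $\I(\tilde\theta,a_i;\theta)={I}_{\overline\phi}(\tau)$ from \eqref{total amount of information 2}, is exactly \eqref{main prop 1 eq}.

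The third step is the comparative statics. The denominator $1-\alpha\overline\phi(\tau)$ is strictly positive by the preliminary fact above, and $\overline\phi'(\tau)<0$ on the assumed range by Lemma \ref{lemma: crowding-out}. Hence the sign of $d{I}_{\overline\phi}(\tau)/d\tau$ equals the sign of $-\alpha$: it is negative for $\alpha>0$, so ${I}_{\overline\phi}$ is decreasing in $\tau$; positive for $\alpha<0$, so ${I}_{\overline\phi}$ is increasing in $\tau$. For $\alpha=0$ the term $\log(1-\alpha\overline\phi(\tau))$ vanishes identically, so ${I}_{\overline\phi}(\tau)=\tfrac12\log(2\beta^2/\lambda)-\tfrac12\log\tau_\theta$ is constant, i.e.\ independent of $\tau$.

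There is no genuine obstacle in this argument; the only points needing a line of care are the two preliminaries of the first paragraph — differentiability of $\overline\phi$ and positivity of $1-\alpha\overline\phi(\tau)$ — and both are immediate from Lemma \ref{def of gamma}, Lemma \ref{lemma: crowding-out}, and the standing assumption $\alpha<1$. Everything else is the one-line derivative computation together with a sign inspection.
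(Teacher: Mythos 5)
Your proposal is correct and follows exactly the route the paper takes: it differentiates the closed form \eqref{total amount of information 2} (which already uses $f(\overline\phi(\tau))=\tau$) and then signs the derivative via Lemma \ref{lemma: crowding-out}. The two preliminary checks you add --- differentiability of $\overline\phi$ and positivity of $1-\alpha\overline\phi(\tau)$ --- are left implicit in the paper but are correct and harmless to include.
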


To provide another implication of \eqref{main prop 1 eq}, 
we divide both sides by ${d\I(\tilde\theta;\theta)}/{d\tau}$ and obtain  
\begin{equation}
\frac{d\It(a_i;\theta)}{d\tau}\ \Big/ \ \frac{d\I(\tilde\theta;\theta)}{d\tau}+1=
\frac{2\alpha\tau\overline\phi'(\tau)}{1-\alpha\overline\phi(\tau)}. \notag
\end{equation}
This implies that 
\begin{equation}
\mu\equiv -
\frac{d\It(a_i;\theta)}{d\tau}\, \Big/ \  \frac{d\I(\tilde\theta;\theta)}{d\tau}\gtreqless 1\ \Leftrightarrow\ \alpha\gtreqless 0.\label{MRS property}	
\end{equation}
We refer to $\mu$ as the marginal rate of substitution of public information for private information, which measures the crowding-out effect. 
The marginal rate of substitution is the change in the amount of private information $\It(a_i;\theta)$ resulting from a one-unit increase in the amount of public information $\I(\tilde\theta;\theta)$. 
If $\mu$ is very large, a one-unit increase in public information substantially decreases private information, i.e., the crowding-out effect is significant. 
By \eqref{MRS property}, the crowding-out effect is larger in a game with strategic complementarity than strategic substitutability, which is another implication of Proposition \ref{main proposition 0}.\footnote{\citet{colombofemminis2014} obtain a similar result in the model of rigid information acquisition. See Remark \ref{remark 1}.} 

In the benchmark case of $\alpha=0$, the marginal rate of substitution equals one; that is, public and private signals are perfect substitutes when there is no strategic interaction. 
Thus, when the policymaker increases the amount of public information, 
an agent reduces the amount of private information so that the total amount of information can remain constant.

If $\alpha\neq 0$, however, the marginal rate of substitution is not constant; that is, public and private signals are imperfect substitutes. 
If the game exhibits strategic complementarity (i.e., $\alpha>0$), the crowding-out effect is stronger, and the total amount of information decreases with public information. 
This is because agents put more value on public information to align their actions with the aggregate action,\footnote{Indeed, the weight of $\tilde\theta$ in $a_i$ is $\Et[a_i]/\tilde\theta=\beta/(1-\alpha)$ by Proposition \ref{proposition 1}, which is increasing in $\alpha$.} so that a one unit increase in more valuable public information compensates for a larger decrease in private information. 
In contrast, if the game exhibits strategic substitutability (i.e., $\alpha<0$), the crowding-out effect is weaker, and the total amount of information increases with public information. 
This is because agents put less value on public information so that a one-unit increase in less valuable public information compensates for a smaller decrease in private information. 

Before closing this section, we ask whether the marginal rate of substitution is monotone in $\alpha$, as suggested by \eqref{MRS property}.  
The answer depends upon which parameter to fix when $\alpha$ changes.  
By \eqref{main prop 1 eq}, $\mu$ is represented as a function of $(\alpha,\tau)$: 
\begin{equation}
\mu=\mu_1(\alpha,\tau)\equiv 1-\frac{2\alpha\tau\overline\phi'(\tau)}{1-\alpha\overline\phi(\tau)}. \notag
\end{equation}
We can verify that $\mu_1(\alpha,\tau)$ is not monotone in $\alpha$; that is, the marginal rate of substitution is not necessarily increasing in $\alpha$ for fixed $\tau$. 
However, by plugging $\tau=f(\gamma)$ into $\mu_1(\alpha,\tau)$, 
we can represent $\mu$ as a function of $(\alpha,\gamma)$, which is monotone in $\alpha$: 
\begin{equation}
\mu=\mu_2(\alpha,\gamma)\equiv \frac{1-\alpha  \gamma }{1-\alpha  (2-\gamma)},\ 
\frac{\partial\mu_2(\alpha,\gamma)}{\partial\alpha}=\frac{2 (1-\gamma)}{(1-(2-\gamma)\alpha)^2}>0.
\label{monotone substitutability}	
\end{equation}
That is, the marginal rate of substitution is increasing in $\alpha$ for fixed $\gamma$. 
Consequently, when evaluated at the same information fraction, the crowing-out effect is larger in a game with a stronger degree of strategic complementarity. 
This observation will be used to provide intuition for results in the next section.

\begin{remark}
Proposition \ref{main proposition 0} (together with the subsequent discussion) remains valid under the assumption of other cost functions such as the Fisher information cost function \citep{hebertwoodford2021}. 
See Appendix \ref{Flexible information acquisition with the Fisher information costs}.
\end{remark}

\begin{remark}\label{remark 1}
\citet{colombofemminis2014} study the model of rigid information acquisition and show that the substitutability between public and private information is increasing in the slope of the best response. 
In Appendix \ref{Rigid information acquisition with linear information costs},  
we discuss the case of rigid information acquisition with linear information costs and show that the total amount of information decreases with public information if and only if the game exhibits strategic complementarity, which is the same as Proposition~\ref{main proposition 0}. 
We also show that the crowding-out effect under flexible information acquisition is more significant than that under rigid information acquisition if and only if the game exhibits strategic complementarity. 
In other words,  the crowding-out effect is more sensitive to the slope of the best response in the model of flexible information acquisition. 
\end{remark}

\begin{remark}
We can conduct a similar analysis in the case of multiple equilibria, where $\alpha>1/2$ and $f(0)<\tau<f((2\alpha-1)/\alpha)$. 
Note that $\Gamma(\tau)=\{\underline\phi(\tau),\overline\phi(\tau)\}$ by Lemma \ref{def of gamma}. 
When $\gamma=\overline\phi(\tau)$, the above results  
remain valid; that is, the total amount of information decreases with public information because $\alpha>0$. 
When $\gamma=\underline\phi(\tau)$, however, we obtain the opposite result; that is, the total amount of information increases with public information. 
This is because agents acquire more precise private information when the policymaker provides more precise public information in the case of $\gamma=\underline\phi(\tau)$. 
See Appendix \ref{total information under multiple equilibria}.  
\end{remark}

\section{The definition of optimality}\label{The definition of optimality}

In this section, we calculate the policymaker's objective function and provide our definition of optimal disclosure. 
When the policymaker chooses~$\tau$ and the agents follow equilibrium strategies, 
the expected welfare is calculated as 
\begin{align}
\E[v(a,\theta)]- C(a_i)&=
c_1\var[a_i]+c_2\var[A]+c_3\cov[\theta,A]- C(a_i)+d\label{welfare 0}
\end{align}
by \eqref{v function}, where $d$ is constant, because the expected values of $a_i$ and $A$ are constants independent of $\tau$ and $\gamma$ by Proposition \ref{proposition 1}. 
We rewrite \eqref{welfare 0} using volatility and dispersion of actions, which follows \citet{uiyoshizawa2015} who characterize optimal disclosure in the case of exogenous private information. 
The volatility is the variance of the average action, $\var[A]$, 
and the dispersion is the variance of the idiosyncratic difference, $\var[a_i-A]=\var[{a_i}]-\var[A]$ \citep{angeletospavan2007,morrisbergemann2013}.

Note that the first three terms in \eqref{welfare 0} are linearly dependent because 
\begin{equation}
\cov[A,\theta]=(\var[a_i]-\alpha\cov[a_i,A])/\beta=(\var[a_i]-\alpha\var[A])/\beta
\notag \label{covthetaa}
\end{equation}
by \eqref{main equi 3} and the law of total variance. 
Thus, we can write \eqref{welfare 0} by means of volatility and dispersion, 
\begin{equation}
\E[v(a,\theta)]- C(a_i)=
\zeta \var[a_i-A]+\eta\var[A]- C(a_i)+d,\label{welfare 1}
\end{equation}
where $\zeta=c_1+c_3/\beta$ and $\eta=c_1+c_2+(1-\alpha)c_3/\beta$. 
Consequently, the expected welfare is calculated as follows.

\begin{lemma}\label{welfare lemma}
When the policymaker chooses $\tau$ and the agents follow an equilibrium with $\gamma$,  
the dispersion and volatility are given by 
\begin{align}
D(\tau,\gamma)&=
\begin{cases}
D_+(\gamma)\equiv {\lambda\gamma}/{2} & \text{ if $\tau=f(\gamma)$},\\	
0  & \text{ if $\tau\geq f(0)$ and $\gamma=0$},
\end{cases}\notag\\
V(\tau,\gamma)&=
\begin{cases}
V_+(\gamma)\equiv  (\beta^2\tau_\theta^{-1}-\lambda/{2}\cdot ((1-2 \alpha ) \gamma+1))/(1-\alpha)^2 & \text{ if $\tau=f(\gamma)$},\\	
\displaystyle V_0(\tau)\equiv  { \beta^2(\tau_\theta^{-1}-\tau^{-1})}/{(1-\alpha)^2} & \text{ if $\tau\geq f(0)$ and $\gamma=0$},
\end{cases}\notag
\end{align}
respectively, and the cost of information is given by 
$
C(a_i)=-\lambda/2\cdot \log(1-\gamma)$. Thus, the expected welfare equals $W(\tau,\gamma)+d$, where 
\begin{align}
W(\tau,\gamma)&=
\begin{cases}
 \displaystyle
W_+(\gamma) \equiv \zeta D_+(\gamma)+\eta V_+(\gamma)-\lambda/2\cdot \log(1-\gamma)^{-1}
& \text{ if $\tau=f(\gamma)$},
 \\	
\displaystyle W_0(\tau) \equiv
 \eta V_0(\tau) & \text{ if $\tau\geq f(0)$ and $\gamma=0$}.
\end{cases}
\notag\label{welfare simple}
\end{align}
\end{lemma}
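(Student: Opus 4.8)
The plan is to derive everything from Proposition~\ref{proposition 1}, which already gives the relevant conditional second moments in an equilibrium, together with the simple formula for the cost of a Gaussian direct signal in terms of the information fraction and the algebraic reduction \eqref{welfare 1} of expected welfare.

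First I would handle the cost term. By the discussion preceding Proposition~\ref{proposition 1}, the mutual information $\It(\alpha A+\beta\theta;a_i)$ equals $-\tfrac12\log(1-\gamma)$ when $(a_i,\alpha A+\beta\theta)$ are jointly normal, so $C(a_i)=\lambda\cdot\It(\alpha A+\beta\theta;a_i)=-\tfrac{\lambda}{2}\log(1-\gamma)$; in the $\gamma=0$ case this is just $0$, consistent with the stated formula. Then I would compute the dispersion $D(\tau,\gamma)=\var[a_i-A]=\vart[a_i]-\vart[A]$. In an equilibrium with $\tau=f(\gamma)$, substitute \eqref{main equi 1} and \eqref{main equi 2} to get $\vart[a_i]-\vart[A]=\tfrac{\lambda\gamma}{2(1-\gamma)}-\tfrac{\lambda\gamma^2}{2(1-\gamma)}=\tfrac{\lambda\gamma}{2(1-\gamma)}(1-\gamma)=\tfrac{\lambda\gamma}{2}=D_+(\gamma)$; in the $\gamma=0$ case both variances vanish so $D=0$.

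Next, the volatility $V(\tau,\gamma)=\vart[A]$, but here I must be careful: the $V_+$ formula is written using the \emph{unconditional} $\beta^2\tau_\theta^{-1}$ term, so I would use the law of total variance, $\var[A]=\Et\!\big[\vart[A]\big]+\var\!\big[\Et[A]\big]$, noting $\Et[A]=\beta\tilde\theta/(1-\alpha)$ and $\var[\tilde\theta]=\tau_\theta^{-1}-\tau^{-1}$. A cleaner route: express everything in terms of $\cov[A,\theta]$ and $\var[\theta]=\tau_\theta^{-1}$. From the relation $\cov[A,\theta]=(\var[a_i]-\alpha\var[A])/\beta$ quoted just before Lemma~\ref{welfare lemma} (which combines \eqref{main equi 3} with the law of total variance), and from $A-\Et[A]$ affine in $\theta-\tilde\theta$ with slope $\beta\gamma/(1-\alpha\gamma)$ in \eqref{main equi 0}, I can pin down $\var[A]$. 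Concretely, using $A=\beta\theta/(1-\alpha)$-type reasoning: since $a_i=\Et[\alpha A+\beta\theta|a_i]$ and the target is affine in $\theta$, one gets $\var[A]=\beta^2\tau_\theta^{-1}/(1-\alpha)^2$ minus a correction coming from the residual uncertainty; plugging \eqref{main equi 1}--\eqref{main equi 3} and $\tau=f(\gamma)=\tfrac{2\beta^2}{\lambda}\cdot\tfrac{1-\gamma}{(1-\alpha\gamma)^2}$ should collapse the correction to $\tfrac{\lambda}{2}\big((1-2\alpha)\gamma+1\big)/(1-\alpha)^2$, giving $V_+(\gamma)$. For $\gamma=0$, $A=\beta\tilde\theta/(1-\alpha)$ exactly, so $\var[A]=\beta^2\var[\tilde\theta]/(1-\alpha)^2=\beta^2(\tau_\theta^{-1}-\tau^{-1})/(1-\alpha)^2=V_0(\tau)$. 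Finally I would assemble $W(\tau,\gamma)=\zeta D(\tau,\gamma)+\eta V(\tau,\gamma)-C(a_i)$ by substituting \eqref{welfare 1} and the three pieces just computed, read off $W_+(\gamma)$ and $W_0(\tau)$, and note that the additive constant $d$ is exactly the $\tau$- and $\gamma$-independent remainder from \eqref{v function}.

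The main obstacle is the volatility computation in the $\tau=f(\gamma)$ case: it requires correctly bridging the conditional moments in Proposition~\ref{proposition 1} (which are all relative to $\tilde\theta$) with the unconditional variance $\beta^2\tau_\theta^{-1}$ appearing in $V_+$, i.e.\ applying the law of total variance to both $A$ and $\theta$ and then using the equilibrium identity $\tau=f(\gamma)$ to simplify. Everything else — the cost formula and the dispersion — is a direct substitution, and the $\gamma=0$ branch is immediate since actions are then deterministic functions of $\tilde\theta$.
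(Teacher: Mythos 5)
Your proposal is correct and follows essentially the same route as the paper's proof: the law of total variance applied to $\var[A]$ and $\var[a_i]$ using the conditional moments from Proposition~\ref{proposition 1}, the identity $\var[\tilde\theta]=\tau_\theta^{-1}-\tau^{-1}$ combined with $\tau=f(\gamma)$ to eliminate $\tau$, and direct substitution into \eqref{welfare 1}. The only difference is that the paper carries out the algebraic collapse of the correction term explicitly rather than asserting it, but that step is routine.
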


Henceforth, we regard $W(\tau,\gamma)$ as the expected welfare, which equals either that under information acquisition, $W_+(\gamma)$, or that under no information acquisition, $W_0(\tau)$. 
Note that $W(\tau,\gamma)$ is a function of $\tau$ and $\gamma$, and  
 $\gamma$ is uniquely determined by $\tau$ if $\alpha\leq 1/2$ or $\tau<f(0)$. 
 If $\alpha> 1/2$ and $f(0)\leq\tau<f(({2 \alpha -1})/{\alpha })$, however, multiple equilibria arise and $\gamma$ is not unique. 
In this case, we assume that the agents follow an equilibrium with the largest welfare (i.e., a sender-optimal equilibrium), following the standard Bayesian persuasion framework \citep{kamenicagentzkow2011}.

The optimal precision maximizes the expected welfare under either information acquisition or no information acquisition. 
In the case of information acquisition (i.e., $\tau=f(\gamma)$), the optimal precision is given by
\[
T_+^*\equiv\arg\max_{\tau\geq\tau_\theta:\ \Gamma(\tau)\neq\emptyset} \overline{W}_+(\tau),
\]	
where $\overline{W}_+(\tau)\equiv\max_{\gamma\in\Gamma(\tau)}W_+(\gamma)$ is the maximum achievable welfare given $\tau$ under information acquisition. 
The domain of $\overline{W}_+(\tau)$ is 
$\{\tau\geq\tau_\theta\mid \Gamma(\tau)\neq\emptyset\}=[\tau_\theta,\bar\tau]$,
where 
\[
\bar\tau\equiv \max_{\gamma\in [0,1)}f(\gamma)=
\begin{cases}
f(0) & \text{ if }\alpha\leq 1/2,\\
f((2\alpha-1)/\alpha) & \text{ if }\alpha> 1/2.
\end{cases}
\]
We refer to $\bar\tau$ as the maximum precision under information acquisition. 
In the case of no information acquisition (i.e., $\gamma=0$ and $\tau\geq f(0)$), the optimal precision is given by 
\[
T_0^*=\arg\max_{\tau\in [f(0),\infty]}W_0(\tau).
\]
The globally optimal precision is defined as follows.
\begin{definition}\label{def optimal}
We say that $\tau^*$ is optimal if one of the following holds.
\begin{itemize}
\item $\tau^*\in T_+^*$ and 
$\overline{W}_+(\tau^*)\geq W_0(\tau)$ for $\tau\in T_0^*$. 
\item $\tau^*\in T_0^*$ and 
$W_0(\tau^*)\geq \overline{W}_+(\tau)$ for $\tau\in T_+^*$. 
\end{itemize}
	
\end{definition}

We will identify the optimal precision in the above sense by means of four parameters: the coefficient of dispersion $\zeta$, the coefficient of volatility $\eta$, the slope of the best response $\alpha$, and the coefficient of information costs $\lambda$.

\section{The welfare effect of public information} \label{The welfare effect of public information}

In this section, we study the welfare effect of public information and obtain the optimal precision 
under information acquisition  $T_+^*$ and that under no information acquisition $T_0^*$. 
We assume that $\tau_\theta$ is sufficiently small, i.e.,  $\tau_\theta<f(0)$. 

First, we consider the case of no information acquisition, i.e., $\gamma=0$ and $\tau\geq f(0)$.
The welfare is $W_0(\tau)$ by Lemma \ref{welfare lemma}, which is strictly increasing (decreasing) if the volatility's coefficient is strictly positive (negative) because 
${\partial W_0(\tau)}/{\partial \tau}={\eta \beta^2\tau^{-2}}/{(1-\alpha)^2}$.  
Therefore, we have the following proposition.
\begin{proposition}\label{T0}	
It holds that 
\begin{equation}
T_0^*=
\begin{cases}
\{f(0)\} &\text{ if }\eta<0,\\
[f(0),\infty] &\text{ if }\eta=0,\\
\{\infty\} &\text{ if }\eta>0.
\end{cases}
\end{equation}	
\end{proposition}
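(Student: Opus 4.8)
The plan is to read the proposition directly off the formula for $W_0$ supplied by Lemma \ref{welfare lemma}. On the no-information-acquisition regime the expected welfare is $W_0(\tau)=\eta V_0(\tau)$ with $V_0(\tau)=\beta^2(\tau_\theta^{-1}-\tau^{-1})/(1-\alpha)^2$, and its domain is $[f(0),\infty]$: the lower endpoint because $\gamma=0$ is an equilibrium only when $\tau\ge f(0)$, and the maintained assumption $\tau_\theta<f(0)$ guarantees this interval is nonempty and that $V_0(\tau)\ge 0$ for every admissible $\tau$ (so it is a genuine variance). First I would note that $W_0$ extends continuously to the endpoint $\tau=\infty$, with finite value $W_0(\infty)=\eta\beta^2\tau_\theta^{-1}/(1-\alpha)^2$, so that $T_0^*=\arg\max_{\tau\in[f(0),\infty]}W_0(\tau)$ is well posed and $\tau=\infty$ (full disclosure) is an admissible choice.

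Next I would differentiate. For $\tau\in[f(0),\infty)$,
\[
\frac{\partial W_0(\tau)}{\partial\tau}=\frac{\eta\beta^2}{(1-\alpha)^2}\,\tau^{-2}.
\]
Since $\beta>0$, $\alpha<1$ (hence $(1-\alpha)^2>0$), and $\tau^{-2}>0$, the sign of this derivative equals the sign of $\eta$ for every $\tau$. Consequently $W_0$ is strictly increasing on its domain if $\eta>0$, strictly decreasing if $\eta<0$, and identically equal to $0$ if $\eta=0$.

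The three cases then follow at once. If $\eta<0$, strict monotonicity forces the maximum to be attained only at the left endpoint, so $T_0^*=\{f(0)\}$. If $\eta>0$, the value increases strictly toward its supremum, which by the continuous extension above is attained at $\tau=\infty$, so $T_0^*=\{\infty\}$. If $\eta=0$, $W_0$ is constant on all of $[f(0),\infty]$, so every point is a maximizer and $T_0^*=[f(0),\infty]$.

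There is no genuinely hard step here; the only point requiring care is the bookkeeping at the boundary — verifying that $\tau=\infty$ is admissible with a well-defined finite welfare and that $f(0)$ is the correct left endpoint of the no-acquisition regime — and both facts are immediate from Lemma \ref{welfare lemma} together with $\tau_\theta<f(0)$.
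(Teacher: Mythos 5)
Your proof is correct and matches the paper's argument, which likewise just computes $\partial W_0(\tau)/\partial\tau=\eta\beta^2\tau^{-2}/(1-\alpha)^2$ and reads off the sign of $\eta$. The extra bookkeeping you do at the endpoints ($\tau=\infty$ admissible with finite welfare, $f(0)$ as the left edge of the no-acquisition regime) is harmless and consistent with Lemma \ref{welfare lemma}.
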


Next, we consider the case of information acquisition, i.e., $\gamma=f(\gamma)$. 
The welfare is $\overline{W}_+(\tau)$ by Lemma \ref{welfare lemma}. 
Recall that the domain of $\overline{W}_+(\tau)$ is $[\tau_\theta,\bar\tau]$, and the corresponding range of the information fraction is 
\begin{equation}
\{\gamma\in [0,1)\mid \tau=f(\gamma),\ \tau\in [\tau_\theta,\bar\tau]\}=
\{\gamma\in [0,1)\mid f(\gamma)\geq \tau_\theta\}=[0,\overline\phi(\tau_\theta)]
\notag\label{range of gamma}
\end{equation}
because $\tau_\theta<f(0)$ is assumed. 
Note that, for each $\tau_+^*\in T_+^*$, 
\[
\overline{W}_+(\tau_+^*)=\max_{\tau\in [\tau_\theta,\bar\tau]}\overline{W}_+(\tau)=\max_{\tau\in [\tau_\theta,\bar\tau]}\max_{\gamma\in\Gamma(\tau)}{W}_+(\gamma)=\max_{\gamma\in [0,\overline\phi(\tau_\theta)]}{W}_+(\gamma).\]  
Thus, to obtain $T_+^*$, it is enough to solve the last optimization problem, and its solution is given by the following lemma, which is an immediate consequence of Lemma \ref{welfare lemma}.
\begin{lemma}\label{gamma star lemma}
For each $\gamma\in [0,1)$, it holds that 
\begin{align}
\frac{dW_+(\gamma)}{d\gamma} =
 \frac{\lambda}{2}\left(\zeta -\eta\frac{1-2 \alpha }{ (1-\alpha)^2}-\frac{1}{1-\gamma}\right)\gtrless 0 \ \Leftrightarrow\ \gamma\gtrless \gamma^*_+,
\notag\label{gamma key eq}	
\end{align}
where 
\begin{equation}
	\gamma^*_+=\gamma^*_+(\zeta,\eta,\alpha)\equiv 
\begin{cases}
\displaystyle 1-\frac{1}{ \zeta -(1-2 \alpha ) \eta /(1-\alpha)^2} 
&\text{if }\zeta -(1-2 \alpha) \eta /(1-\alpha)^2>1,\\
0 \displaystyle & \text{otheriwse.} 	
\end{cases}
\label{def delta}
\end{equation}
Thus, $\gamma^*_+$ uniquely maximizes $W_+(\gamma)$ over $[0,1)$. 	
\end{lemma}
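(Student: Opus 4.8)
The plan is to compute $dW_+/d\gamma$ directly from the closed form supplied by Lemma~\ref{welfare lemma} and then read off its sign. First I would rewrite $W_+(\gamma)=\zeta D_+(\gamma)+\eta V_+(\gamma)+\frac{\lambda}{2}\log(1-\gamma)$, using $-\log(1-\gamma)^{-1}=\log(1-\gamma)$, and substitute $D_+(\gamma)=\lambda\gamma/2$ together with $V_+(\gamma)=\bigl(\beta^2\tau_\theta^{-1}-\tfrac{\lambda}{2}((1-2\alpha)\gamma+1)\bigr)/(1-\alpha)^2$. Differentiating term by term is immediate: $D_+'(\gamma)=\lambda/2$, $V_+'(\gamma)=-\tfrac{\lambda}{2}(1-2\alpha)/(1-\alpha)^2$ (the additive constant $\beta^2\tau_\theta^{-1}/(1-\alpha)^2$ drops out), and $\tfrac{d}{d\gamma}\bigl(\tfrac{\lambda}{2}\log(1-\gamma)\bigr)=-\tfrac{\lambda}{2(1-\gamma)}$. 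Collecting the three pieces and factoring out $\lambda/2>0$ reproduces exactly the stated expression for $dW_+(\gamma)/d\gamma$.

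Next I would analyse the sign. Writing $K\equiv\zeta-(1-2\alpha)\eta/(1-\alpha)^2$, the derivative has the sign of $K-\tfrac{1}{1-\gamma}$. Since $\gamma\mapsto\tfrac{1}{1-\gamma}$ is continuous and strictly increasing on $[0,1)$, taking the value $1$ at $\gamma=0$ and tending to $+\infty$ as $\gamma\to1$, the map $\gamma\mapsto K-\tfrac{1}{1-\gamma}$ is strictly decreasing on $[0,1)$ with value $K-1$ at $\gamma=0$. If $K>1$ there is a unique $\gamma\in(0,1)$ at which it vanishes, namely the solution of $\tfrac{1}{1-\gamma}=K$, i.e.\ $\gamma=1-1/K=\gamma^*_+$; the derivative is then strictly positive for $\gamma<\gamma^*_+$ and strictly negative for $\gamma>\gamma^*_+$. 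If $K\le1$ then $K-\tfrac{1}{1-\gamma}\le K-1\le0$ for every $\gamma\in[0,1)$, with strict inequality once $\gamma>0$, so the derivative is nonpositive throughout and strictly negative on $(0,1)$; this is exactly the case $\gamma^*_+=0$ in \eqref{def delta}. In both cases the claimed equivalence $dW_+/d\gamma\gtrless0\Leftrightarrow\gamma\gtrless\gamma^*_+$ holds, the left branch being vacuous when $\gamma^*_+=0$.

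Uniqueness of the maximizer then follows from this sign pattern: $W_+$ is continuously differentiable on $[0,1)$, strictly increasing on $[0,\gamma^*_+)$ and strictly decreasing on $(\gamma^*_+,1)$ (the first interval being empty when $\gamma^*_+=0$), hence $\gamma^*_+$ is its unique global maximizer on $[0,1)$. The argument is entirely elementary calculus; the only point needing a little care is the boundary behaviour at $\gamma=0$, where the derivative need not vanish, so that when $K\le1$ the optimum is the corner $\gamma^*_+=0$ rather than an interior critical point — which is precisely why $\gamma^*_+$ must be defined piecewise.
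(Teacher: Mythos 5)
Your proof is correct and matches the paper's (the paper treats the lemma as an immediate consequence of Lemma \ref{welfare lemma}, i.e., exactly the term-by-term differentiation and monotone-sign analysis you carry out). One small point: your sign analysis correctly shows the derivative is positive for $\gamma<\gamma^*_+$ and negative for $\gamma>\gamma^*_+$, so the equivalence should be read as $dW_+/d\gamma\gtrless 0\Leftrightarrow\gamma\lessgtr\gamma^*_+$; the paired $\gtrless$ in the displayed statement appears to be a typo, and your substantive conclusion (that $\gamma^*_+$ is the unique maximizer) is the right one.
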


Let $\tau^*_+\equiv f(\gamma^*_+)$. 
Then, Lemma \ref{gamma star lemma} implies that 
$\tau^*$ uniquely maximixes $\overline{W}_+(\tau)$ over $[\tau_\theta,\bar\tau]$ 
if 
$ \tau_\theta<\tau^*_+$.\footnote{If $ \tau_\theta\geq \tau^*_+$, then $\overline{W}_+(\tau_\theta)>\overline{W}_+(\tau)$ 
for all $\tau>\tau_\theta$, but we focus on the case with sufficiently small $\tau_\theta$ to make the discussion simple.} 

\begin{proposition}\label{main proposition 1}
Assume that $\tau_\theta<\tau_+^*= f(\gamma^*_+)$.
Then, $T_+^*=\{\tau_+^*\}$. 
\end{proposition}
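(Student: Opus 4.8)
The plan is to deduce Proposition~\ref{main proposition 1} from Lemma~\ref{gamma star lemma} together with the description of the domain of $\overline{W}_+$ already established in the text. Recall that $\overline{W}_+(\tau)=\max_{\gamma\in\Gamma(\tau)}W_+(\gamma)$ is defined on $[\tau_\theta,\bar\tau]$, and that for each such $\tau$ the set $\Gamma(\tau)$ was computed in Lemma~\ref{def of gamma}. The first step is to transfer the maximization of $\overline{W}_+$ over $\tau$ into a maximization of $W_+$ over $\gamma$: as already noted in the excerpt, $\max_{\tau\in[\tau_\theta,\bar\tau]}\overline{W}_+(\tau)=\max_{\gamma\in[0,\overline\phi(\tau_\theta)]}W_+(\gamma)$, using $\tau_\theta<f(0)$ so that the attainable information fractions are exactly the interval $[0,\overline\phi(\tau_\theta)]$.

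Next I would invoke Lemma~\ref{gamma star lemma}, which says $W_+$ is strictly increasing on $[0,\gamma^*_+)$ and strictly decreasing on $(\gamma^*_+,1)$, with $\gamma^*_+$ its unique maximizer over $[0,1)$. The hypothesis $\tau_\theta<\tau_+^*=f(\gamma^*_+)$ is what guarantees that $\gamma^*_+$ actually lies in the feasible interval $[0,\overline\phi(\tau_\theta)]$: since $\overline\phi$ is the upper branch and (by Lemma~\ref{lemma: crowding-out}) $f$ restricted to $[0,\overline\phi(\tau_\theta)]$ is strictly decreasing, $\tau_\theta<f(\gamma^*_+)$ is equivalent to $\gamma^*_+<\overline\phi(\tau_\theta)$. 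Hence $\gamma^*_+$ is interior to the feasible set and is the unique maximizer of $W_+$ there, so $\max_{\gamma\in[0,\overline\phi(\tau_\theta)]}W_+(\gamma)=W_+(\gamma^*_+)$, attained only at $\gamma^*_+$.

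Finally I would translate back to $\tau$. Since $\tau_+^*=f(\gamma^*_+)\in[\tau_\theta,\bar\tau]$ and, for every $\tau$, $\overline{W}_+(\tau)\le W_+(\gamma^*_+)=\overline{W}_+(\tau_+^*)$, the point $\tau_+^*$ maximizes $\overline{W}_+$. For uniqueness I need that no other $\tau$ attains the value $W_+(\gamma^*_+)$: if $\overline{W}_+(\tau)=W_+(\gamma^*_+)$ then some $\gamma\in\Gamma(\tau)$ has $W_+(\gamma)=W_+(\gamma^*_+)$, forcing $\gamma=\gamma^*_+$ by the uniqueness in Lemma~\ref{gamma star lemma}, and then $\tau=f(\gamma^*_+)=\tau_+^*$. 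Therefore $T_+^*=\{\tau_+^*\}$.

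The main obstacle — really the only delicate point — is the equivalence $\tau_\theta<f(\gamma^*_+)\iff\gamma^*_+<\overline\phi(\tau_\theta)$, i.e. making sure $\gamma^*_+$ is feasible. When $\alpha\le 1/2$ this is immediate from monotonicity of $f$ on $[0,1)$; when $\alpha>1/2$, $f$ is non-monotone (it rises then falls, see Figure~\ref{fig3}), so one must use that $\overline\phi(\tau)$ is by definition the \emph{larger} root in $\Gamma(\tau)$ and lies on the increasing-then-irrelevant branch, together with the inequality $\underline\phi(\tau)\le(2\alpha-1)/\alpha\le\overline\phi(\tau)$ from Lemma~\ref{def of gamma}, to conclude that $f$ is still strictly decreasing in $\gamma$ along the upper branch $[\,(2\alpha-1)/\alpha,\,1)$ that contains both $\gamma^*_+$ (when the right branch is relevant) and $\overline\phi(\tau_\theta)$. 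A short case split on the sign of $\alpha-1/2$, or on whether $\gamma^*_+=0$, handles this cleanly.
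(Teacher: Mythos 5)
Your argument is correct and follows essentially the same route as the paper: reduce $\max_{\tau}\overline{W}_+(\tau)$ to $\max_{\gamma\in[0,\overline\phi(\tau_\theta)]}W_+(\gamma)$, apply the single-peakedness of $W_+$ at $\gamma^*_+$ from Lemma~\ref{gamma star lemma}, and use the hypothesis $\tau_\theta<f(\gamma^*_+)$ to place $\gamma^*_+$ in the feasible interval. One small slip: your middle paragraph's claim that $f$ is strictly decreasing on all of $[0,\overline\phi(\tau_\theta)]$ is false when $\alpha>1/2$ (and Lemma~\ref{lemma: crowding-out} is a statement about $\overline\phi'$, not $f'$), but the case split in your final paragraph repairs this, so the proof stands.
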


By Proposition \ref{main proposition 1}, no disclosure ($\tau=\tau_\theta$) is suboptimal for sufficiently small $\tau_\theta$. 
When the policymaker discloses no information, the agents acquire a substantial amount of information to reduce uncertainty, incurring inefficiently large costs.  
In fact,  as $\tau$ approaches zero,   
the cost goes to infinity, so the expected welfare goes to minus infinity: 
\[
\lim_{\tau\to 0}\overline{W}_+(\tau)=\lim_{\tau\to 0}{W}_+(\overline\phi(\tau))
=\lim_{\gamma\to 1}{W}_+(\gamma)=\zeta D_+(1)+\eta V_+(1)-\lambda/2\cdot\lim_{\gamma\to 1}\log(1-\gamma)^{-1}=-\infty.
\]



Using Lemma \ref{gamma star lemma}, we study the welfare effect of public information and gain intuition about the optimal precision $\tau_+^*$.
We focus on the case of $\tau<f(0)$ and $\Gamma(\tau)=\{\overline{\phi}(\tau)\}$, i.e., the second-period equilibrium is unique.  
Because $\overline{W}_+(\tau)=W_+(\overline{\phi}(\tau))$, we obtain 
\begin{align}
	\frac{d \overline{W}_+(\tau)}{d \tau}
	&=
\zeta\frac{d D_+(\overline{\phi}(\tau))}{d \tau}
+\eta\frac{d V_+(\overline{\phi}(\tau))}{d \tau}
-\lambda\frac{d\log (1-\overline{\phi}(\tau))^{-1}}{d\tau}\notag\\
&=
\left(\zeta -(1-2 \alpha)\eta/ (1-\alpha)^2- 1/(1-\overline{\phi}(\tau))\right)
\cdot {\lambda\overline{\phi}'(\tau)}/{2}. \label{key eq welfare 0}
\end{align}
This leads us to the following proposition, which provides a necessary and sufficient condition for welfare to increase with public information.

\begin{proposition}\label{main proposition 3}
If $\tau<f(0)$, it holds that 
\begin{align}
\frac{\partial \overline{W}_+(\tau)}{\partial \tau}\gtrless 0&\ \Leftrightarrow\ 
\zeta -(1-2 \alpha)\eta/ (1-\alpha)^2- 1/(1-\overline{\phi}(\tau))\lessgtr 0,\label{key eq welfare}\\ 
\frac{\partial \overline{W}_+(\tau)}{\partial \tau}< 0&\ \Rightarrow\ 
\zeta -(1-2 \alpha)\eta/ (1-\alpha)^2> 1.\label{key eq welfare'}
\end{align}
\end{proposition}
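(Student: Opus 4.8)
The plan is to read off Proposition~\ref{main proposition 3} directly from the two ingredients already assembled in the text: the closed-form derivative \eqref{key eq welfare 0} and the crowding-out monotonicity of Lemma~\ref{lemma: crowding-out}. First I would observe that, under the hypothesis $\tau<f(0)$ together with the standing assumption $\tau_\theta<f(0)$, Lemma~\ref{def of gamma} gives $\Gamma(\tau)=\{\overline\phi(\tau)\}$, so the second-period equilibrium is unique and $\overline{W}_+(\tau)=W_+(\overline\phi(\tau))$; this is exactly the regime in which \eqref{key eq welfare 0} was derived, so I may simply invoke it:
\[
\frac{d \overline{W}_+(\tau)}{d \tau}
=\left(\zeta -(1-2\alpha)\eta/(1-\alpha)^2- 1/(1-\overline\phi(\tau))\right)\cdot \frac{\lambda\overline\phi'(\tau)}{2}.
\]

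Next I would pin down the sign of the scalar factor $\lambda\overline\phi'(\tau)/2$. Since $\lambda>0$, this reduces to the sign of $\overline\phi'(\tau)$, which is strictly negative by Lemma~\ref{lemma: crowding-out}: indeed $\tau<f(0)\le f(\max\{0,(2\alpha-1)/\alpha\})$, so the hypothesis of that lemma is met (when $\alpha\le 1/2$ the max is $0$ and $f(0)$ is the relevant bound; when $\alpha>1/2$, $f(0)<f((2\alpha-1)/\alpha)$ and again $\tau<f(0)$ suffices). Hence $\lambda\overline\phi'(\tau)/2<0$, and dividing \eqref{key eq welfare 0} through by this strictly negative quantity flips the inequality, yielding \eqref{key eq welfare}: $\partial\overline{W}_+(\tau)/\partial\tau\gtrless 0$ if and only if $\zeta-(1-2\alpha)\eta/(1-\alpha)^2-1/(1-\overline\phi(\tau))\lessgtr 0$.

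For the implication \eqref{key eq welfare'}, I would use the crude bound $1/(1-\overline\phi(\tau))>1$, which holds because $\overline\phi(\tau)\in(0,1)$ whenever $\tau<f(0)$ — strictly positive since $f$ is finite and the information fraction is nonzero in this regime (cf.\ the remarks after Proposition~\ref{proposition 1} and the fact $\overline\phi(f(0))$ equals $0$ only at the boundary $\tau=f(0)$, while $\overline\phi$ is strictly decreasing so $\overline\phi(\tau)>0$ for $\tau<f(0)$), and strictly below $1$ since the cost of information is finite at any equilibrium with $\gamma=\overline\phi(\tau)$. Given $\partial\overline{W}_+(\tau)/\partial\tau<0$, the already-established equivalence \eqref{key eq welfare} forces $\zeta-(1-2\alpha)\eta/(1-\alpha)^2-1/(1-\overline\phi(\tau))>0$, and adding $1/(1-\overline\phi(\tau))>1$ to both sides (or rather, replacing the subtracted term by the smaller quantity $1$) gives $\zeta-(1-2\alpha)\eta/(1-\alpha)^2>1/(1-\overline\phi(\tau))>1$, which is \eqref{key eq welfare'}.

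The only genuine obstacle is bookkeeping rather than substance: one must be careful that \eqref{key eq welfare 0} is itself legitimate, i.e.\ that differentiating $W_+(\overline\phi(\tau))$ and collecting terms really produces that compact factorization. That computation chains the chain rule through $D_+'(\gamma)=\lambda/2$, $V_+'(\gamma)=-\lambda(1-2\alpha)/(2(1-\alpha)^2)$, and $\tfrac{d}{d\gamma}\log(1-\gamma)^{-1}=1/(1-\gamma)$ from Lemma~\ref{welfare lemma}, with a common factor $\overline\phi'(\tau)$ pulled out; since the excerpt already displays \eqref{key eq welfare 0} as an established identity I will simply cite it, so the proof proper is short. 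A secondary point to state cleanly is the strict two-sided containment $0<\overline\phi(\tau)<1$ on the open interval $\tau<f(0)$, which underwrites both the flip in \eqref{key eq welfare} (needs $\overline\phi'(\tau)\neq 0$, supplied by Lemma~\ref{lemma: crowding-out}) and the bound used for \eqref{key eq welfare'}.
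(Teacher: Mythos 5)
Your proposal is correct and follows essentially the same route as the paper, which proves the proposition directly from the in-text factorization \eqref{key eq welfare 0} together with $\overline{\phi}'(\tau)<0$ (Lemma \ref{lemma: crowding-out}) to flip the sign, and the bound $0<\overline{\phi}(\tau)<1$ to get \eqref{key eq welfare'}. The only cosmetic slip is your parenthetical claim that $\max\{0,(2\alpha-1)/\alpha\}=0$ whenever $\alpha\le 1/2$ (it fails for $\alpha<0$, where $(2\alpha-1)/\alpha>2$), but this reflects an infelicity in the lemma's stated hypothesis rather than a gap in your argument, since the lemma's proof covers all $\tau<f(0)$ when $\alpha\le 1/2$.
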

The change in welfare, $\partial \overline{W}_+/\partial \tau$, consists of those in the dispersion term, the volatility term, and the cost term, which correspond to the three terms in \eqref{key eq welfare 0} and \eqref{key eq welfare}, respectively. 	 
The last term in \eqref{key eq welfare}, $1/(1-\overline{\phi}(\tau))$, measures the cost reduction induced by the crowding-out effect, which is decreasing in $\tau$ and approaches infinity as $\tau$ goes to zero (becuase $\overline{\phi}(\tau)$ goes to one). 
This is why welfare increases with public information when public information is sufficiently imprecise.  
Note that the cost reduction is small when $\tau$ is large and the agents acquire a small amount of information.

On the other hand, welfare can decrease with public information only if $\zeta -(1-2 \alpha ) \eta /(1-\alpha)^2>1$ by \eqref{key eq welfare'} (see Figure \ref{fig1}), which is equivalent to $\gamma_+^*>0$ by \eqref{def delta}. 
That is, 
when the coefficient of dispersion $\zeta$ is large enough, more precise public information can be harmful. 
We can understand this by the fact that the dispersion is decreasing in $\tau$: 
\begin{equation}
\frac{d D_+(\overline{\phi}(\tau))}{d \tau}={\lambda\overline{\phi}'(\tau)}/{2}<0.	\notag \label{dispersion decreases}
\end{equation}
This is because the dispersion equals the difference between the variance of an action and that of the average action, and more precise public information brings them closer. 
In contrast, when the coefficient of volatility $\eta$ is large enough, more precise public information is always beneficial if $\alpha<1/2$, yet it can be harmful if $\alpha>1/2$.  
This can be understood  by the fact that the volatility is increasing in $\tau$ if and only if $\alpha<1/2$: 
\[
\frac{d V_+(\overline{\phi}(\tau))}{d \tau}=-{\lambda\overline{\phi}'(\tau)}/2\cdot (1-2\alpha)/(1-\alpha)^2\gtrless 0\ \Leftrightarrow \ \alpha \lessgtr 1/2.
\]
The volatility equals the covariance of actions since $\var[A]=\cov[a_i,A]=\cov[a_i,a_j]$, which equals the corelation coefficient multiplied by the variance of an individual action. 
More precise public information increases the correlation coefficient but decreases the variance through the crowding-out effect. 
When $\alpha>1/2$, the crowding-out effect is so strong and a decrease in the variance is so large that the covariance decreases with public information. 


\section{The optimal disclosure rule}\label{The optimal precision}
Using the result in Section \ref{The welfare effect of public information}, we obtain the optimal precision of public information. 
When $\eta\neq 0$, there are three candidates for the optimal precision, $\tau_+^*$, $f(0)$, and $\infty$, by 
Propositions \ref{T0} and \ref{main proposition 1}. 
However, we can focus on $\tau_+^*$ and $\infty$ for the following reason: if $\gamma_+^*=0$, then $\tau_+^*=f(0)$; if $\gamma_+^*>0$, then $\tau_+^*=f(\gamma_+^*)\neq f(0)$, and 
$\overline{W}_+(\tau_+^*)>\overline{W}_+(f(0))\geq W_0(f(0))$. 
Therefore, the optimal precision is $\tau_+^*$ if $\overline{W}_+(\tau_+^*)-W_0(\infty)>0$ and $\infty$ if $\overline{W}_+(\tau_+^*)-W_0(\infty)<0$. 
This observation leads us to the following proposition. 

\begin{proposition}\label{main proposition opt}
Assume that $\tau_\theta<\tau_+^*= f(\gamma^*_+)$.
The optimal precision $\tau^*$ is uniquely given by
\[
\tau^*=
\begin{cases}
	\infty & \text{ if $\chi<0$ and $\eta>0$},\\
	\tau_+^* & \text{ if $\chi>0$ or $\eta<0$},
\end{cases}
\]
where $\chi\equiv \zeta-1-2 \eta /(1-\alpha)-\log(1-\gamma_+^*)^{-1}$. 
Both $\infty$ and $\tau_+^*$ are optimal in the other nongeneric case ($\chi=0$ or $\eta=0$).
\end{proposition}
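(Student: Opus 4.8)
The plan is to carry out the comparison sketched just before the statement. By Definition~\ref{def optimal} it suffices to compare the best welfare under information acquisition, $\overline{W}_+(\tau_+^*)$ (attained only at $\tau_+^*$ by Proposition~\ref{main proposition 1}, under the standing assumption $\tau_\theta<\tau_+^*$), with $\max_{\tau\in T_0^*}W_0(\tau)$, where $T_0^*$ is given by Proposition~\ref{T0}. First I would dispose of the candidate $f(0)$: since $f(0)=2\beta^2/\lambda$ by \eqref{main result eq 2}, a direct substitution into Lemma~\ref{welfare lemma} gives $V_0(f(0))=V_+(0)$, hence $W_0(f(0))=\eta V_0(f(0))=\eta V_+(0)=W_+(0)\le\overline{W}_+(f(0))\le\overline{W}_+(\tau_+^*)$, the last step because $\tau_+^*$ maximizes $\overline{W}_+$ over $[\tau_\theta,\bar\tau]\ni f(0)$. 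Therefore, when $\eta<0$ (so $T_0^*=\{f(0)\}$) the information-acquisition optimum already weakly dominates the no-information one, so $\tau^*=\tau_+^*$; this precision is the unique optimum, since either $\tau_+^*=f(0)$ (the two candidate precisions coincide) or $\tau_+^*\neq f(0)$ and the domination is strict. Thus the only genuine comparison left is $\tau_+^*$ against $\infty$, and $\infty$ is relevant only when $\eta\ge 0$, in which case it uniquely maximizes $W_0$ over $[f(0),\infty]$ by Proposition~\ref{T0}.

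The core step is the identity $\overline{W}_+(\tau_+^*)-W_0(\infty)=\tfrac{\lambda}{2}\chi$ when $\gamma_+^*>0$. I would substitute $D_+(\gamma_+^*)=\lambda\gamma_+^*/2$, the expression for $V_+(\gamma_+^*)$, and $V_0(\infty)=\beta^2\tau_\theta^{-1}/(1-\alpha)^2$ from Lemma~\ref{welfare lemma} into $W_+(\gamma_+^*)-W_0(\infty)$; the $\beta^2\tau_\theta^{-1}$ terms cancel, leaving $\tfrac{\lambda}{2}\bigl[\gamma_+^*(\zeta-(1-2\alpha)\eta/(1-\alpha)^2)-\eta/(1-\alpha)^2-\log(1-\gamma_+^*)^{-1}\bigr]$. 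Writing $K\equiv\zeta-(1-2\alpha)\eta/(1-\alpha)^2$, formula \eqref{def delta} gives $1-\gamma_+^*=1/K$ exactly when $K>1$, so $\log(1-\gamma_+^*)^{-1}=\log K$ and $\gamma_+^*K=K-1$; then the elementary identity $K-1-\eta/(1-\alpha)^2=\zeta-1-2\eta/(1-\alpha)$ turns the bracket into $\zeta-1-2\eta/(1-\alpha)-\log(1-\gamma_+^*)^{-1}=\chi$. For the boundary $\gamma_+^*=0$ (equivalently $K\le 1$), a direct computation from Lemma~\ref{welfare lemma} gives instead $\overline{W}_+(\tau_+^*)-W_0(\infty)=W_+(0)-W_0(\infty)=-\eta\lambda/(2(1-\alpha)^2)$, whose sign is the opposite of that of $\eta$; and the same identity shows $\chi=K-1-\eta/(1-\alpha)^2\le 0$ whenever $\eta\ge 0$, so in the regime $\eta>0$ the sign of $\overline{W}_+(\tau_+^*)-W_0(\infty)$ still agrees with that of $\chi$.

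With the identity in hand the case analysis is routine. If $\eta>0$ and $\chi<0$, then $\overline{W}_+(\tau_+^*)<W_0(\infty)$; since $\infty$ uniquely maximizes $W_0$ and $W_0(\infty)>\overline{W}_+(\tau)$ for every $\tau\in T_+^*=\{\tau_+^*\}$, Definition~\ref{def optimal} gives $\tau^*=\infty$, uniquely. If $\chi>0$, then either $\eta<0$ (already handled) or $\eta\ge 0$; in the latter case $\gamma_+^*>0$ is forced (otherwise $\chi=K-1-\eta/(1-\alpha)^2\le 0$), so the identity yields $\overline{W}_+(\tau_+^*)>W_0(\infty)=\max_{\tau\in T_0^*}W_0(\tau)$, whence $\tau^*=\tau_+^*$, uniquely by Proposition~\ref{main proposition 1}. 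This establishes the displayed dichotomy. In the remaining nongeneric cases we have $\eta\ge 0$ and $\overline{W}_+(\tau_+^*)=W_0(\infty)$: either $\chi=0$ with $\eta>0$, where $\gamma_+^*>0$ and the identity gives equality, or $\eta=0$ with $\zeta\le 1$, where $\gamma_+^*=0$ and $\overline{W}_+(\tau_+^*)=W_+(0)=\eta V_+(0)=0=W_0(\infty)$; since $\infty\in T_0^*$ in both, both $\infty$ and $\tau_+^*$ are optimal.

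I expect the main obstacle to be the bookkeeping around the boundary $\gamma_+^*=0$ together with the sign $\eta\le 0$: there $W_0(\infty)$ is no longer the no-information optimum, so one must separately confirm (via $V_0(f(0))=V_+(0)$) that $f(0)$ is dominated by $\tau_+^*$ and that the two candidate precisions coincide precisely when $\gamma_+^*=0$, so that the uniqueness assertion is not violated, and one must check that the identity $\overline{W}_+(\tau_+^*)-W_0(\infty)=\tfrac{\lambda}{2}\chi$ (valid only for $\gamma_+^*>0$) is nonetheless sign-consistent with $\chi$ in the regime where $\chi$ actually governs the answer. The algebraic identity $K-1-\eta/(1-\alpha)^2=\zeta-1-2\eta/(1-\alpha)$ and the computation $f(0)=2\beta^2/\lambda$ are both elementary.
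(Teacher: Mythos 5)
Your proposal is correct and follows essentially the same route as the paper's proof: compute $\overline{W}_+(\tau_+^*)-W_0(\infty)$ separately for $\gamma_+^*>0$ and $\gamma_+^*=0$, reduce the bracket to $\chi$ via the identity $2/(1-\alpha)-(1-2\alpha)/(1-\alpha)^2=1/(1-\alpha)^2$, and then translate the $(\gamma_+^*,\eta,\chi)$-cases into the stated $(\eta,\chi)$-conditions. If anything you are slightly more careful than the paper — you keep the overall factor $\lambda/2$ in the welfare gap (the paper's equation \eqref{chi condition} drops it, harmlessly since only the sign matters) and you explicitly dispose of the candidate $f(0)$ and the $\eta=0$ boundary.
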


Figure \ref{fig:flexible} illustrates the optimal precision on the $\zeta\eta$-plane.  
Full disclosure is optimal if $(\zeta,\eta)$ is in the upper region, where the coefficient of volatility is positive and large enough; partial disclosure is optimal if $(\zeta,\eta)$ is in the lower region. 
If $\alpha\leq 1/2$, the volatility necessarily increases with public information, but if $\alpha>1/2$, it can decrease due to the crowding-out effect. 
Even in the latter case, when public information is sufficiently precise, the volatility necessarily increases with public information because the agents do not acquire private information and the crowding-out effect disappears. 
This is why a sufficiently large coefficient of volatility guarantees the optimality of full disclosure. 

No disclosure is never optimal because public information reduces the cost of information, which is in sharp contrast to the case of exogenous private information. 
To see this, assume that an agent receives a private signal $\tilde\theta_i$ that follows a fixed normal distribution in the second period at no cost \citep{angeletospavan2007}. A private signal is conditionally independent across agents given $\theta$ and the precision of private information (defined as $\tau_i\equiv 1/\var[\theta|\tilde\theta_i]-\tau_\theta$) is exogenously fixed. 
\citet{uiyoshizawa2015} study optimal disclosure of public information in this case and show that no disclosure is optimal if $\eta$ is small enough, as illustrated in Figure \ref{fig:exogenous} and formally stated in the next proposition.\footnote{See Corollary 5 of \citet{uiyoshizawa2015}.}

\begin{proposition}[Ui and Yoshizawa, 2015]\label{Proposition UY}
Assume that the precision of private information is exogenously fixed. 
For any precision of private information, the optimal precision $\tau^*$ is uniquely given by 
\[
\tau^*=
\begin{cases}
\infty&\text{ if $\eta> \max\{0, (1-\alpha)\zeta/2\}$},\\
0 &\text{ if $\eta< \min  \{0, 2(1-\alpha)\zeta/3\}$}.
\end{cases}
\]	
In the other case, $\tau^*$ depends upon the precision of private information. 
\end{proposition}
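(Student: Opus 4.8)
\noindent\textbf{Proof plan for Proposition \ref{Proposition UY}.} This is Corollary~5 of \citet{uiyoshizawa2015}; I indicate how to obtain it inside the present framework. The plan is to redo the reduction of Sections~\ref{The definition of optimality}--\ref{The optimal precision} with two changes: the information-cost term drops out of \eqref{bp's payoff1}, and the second-period outcome is now the unique linear equilibrium of the LQG game \eqref{payoff function cont} in which each agent observes the public signal $\tilde\theta$ of precision $\tau$ together with an exogenous, conditionally i.i.d.\ Gaussian private signal $\tilde\theta_i$ of fixed precision. First I would solve that game by standard guess-and-verify: posit a symmetric affine strategy $a_i=\psi_0+\psi_g\tilde\theta+\psi_x\tilde\theta_i$, aggregate to $A=\psi_0+\psi_g\tilde\theta+\psi_x\theta$ (the private noise averages out by the law of large numbers), and impose the best-response fixed point $a_i=\E[\alpha A+\beta\theta\mid\tilde\theta,\tilde\theta_i]$; using $\alpha<1$ this determines $(\psi_0,\psi_g,\psi_x)$ uniquely as rational functions of the primitives.

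From the equilibrium I would read off closed forms for the dispersion $D(\tau)=\var[a_i-A]$ and the volatility $V(\tau)=\var[A]=\cov[a_i,A]$, and --- via the same linear-dependence identity $\cov[A,\theta]=(\var[a_i]-\alpha\var[A])/\beta$ used to derive \eqref{welfare 1} --- express the expected welfare as $W(\tau)=\zeta D(\tau)+\eta V(\tau)+d$, where $\tau$, the public signal's precision, ranges over $[0,\infty]$ ($\tau=0$: no disclosure; $\tau=\infty$: full disclosure). The two structural facts to establish next are: $D$ is strictly decreasing with $D(\infty)=0$ (more public information brings $\var[a_i]$ and $\var[A]$ together), and $V$ is strictly increasing with $V(\infty)=\beta^2/((1-\alpha)^2\tau_\theta)$ (with exogenous private precision there is no crowding-out, so the rise in the correlation of individual actions is not offset). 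These already settle the corners of the claimed regions: $W$ is strictly increasing when $\zeta\le 0$ and $\eta>0$, so $\tau^*=\infty$; and strictly decreasing when $\zeta\ge 0$ and $\eta<0$, so $\tau^*=0$.

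The substantive work is the two mixed-sign cases $\eta>0,\ \zeta>0$ (candidate $\tau^*=\infty$) and $\eta<0,\ \zeta<0$ (candidate $\tau^*=0$), where the dispersion and volatility channels pull against each other. Here I would reparametrize $W$ by a single monotone equilibrium statistic --- e.g.\ the weight agents place on the public signal, which runs continuously from $0$ at $\tau=0$ to $1$ at $\tau=\infty$ --- so that $D$ and $V$ become explicit low-degree algebraic functions of that statistic and the sign of $W'$ is governed by the numerator of $W'$, a polynomial of small degree. The thresholds $(1-\alpha)\zeta/2$ and $2(1-\alpha)\zeta/3$ are exactly the critical values of $\eta$ at which that polynomial gains or loses a root inside the admissible interval: for $\eta>\max\{0,(1-\alpha)\zeta/2\}$ the volatility gain dominates the dispersion loss everywhere, so $W'>0$ and $\tau^*=\infty$; for $\eta<\min\{0,2(1-\alpha)\zeta/3\}$ the reverse holds and $\tau^*=0$; uniqueness in each case follows from the strict sign of $W'$. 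In the remaining band $W$ has an interior critical point whose location involves the exogenous private precision, which is the qualification in the last line of the proposition.

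\noindent\textbf{Main obstacle.} The crux is the mixed-sign analysis. Because the two thresholds $(1-\alpha)\zeta/2$ and $2(1-\alpha)\zeta/3$ are \emph{different}, it does not suffice to check the sign of $W'$ at the endpoints $\tau\in\{0,\infty\}$; one must control the global shape of $W$ and rule out an interior maximum exceeding $W(\infty)$ (resp.\ $W(0)$) whenever $\eta$ lies outside the threshold region. The way to do this is to factor the numerator of $W'$ after the reparametrization and verify it has no admissible root under the stated inequalities --- conceptually routine but algebraically the heaviest step, since it requires carrying the closed-form expressions for $D$ and $V$ through the change of variable without error.
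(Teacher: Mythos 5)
First, note that the paper contains no proof of Proposition~\ref{Proposition UY}: it is imported verbatim as Corollary~5 of \citet{uiyoshizawa2015} (see the footnote where it is stated), so there is no in-paper argument to compare against, and your reconstruction is the right skeleton. Your structural claims also check out: in the unique linear equilibrium $a_i=\frac{\beta}{1-\alpha}\cdot\frac{(1-\alpha)\tau_x\tilde\theta_i+\tau\,\E[\theta|\tilde\theta]}{(1-\alpha)\tau_x+\tau}$ (with $\tau_x$ the exogenous private precision) one gets $D(\tau)=\beta^2\tau_x/((1-\alpha)\tau_x+\tau)^2$, strictly decreasing to $0$, and $V(\tau)=\frac{\beta^2}{(1-\alpha)^2}\left(\tau_\theta^{-1}-\frac{2(1-\alpha)\tau_x+\tau}{((1-\alpha)\tau_x+\tau)^2}\right)$, strictly increasing to $\beta^2/((1-\alpha)^2\tau_\theta)$.

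The gap is in your treatment of the mixed-sign region. Differentiating, $W'(\tau)\propto \eta\tau+\left(3\eta(1-\alpha)-2\zeta(1-\alpha)^2\right)\tau_x$, which is affine in $\tau$ with slope of the sign of $\eta$. Your claim that for $\eta>\max\{0,(1-\alpha)\zeta/2\}$ ``the volatility gain dominates\dots so $W'>0$'' is false on the band $(1-\alpha)\zeta/2<\eta<2(1-\alpha)\zeta/3$ (nonempty whenever $\zeta>0$): there $W'$ has an admissible root and $W$ is U-shaped in $\tau$, so the repair you propose in your last paragraph --- verify that the numerator of $W'$ has no admissible root --- also fails on that band. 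The two thresholds in fact arise from two different mechanisms rather than from root-counting of a single polynomial. For $\eta<\min\{0,2(1-\alpha)\zeta/3\}$ both the slope and the intercept of $W'$ are negative for every $\tau_x$, so $W$ is globally decreasing and no disclosure is uniquely optimal; this is where $2(1-\alpha)\zeta/3$ comes from. For $\eta>0$ the correct argument is that any interior critical point of $W$ is a \emph{minimum} (the slope of $W'$ is positive), hence the maximum lies at an endpoint, and the endpoint comparison
$W(\infty)-W(\tau_\theta)=\frac{\beta^2}{((1-\alpha)\tau_x+\tau_\theta)^2}\left[\tau_x\!\left(\frac{2\eta}{1-\alpha}-\zeta\right)+\frac{\eta\tau_\theta}{(1-\alpha)^2}\right]$
is positive for every $\tau_x$ precisely when $\eta>(1-\alpha)\zeta/2$; this is where the other threshold comes from (and its failure for large $\tau_x$ when $0<\eta<(1-\alpha)\zeta/2$ delivers the ``depends on the private precision'' clause). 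With the endpoint comparison substituted for the no-root check, your plan goes through.
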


Even if the precision of private information is endogenously determined, no disclosure can be optimal as long as the cost is convex in the precision. 
To be more specific, 
assume that an agent chooses the precision of private information before receiving a private signal and that its cost is a convex function \citep{colombofemminis2014}. 
Information acquisition is rigid in the sense that the distribution is restricted to be normal. 
\citet{ui2022} studies optimal disclosure in this case and finds that no disclosure is optimal if $\eta$ is small enough.  

In the model of rigid information acquisition, more precise public information also reduces the cost of information, but the reduction cannot be substantial enough to make no disclosure suboptimal. 
To see why, recall that the cost of information is assumed to be convex in the precision of private information. 
This implies that the precision is bounded above because the marginal cost is increasing and the marginal benefit approaches zero as the precision goes to infinity. 
Consequently, the cost of private information cannot be so substantial. 
In contrast, the precision and the cost of private information are unbounded in the model of flexible information acquisition, thus making no disclosure suboptimal.



\section{Applications}\label{Cournot}


\subsection{Cournot and investment games}\label{Cournot games}

Suppose that $h(a,\theta)=0$ in \eqref{payoff function cont}, and let $v(a,\theta)=\int u_i(a,\theta)di$ be the aggregate payoff. 
Then, it can be readily shown that $\E[v(a,\theta)]=\var[a_i]$; that is, $\zeta=\eta=1$ in \eqref{welfare 1}. Thus, the expected welfare equals the variance of an individual action minus the cost of information. 

A Cournot game \citep{vives1988} is a special case when $\alpha<0$.
Firm $i$ produces $a_i$ units of a homogeneous product at a quadratic cost $a_i^2/2$. 
An inverse demand function is $\theta-\delta \int a_jdj$,  
where $\delta>0$ is constant and $\theta$ is normally distributed. 
Then, firm $i$'s gross profit excluding the cost of information is 
\begin{equation}
(\theta-\delta A)a_i- a_i^2/2, \notag
\end{equation}
which is reduced to \eqref{payoff function cont} with $\alpha=-\delta$ and $\beta=1$ by normalization.

An investment game \citep{angeletospavan2004} is also a special case when $\alpha>0$. 
Firm $i$ chooses an investment level $a_i$ at a quadratic cost $a_i^2/2$ and receives a return $(r A+(1-r)\theta)a_i$, where $r\in (0,1)$ is constant. 
Then, firm $i$'s gross profit excluding the cost of information is 
\begin{equation}
(r A+(1-r)\theta)a_i- a_i^2/2, \notag
\end{equation}
which is reduced to \eqref{payoff function cont 1} with $\alpha=r$ and $\beta=1-r$ by normalization.

To provide a benchmark for our result, assume that private information is exogenous. 
In a Cournot game, the total profit can decrease with public information, and no disclosure can be firm optimal \citep{morrisbergemann2013}. In an investment game, the total profit necessarily increases with public information \citep{angeletospavan2004}. 
These results are formally stated as follows \citep{uiyoshizawa2015}. 

\begin{proposition}
Assume that the precision of private information is exogenously fixed and that $\zeta=\eta=1$. 
If $\alpha\geq -1/2$, welfare necessarily increases with public information. 
If $\alpha< -1/2$, welfare can decrease with public information. 
If $\alpha< -1$, no disclosure is optimal when private information is sufficiently precise. 
\end{proposition}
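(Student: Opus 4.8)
The plan is to recognize the statement as the $\zeta=\eta=1$ specialization of the Ui--Yoshizawa analysis of exogenous private information, and to supply the extra bookkeeping for the monotonicity claims, which are not literally contained in Proposition \ref{Proposition UY} (that proposition only records the optimum, not the direction of change). First I would write down the linear equilibrium of the static game when each agent observes the public signal $\tilde\theta$ (with $\var[\theta\mid\tilde\theta]=1/\tau$) and a conditionally independent private signal $x_i=\theta+\varepsilon_i$, $\var[\varepsilon_i]=1/\tau_i$. Using $a_i=\E[\alpha A+\beta\theta\mid\tilde\theta,x_i]$, the affine guess $a_i=g_0+g_1\tilde\theta+g_2 x_i$, and aggregation $A=g_0+g_1\tilde\theta+g_2\theta$, matching coefficients gives $g_1+g_2=\beta/(1-\alpha)$ and $g_2=\beta\tau_i/(\tau+(1-\alpha)\tau_i)$ (the intercept is a constant, irrelevant to the variances), so that $a_i-A=g_2\varepsilon_i$ and
\[
\var[a_i-A]=\frac{\beta^2\tau_i}{(\tau+(1-\alpha)\tau_i)^2},\qquad
\var[A]=\frac{\beta^2}{(1-\alpha)^2}\Big(\frac{1}{\tau_\theta}-\frac1\tau\Big)+\frac{\beta^2\tau_i^2}{\tau(\tau+(1-\alpha)\tau_i)^2}.
\]
These coincide with Ui and Yoshizawa's expressions, so one may import them rather than re-derive.

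Next I would substitute into the welfare $W=\zeta\var[a_i-A]+\eta\var[A]+\mathrm{const}$ — there is no cost term, since private information is free — differentiate in $\tau$, clear the positive denominators, and simplify; the numerator, a priori cubic in $\tau$, collapses, and writing $s\equiv 1-\alpha>0$ one obtains
\[
\operatorname{sign}\frac{dW}{d\tau}=\operatorname{sign}\big(\eta\tau+s\tau_i(3\eta-2s\zeta)\big).
\]
Setting $\zeta=\eta=1$ this becomes $\operatorname{sign}\big(\tau+(1-\alpha)(1+2\alpha)\tau_i\big)$. If $\alpha\ge-1/2$ then $(1-\alpha)(1+2\alpha)\ge0$, so $dW/d\tau>0$ for all $\tau>0$: welfare is strictly increasing in the precision of public information. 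If $\alpha<-1/2$ then $(1-\alpha)(1+2\alpha)<0$, so $dW/d\tau<0$ on $(0,\hat\tau)$ with $\hat\tau\equiv(1-\alpha)|1+2\alpha|\tau_i$; whenever $\tau_i$ is large enough that $\hat\tau>\tau_\theta$, there is a nontrivial range of feasible precisions over which welfare strictly decreases, which gives the second claim.

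For the last claim, note that $\alpha<-1$ still falls under $\alpha<-1/2$, so once $\hat\tau>\tau_\theta$ the function $W$ is strictly decreasing on $[\tau_\theta,\hat\tau]$ and strictly increasing on $[\hat\tau,\infty)$; the unique interior critical point is a minimum, hence on $[\tau_\theta,\infty]$ the maximum of $W$ is attained at an endpoint, and it suffices to compare no disclosure $W(\tau_\theta)$ with full disclosure $W(\infty)$. Substituting $\tau=\tau_\theta$ and $\tau\to\infty$ into the formulas above (the factor $\beta^2>0$ divides out) gives, after simplification,
\[
W(\tau_\theta)-W(\infty)=\frac{\tau_i(\alpha^2-1)-\tau_\theta}{(1-\alpha)^2\big(\tau_\theta+(1-\alpha)\tau_i\big)^2},
\]
whose numerator is positive precisely when $\tau_i>\tau_\theta/(\alpha^2-1)$; since $\alpha^2-1>0$ for $\alpha<-1$, this holds for all sufficiently precise private information (which also makes $\hat\tau>\tau_\theta$), so $\tau^*=\tau_\theta$ and no disclosure is optimal.

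The routine part is the Gaussian bookkeeping behind the displayed formulas. The one genuine obstacle is the endpoint comparison: as $\tau_i\to\infty$ both $W(\tau_\theta)$ and $W(\infty)$ converge to the same full-information value $\eta\beta^2/((1-\alpha)^2\tau_\theta)+\mathrm{const}$, so the sign of their difference is governed by the lower-order term $\tau_i(\alpha^2-1)-\tau_\theta$; it is essential to keep the algebra exact rather than pass to the limit prematurely, since that is what makes the coefficient $\alpha^2-1$, and hence the threshold $\alpha=-1$, emerge. It would also be worth recording explicitly the mild side conditions ($\hat\tau>\tau_\theta$ and $\tau_i$ above the stated bound) under which ``sufficiently precise private information'' is to be read.
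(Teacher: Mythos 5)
Your proposal is correct. Note, however, that the paper itself offers no proof of this proposition: it is imported verbatim from Ui and Yoshizawa (2015) (the surrounding text and the footnote point to their characterization of the social value of public information), so there is nothing in the appendices to compare against line by line. What you have done is supply a self-contained direct verification, and the computations check out: the linear equilibrium coefficients $g_2=\beta\tau_i/(\tau+(1-\alpha)\tau_i)$ and $g_1+g_2=\beta/(1-\alpha)$ are right; writing $s=1-\alpha$ and $M=\tau+s\tau_i$, clearing $s^2\tau M^3>0$ from $dW/d\tau$ does collapse the numerator to $\tau\bigl(\eta\tau+s\tau_i(3\eta-2s\zeta)\bigr)$, which at $\zeta=\eta=1$ gives $\tau+(1-\alpha)(1+2\alpha)\tau_i$ and hence the threshold $\alpha=-1/2$; and the endpoint comparison $W(\tau_\theta)-W(\infty)\propto\tau_i(\alpha^2-1)-\tau_\theta$ is exact, correctly producing the threshold $\alpha=-1$ together with the quantifier ``when private information is sufficiently precise.'' Your argument that the unique interior critical point is a minimum, so the optimum is at an endpoint, is the right way to reduce the optimality claim to that comparison, and your results are consistent with Proposition~\ref{Proposition UY} as specialized to $\zeta=\eta=1$ (full disclosure for $\alpha>-1$, ambiguity resolved by $\tau_i$ for $\alpha\le -1$). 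The only thing I would add for completeness is the observation you already flag in passing: for the second claim, ``can decrease'' requires $\hat\tau>\tau_\theta$, which is a restriction on $(\tau_i,\tau_\theta)$ and not on $\alpha$ alone, so it should be stated as ``for some feasible precisions of private and prior information,'' exactly as the word ``can'' in the proposition intends.
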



In brief, more precise public information can reduce the total profit if and only if the game exhibits strong strategic {substitutability}, in which case no disclosure can be optimal (see Figure~\ref{exogenous graph}). 
This result is essentially the same as in the case of rigid information acquisition \citep{ui2022}.\footnote{Under rigid information acquisition with strictly convex information costs (strictly convex in the precision of private information), more precise public information can reduce the total profit if and only if the game exhibits strong strategic {substitutability}, where the slope of the best response is smaller than in the case of exogenous private information.} 


\begin{figure} 
  \centering
  \subfloat[$\alpha <-1$.]{\label{fig: 2exv}
    \includegraphics[width=4cm, bb=0 0 675 419]{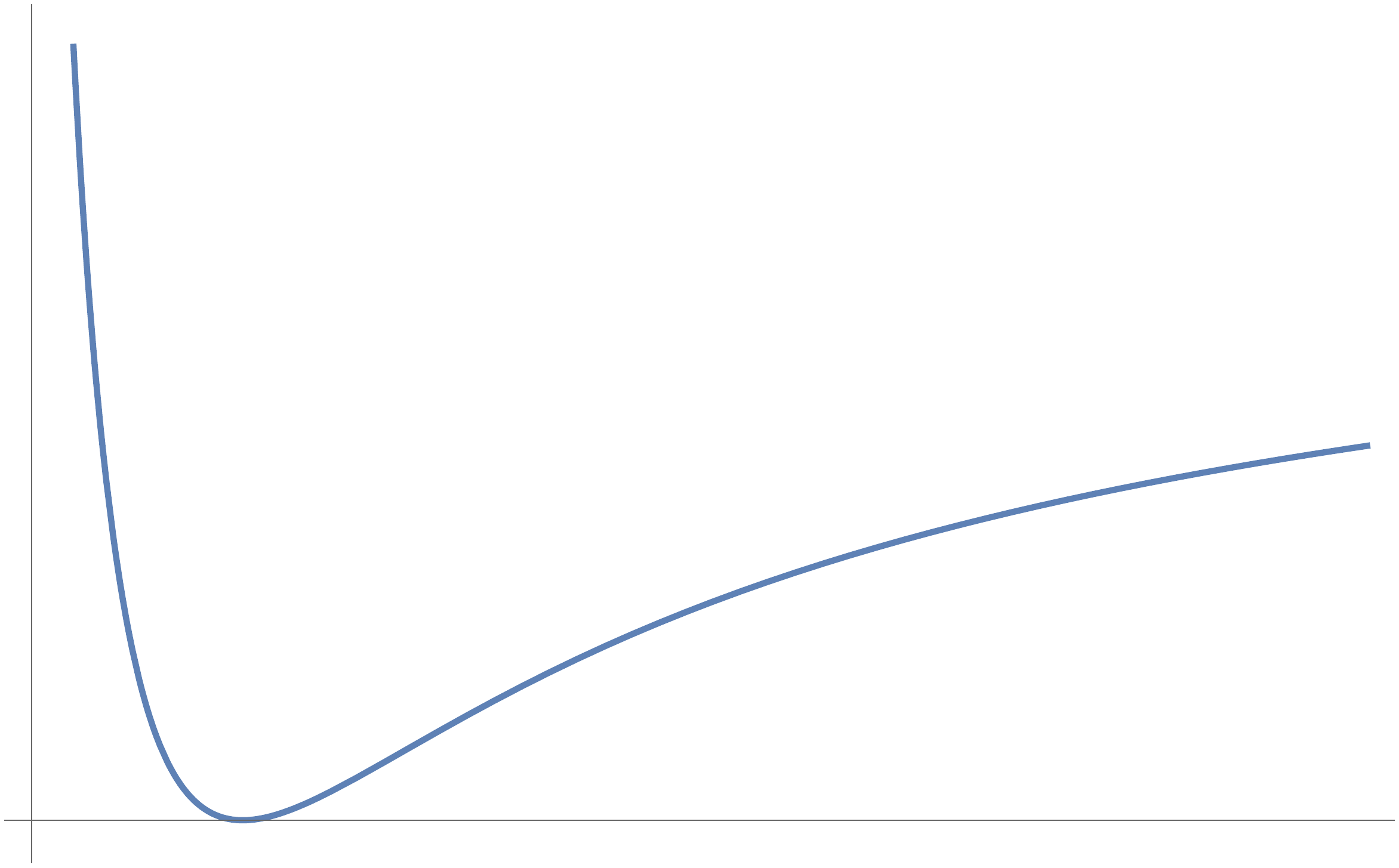}}
\qquad \ 
  \subfloat[$\alpha > 1/2$.]{\label{fig: 34exv}
    \includegraphics[width=4cm, bb=0 0 675 419]{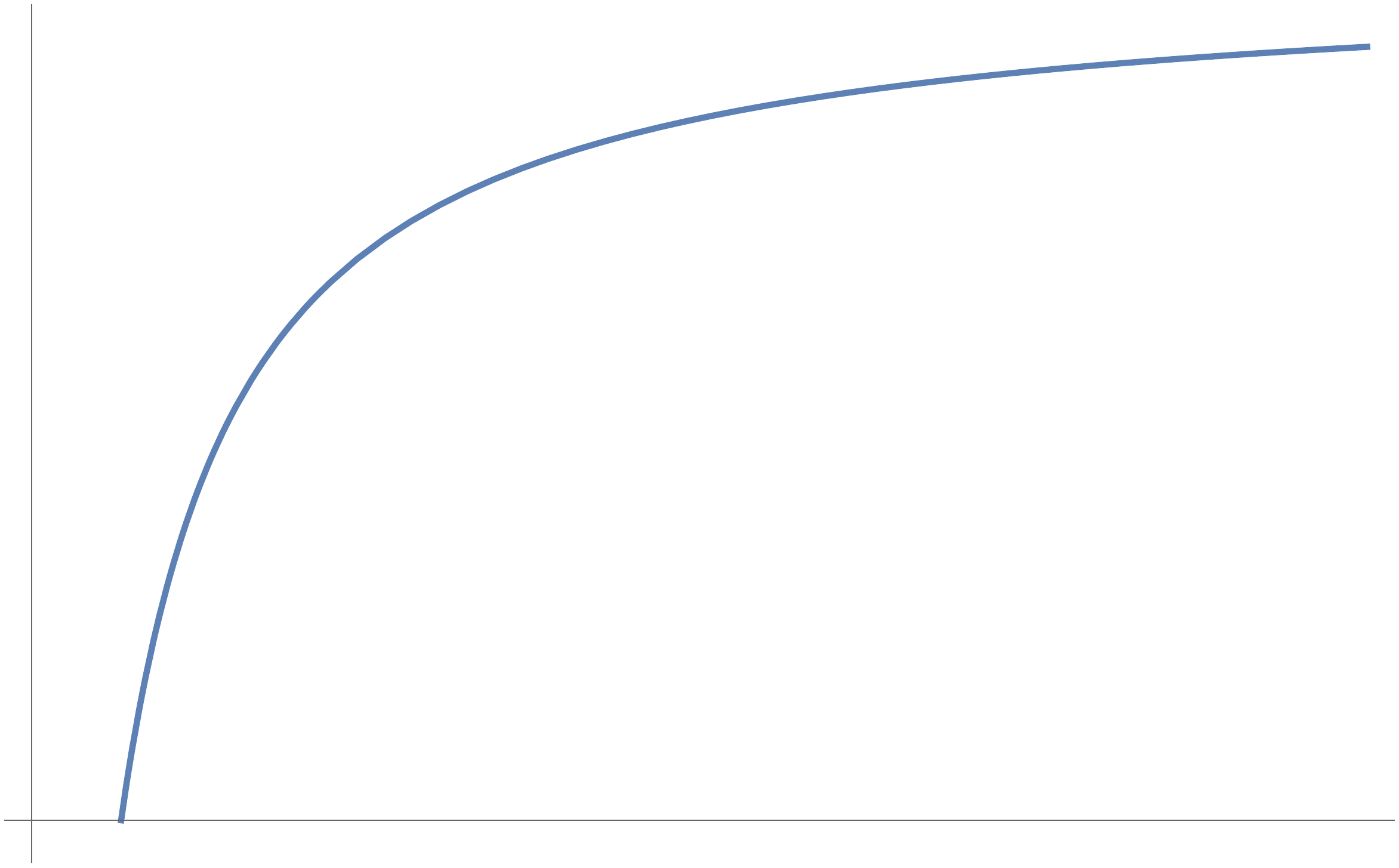}}
  \caption{The total profit ($=$ the variance) as a function of $\tau$ in the case of exogenous private information.} 
  \label{exogenous graph}
\end{figure}

We provide a contrasting result in the case of flexible information acquisition as a corollary of Propositions \ref{main proposition 3} and \ref{main proposition opt}. 
That is, more precise public information can reduce the total profit if and only if the game exhibits strong strategic {\em complementarity}, while full disclosure is always optimal. 
In other words, the total profit necessarily increases with public information in a Cournot game, whereas it can decrease with public information in an investment game. 

\begin{corollary}\label{cournot proposition}
Assume that $\zeta=\eta=1$.  
If $\alpha\leq 1/2$, then $\overline{W}_+(\tau)$ is increasing. 
If $\alpha>1/2$, then $\overline{W}_+(\tau)$ is increasing for $\tau< f(0)$ and decreasing for $\tau\in (f(0),\bar\tau)$.  
In both cases, full disclosure is optimal. 
\end{corollary}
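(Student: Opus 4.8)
The plan is to specialize Propositions~\ref{main proposition 3} and~\ref{main proposition opt} to the case $\zeta=\eta=1$ and exploit the resulting identity
$\zeta-(1-2\alpha)\eta/(1-\alpha)^2 = 1-(1-2\alpha)/(1-\alpha)^2 = \alpha^2/(1-\alpha)^2$.
First I would record the elementary consequences of this identity: the quantity $\alpha^2/(1-\alpha)^2$ is $\le 1$ precisely when $\alpha\le 1/2$, so by \eqref{def delta} we get $\gamma_+^*=0$ when $\alpha\le 1/2$ and $\gamma_+^*=1-(1-\alpha)^2/\alpha^2=(2\alpha-1)/\alpha^2>0$ when $\alpha>1/2$. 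The key computation, to be carried out once at the outset, is that $f(\gamma_+^*)=f(0)$ in both cases: it is trivial when $\gamma_+^*=0$, and for $\alpha>1/2$ one checks directly that $1-\gamma_+^*=(1-\alpha)^2/\alpha^2$ and $1-\alpha\gamma_+^*=(1-\alpha)/\alpha$, hence $(1-\alpha\gamma_+^*)^2=1-\gamma_+^*$, i.e.\ $f(\gamma_+^*)=2\beta^2/\lambda=f(0)$. Since $\gamma_+^*=(2\alpha-1)/\alpha^2>(2\alpha-1)/\alpha$ (as $\alpha^2<\alpha$), Lemma~\ref{def of gamma} then forces $\gamma_+^*=\overline\phi(f(0))$ while $\underline\phi(f(0))=0$, and moreover $\overline\phi(\bar\tau)=(2\alpha-1)/\alpha$.

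For $\alpha\le 1/2$: here $\bar\tau=f(0)$, so the domain of $\overline W_+$ is $[\tau_\theta,f(0)]$ and $\Gamma(\tau)=\{\overline\phi(\tau)\}$ is a singleton throughout. For $\tau<f(0)$ one has $\overline\phi(\tau)>0$, hence $1/(1-\overline\phi(\tau))>1\ge \alpha^2/(1-\alpha)^2$, so the bracketed expression in \eqref{key eq welfare} is strictly negative; Proposition~\ref{main proposition 3} then yields $\partial\overline W_+/\partial\tau>0$, and by continuity $\overline W_+$ is strictly increasing on $[\tau_\theta,f(0)]$.

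For $\alpha>1/2$ the argument splits at $f(0)$. On $(\tau_\theta,f(0))$, $\Gamma(\tau)=\{\overline\phi(\tau)\}$ and Lemma~\ref{lemma: crowding-out} gives $\overline\phi'(\tau)<0$ on all of $(\tau_\theta,\bar\tau)$; since $\overline\phi(f(0))=\gamma_+^*$, this forces $\overline\phi(\tau)>\gamma_+^*$ for $\tau<f(0)$, so $W_+'(\overline\phi(\tau))<0$ by Lemma~\ref{gamma star lemma} and the chain rule gives $\tfrac{d}{d\tau}\overline W_+(\tau)=W_+'(\overline\phi(\tau))\,\overline\phi'(\tau)>0$ (equivalently, apply Proposition~\ref{main proposition 3} directly). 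On $(f(0),\bar\tau)$, $\Gamma(\tau)=\{\underline\phi(\tau),\overline\phi(\tau)\}$ with $\underline\phi(\tau)\le(2\alpha-1)/\alpha\le\overline\phi(\tau)$; monotonicity of $\overline\phi$ together with $\overline\phi(f(0))=\gamma_+^*$ and $\overline\phi(\bar\tau)=(2\alpha-1)/\alpha$ confines $\overline\phi(\tau)$ to $\bigl((2\alpha-1)/\alpha,\gamma_+^*\bigr)$, and since $(2\alpha-1)/\alpha<\gamma_+^*$ both information fractions in $\Gamma(\tau)$ lie strictly below $\gamma_+^*$, where $W_+$ is increasing by Lemma~\ref{gamma star lemma}. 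Hence the sender-optimal equilibrium is $\overline\phi(\tau)$ and $\overline W_+(\tau)=W_+(\overline\phi(\tau))$; as $\overline\phi$ is decreasing and $W_+$ is increasing on $[0,\gamma_+^*)$, the composition is strictly decreasing, so $\overline W_+$ decreases on $(f(0),\bar\tau)$. I expect this multiple-equilibria region to be the only delicate point: one must correctly identify which branch of $\Gamma(\tau)$ attains $\overline W_+(\tau)$, and this rests entirely on the boundary values $\overline\phi(f(0))=\gamma_+^*$ and $\overline\phi(\bar\tau)=(2\alpha-1)/\alpha$ established in the first step.

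Finally, optimality of full disclosure follows from Proposition~\ref{main proposition opt}. We have $\eta=1>0$ and $\tau_+^*=f(\gamma_+^*)=f(0)>\tau_\theta$ (the standing assumption), so only $\chi<0$ remains to be checked. Substituting $\zeta=\eta=1$ gives $\chi=-2/(1-\alpha)-\log(1-\gamma_+^*)^{-1}$, which equals $-2/(1-\alpha)<0$ when $\alpha\le 1/2$ (since then $\gamma_+^*=0$) and $-2/(1-\alpha)-2\log\bigl(\alpha/(1-\alpha)\bigr)<0$ when $\alpha>1/2$ (using $1-\gamma_+^*=(1-\alpha)^2/\alpha^2$ and $\alpha/(1-\alpha)>1$). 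In either case $\chi<0$ and $\eta>0$, so Proposition~\ref{main proposition opt} gives $\tau^*=\infty$, i.e.\ full disclosure is optimal.
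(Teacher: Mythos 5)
Your proposal is correct and follows essentially the same route as the paper's proof: compute $\gamma_+^*=0$ for $\alpha\le 1/2$ and $\gamma_+^*=(2\alpha-1)/\alpha^2$ for $\alpha>1/2$, verify $f(\gamma_+^*)=f(0)$, deduce the monotonicity of $\overline W_+=W_+\circ\overline\phi$ from the position of $\overline\phi(\tau)$ relative to $\gamma_+^*$ (selecting the $\overline\phi$ branch in the multiple-equilibria region because both roots lie below $\gamma_+^*$ where $W_+$ is increasing), and check $\chi=-2/(1-\alpha)-\log(1-\gamma_+^*)^{-1}<0$ with $\eta>0$ to invoke Proposition~\ref{main proposition opt}. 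If anything, you spell out more carefully than the paper the increasing part on $(\tau_\theta,f(0))$ when $\alpha>1/2$ and the boundary values $\overline\phi(f(0))=\gamma_+^*$, $\overline\phi(\bar\tau)=(2\alpha-1)/\alpha$.
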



\begin{figure} 
  \centering
  \subfloat[The gross profit ($\alpha <-1$).]{\label{fig: 2v}
    \includegraphics[width=4cm, bb=0 0 675 419]{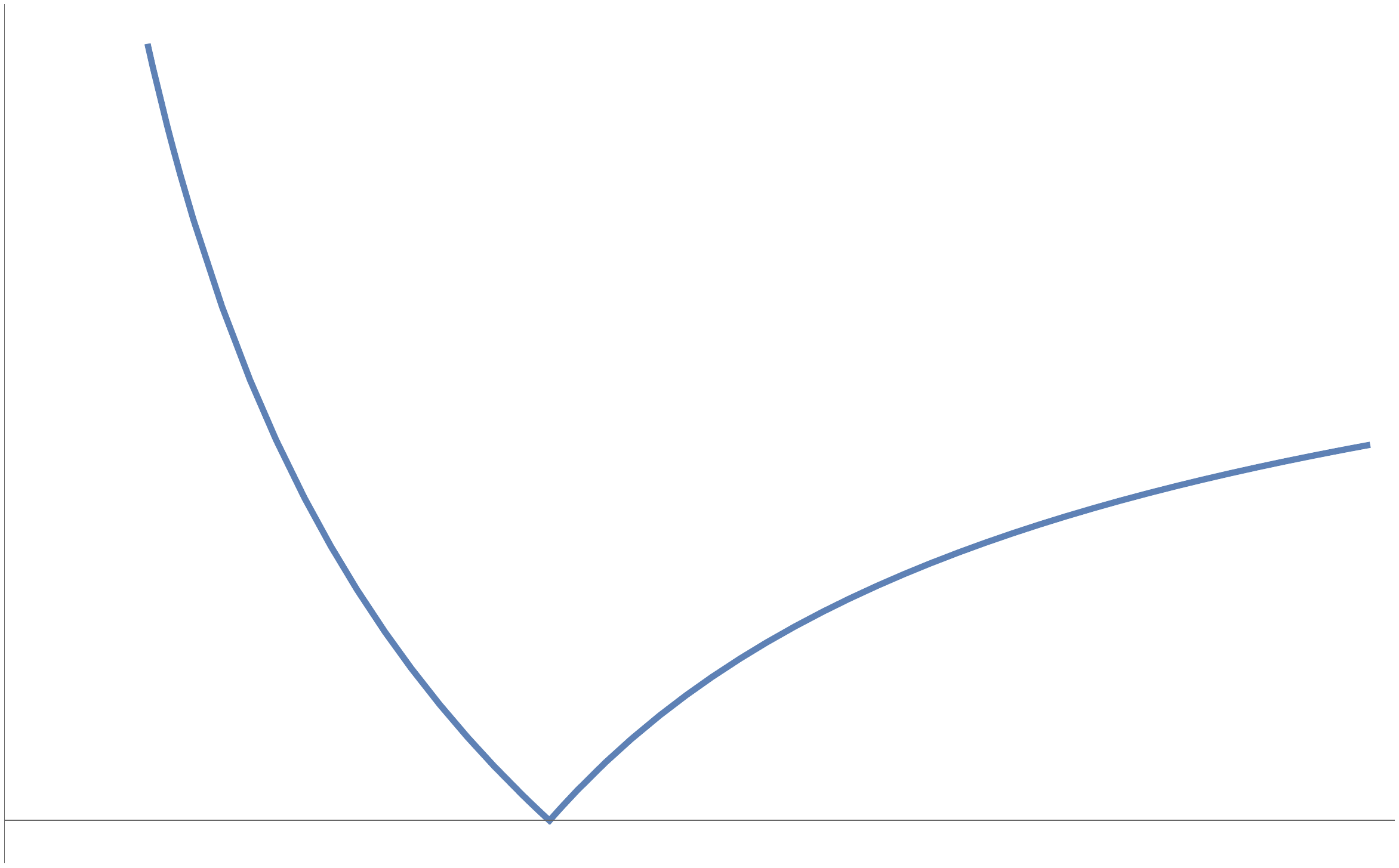}}
\qquad \ 
  \subfloat[The gross profit ($\alpha>1/2)$.]{\label{fig: 34v}
    \includegraphics[width=4cm, bb=0 0 675 419]{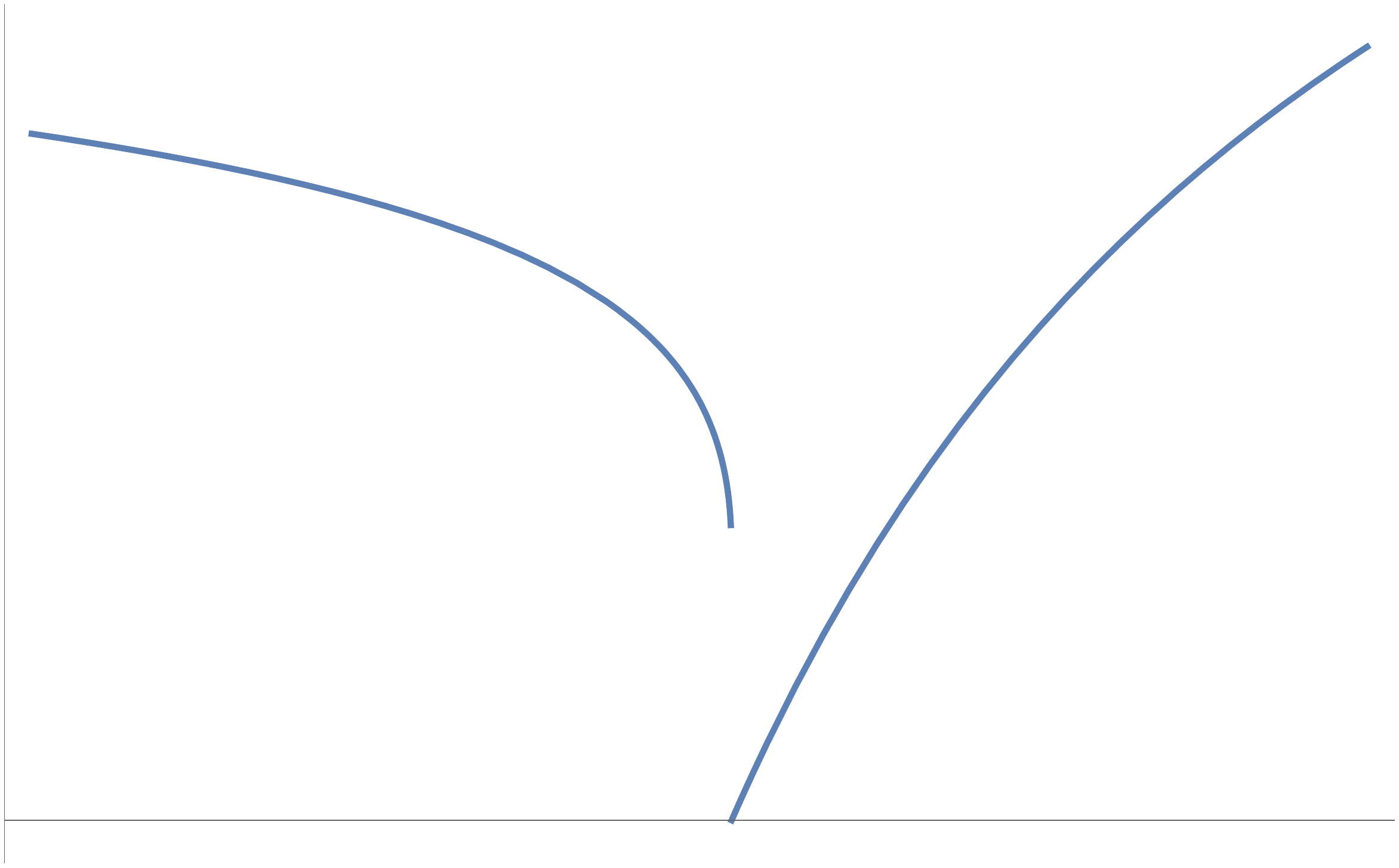}}\\
  \subfloat[The total profit ($\alpha <-1$).]{\label{fig: 2w}
    \includegraphics[width=4cm, bb=0 0 675 419]{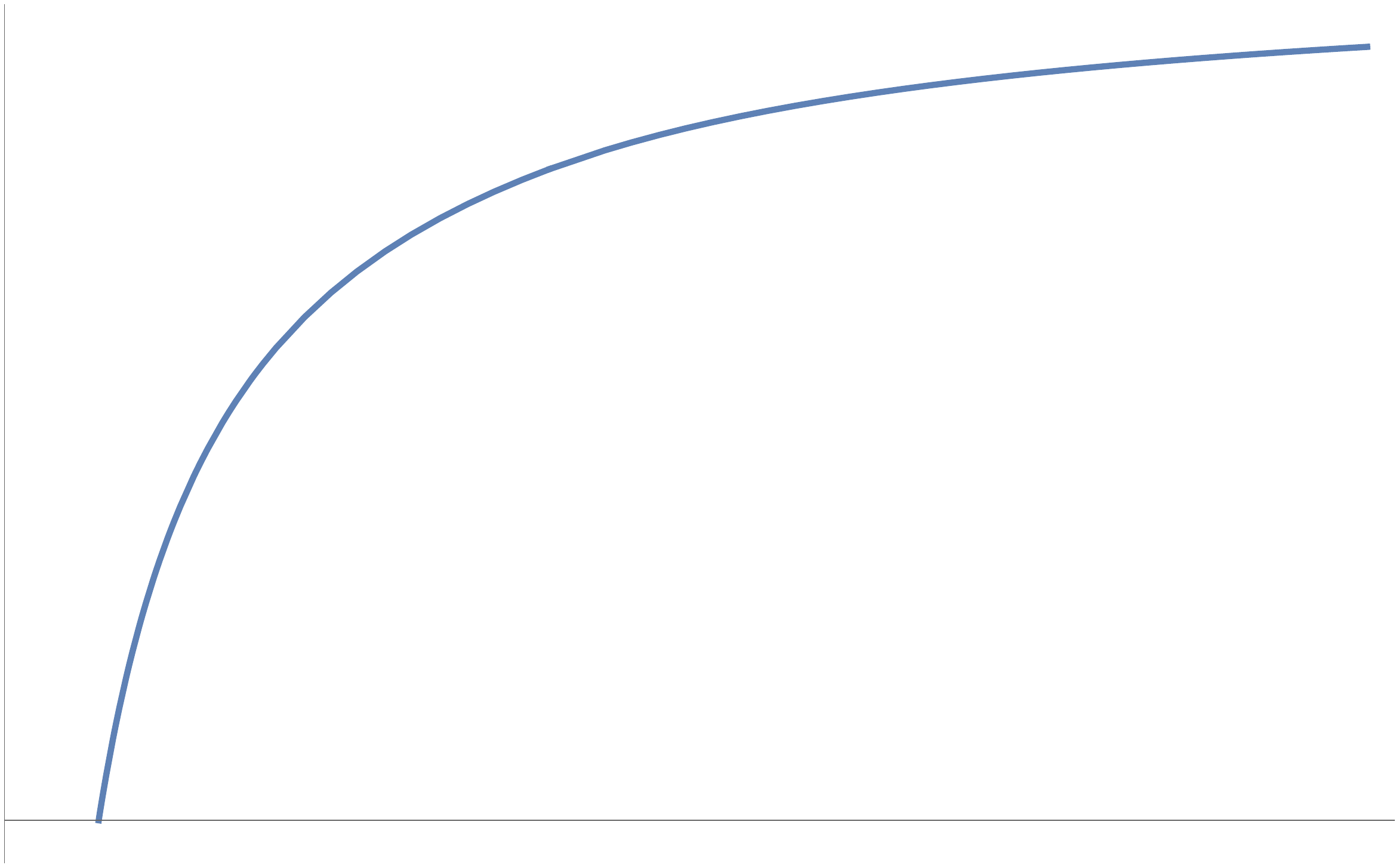}}
\qquad \ 
  \subfloat[The total profit ($\alpha > 1/2$).]{\label{fig: 34w}
    \includegraphics[width=4cm, bb=0 0 675 419]{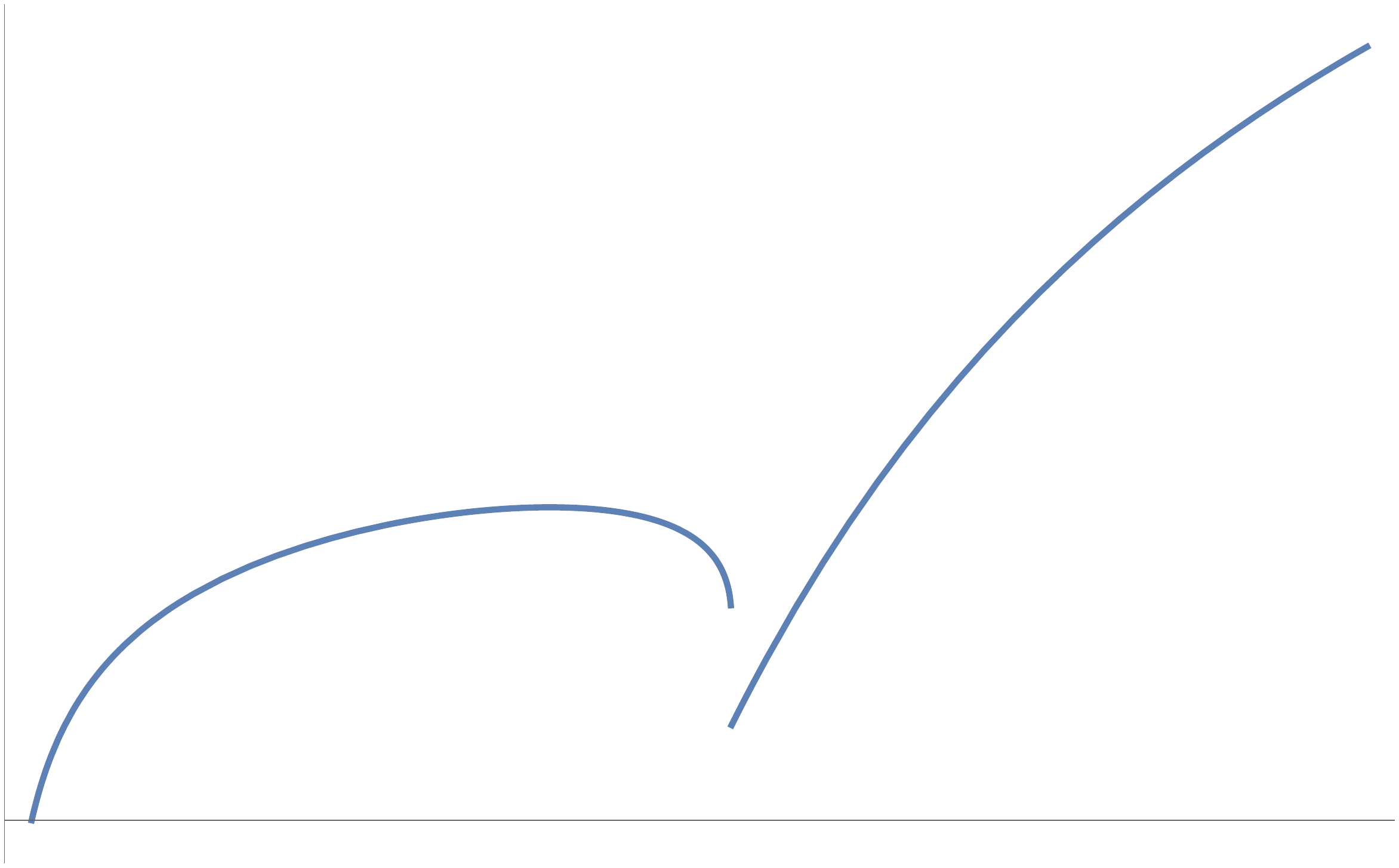}}
  \caption{The gross profit and total profit as functions of $\tau$. In the case of multiple equilibria, the maximum value is plotted. When $\alpha>1/2$, the gross profit is disconstinuous at $\tau=\bar\tau$.} 
  \label{flexible graph}
\end{figure}

The difference in the above results arises from the crowding-out effect enhanced by strategic complementarity. 
Recall that the total profit equals the gross profit 
minus the cost of information. 
When the policymaker provides more precise public information,  the gross profit decreases if $\tau<\bar\tau$ 
and increases if $\tau>\bar\tau$ (see Figures \ref{fig: 2v} and \ref{fig: 34v}), while the cost of information decreases if $\tau<\bar\tau$ and vanishes if $\tau>\bar\tau$. 
This is because the crowding-out effect reduces private information if $\tau<\bar\tau$ and disappears if $\tau>\bar\tau$.  
Since the cost reduction is substantial when $\tau$ is very small, 
the total profit increases when $\tau$ is small enough as well as large enough, thus making full disclosure optimal (see Figures \ref{fig: 2w} and \ref{fig: 34w}).   
However, when the degree of strategic complementarity is sufficiently strong (i.e., $\alpha> 1/2$) and the cost reduction is modest (i.e., $\tau$ is close to $\bar \tau$), a decrease in the gross profit is so substantial that the total profit decreases (see Figure \ref{fig: 34w}).

\subsection{Beauty contest games}\label{Beauty contest games}

We consider a beauty contest game \citep{morrisshin2002}, where 
$\alpha =r\in(0,1)$, $\beta=1-r$, and $v(a,\theta)-(a_i-\theta)^2$. 
An agent's target is the weighted mean of the state and the aggregate action, $(1-r) \theta+rA$. 
The welfare is the negative of the mean squared error of an individual action from the state minus the cost of information, $-\E[(a_i-\theta)^2]-C(a_i)$. 
Because the welfare has a negative value, full disclosure attains the maximum welfare and is optimal. 
It can be readily shown that $\zeta=1+r$ and $\eta=1-r$ in \eqref{welfare 1}.

In the case of exogenous private information, the following result is well known: if the degree of strategic complementarity is strong enough, more precise public information can be harmful to welfare. 
\begin{proposition}[Morris and Shin, 2002]\label{cournot proposition  exogenous}
Assume that the precision of private information is exogenously fixed  
and that $(\zeta,\eta)=(1+r,1-r)$. 
If $r>1/2$, welfare can decrease with public information. 
\end{proposition}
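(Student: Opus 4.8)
The plan is to reduce Proposition~\ref{cournot proposition  exogenous} to the already-established characterization in Proposition~\ref{Proposition UY} by a direct substitution of the beauty-contest coefficients $(\zeta,\eta)=(1+r,1-r)$ and $\alpha=r$. Recall that Proposition~\ref{Proposition UY} states that, under exogenous private information, $\tau^*=\infty$ (full disclosure, i.e.\ welfare increasing in $\tau$ for all $\tau$) whenever $\eta>\max\{0,(1-\alpha)\zeta/2\}$, and $\tau^*=0$ (no disclosure optimal) whenever $\eta<\min\{0,2(1-\alpha)\zeta/3\}$, with the intermediate region depending on the precision of private information. Since here $\eta=1-r>0$ for $r\in(0,1)$, the ``no disclosure'' branch is vacuous, so the only question is whether we are in the full-disclosure region $\eta>(1-\alpha)\zeta/2$ or in the intermediate region $0<\eta\le(1-\alpha)\zeta/2$, and in the latter region welfare need not be monotone in $\tau$, so ``welfare can decrease with public information'' holds.

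First I would plug in the parameters: with $\alpha=r$, $\zeta=1+r$, $\eta=1-r$, the threshold condition $\eta>(1-\alpha)\zeta/2$ becomes $1-r>(1-r)(1+r)/2$, i.e.\ (dividing by the positive quantity $1-r$) $1>(1+r)/2$, i.e.\ $r<1$. So for all $r\in(0,1)$ we actually have $\eta\ge(1-\alpha)\zeta/2$ with equality only in the limit $r\to1$; more precisely $\eta-(1-\alpha)\zeta/2=(1-r)(1-r)/2=(1-r)^2/2>0$. Wait --- this would put us always in the full-disclosure region, contradicting the claim. The resolution is that the strict inequality $\eta>(1-\alpha)\zeta/2$ in Proposition~\ref{Proposition UY} is the condition for full disclosure to be the \emph{globally} optimal precision, which is not the same as welfare being monotone increasing in $\tau$; the correct object to examine for ``welfare can decrease with public information'' is the sign of $d\overline W/d\tau$ along the range of feasible $\tau$, and in the exogenous-information model this derivative can change sign even when the global optimum is at $\tau=\infty$, provided the private precision is large enough. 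So the second step is to go back to the underlying expression for expected welfare under exogenous private information (as derived in \citet{uiyoshizawa2015}) and compute $d\overline W/d\tau$ as a function of $\tau$ and the fixed private precision $\tau_i$, then show that for $(\zeta,\eta)=(1+r,1-r)$ and $r>1/2$ there is a subinterval of $\tau$ on which this derivative is negative (for suitable $\tau_i$, or for all $\tau_i$ --- whichever the precise Morris--Shin statement requires).

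The key steps, in order, are: (1) write down the closed form of expected welfare $\overline W(\tau)=\zeta\,\mathrm{Disp}(\tau,\tau_i)+\eta\,\mathrm{Vol}(\tau,\tau_i)$ in the exogenous-information LQG game, using the standard Gaussian-updating formulas for $\var[a_i-A]$ and $\var[A]$ in terms of $\tau_\theta$, $\tau$, $\tau_i$, and $\alpha$; (2) differentiate with respect to $\tau$ and collect terms, obtaining a rational expression whose numerator is a polynomial in $\tau$; (3) substitute $\alpha=r$, $\zeta=1+r$, $\eta=1-r$ and analyze the sign of that numerator --- showing that for $r>1/2$ the ``overreaction to public information'' term (the coefficient that is negative precisely when $\alpha$ exceeds a threshold) dominates on some range of $\tau$, making $d\overline W/d\tau<0$ there; (4) conclude that welfare is non-monotone, hence ``can decrease with public information.'' This is essentially the classical Morris--Shin computation specialized to the present notation, so I would lean on \citet{morrisshin2002} and \citet{uiyoshizawa2015} for the algebra rather than redo it in full.

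The main obstacle is bookkeeping: reconciling the two parametrizations (the Morris--Shin $(\alpha,\beta)$ versus the welfare coefficients $(\zeta,\eta)$ versus the best-response slope) and making sure the ``can decrease'' claim is interpreted correctly --- it is a statement about the derivative $d\overline W/d\tau$ changing sign over the feasible range of $\tau$, not about the location of the global optimum, and not about every value of the exogenous private precision. Once that interpretation is pinned down, the proof is a routine sign analysis of a low-degree rational function, and the threshold $r=1/2$ emerges from the vanishing of the relevant polynomial coefficient; the content of the proposition is exactly that this threshold is $1/2$, strictly below the threshold that would appear under rigid information acquisition and strictly above (in the opposite direction) nothing --- but that comparison is made elsewhere in the paper and need not be reproved here.
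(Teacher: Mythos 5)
The paper does not actually prove this proposition: it is imported from \citet{morrisshin2002} (restated in the welfare coordinates of \citet{uiyoshizawa2015}) and no argument for it appears in the appendices, so your proposal can only be judged on its own terms. On those terms it is sound, and the most valuable step is your self-correction. You are right that Proposition \ref{Proposition UY} cannot deliver the claim: with $(\zeta,\eta,\alpha)=(1+r,1-r,r)$ one has $\eta-(1-\alpha)\zeta/2=(1-r)^2/2>0$, so full disclosure is \emph{globally} optimal for every $r\in(0,1)$, and the proposition is a statement about the sign of $d\overline{W}/d\tau$ on an interior range of $\tau$ for a suitable exogenous private precision, not about the location of the optimum. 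What remains is exactly the classical computation you describe, and it is short enough to include rather than defer: folding the prior into the public signal (so that $\tau$ is the public posterior precision) and writing $\tau_i$ for the private signal precision, the equilibrium mean squared error is
\[
\E[(a_i-\theta)^2]=\frac{\tau+\tau_i(1-r)^2}{\bigl(\tau+\tau_i(1-r)\bigr)^2},
\]
whose derivative in $\tau$ has the sign of $\tau_i(1-r)(2r-1)-\tau$; hence welfare $=-\E[(a_i-\theta)^2]$ is decreasing in $\tau$ precisely when $\tau<(2r-1)(1-r)\tau_i$, a nonempty range of precisions if and only if $r>1/2$. Your sketch stops just short of this sign analysis and leans on the cited papers for it; since the paper itself offers no proof of the statement, that is a gap only relative to a self-contained standard, and the single display above closes it.
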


The harmful effect arises from the fact that agents place too much weight on public information, which is entailed by a coordination motive under strong strategic complementarity, and overreact to public information. 


In the case of flexible information acquisition,  welfare can decrease with public information even if $r<1/2$; that is, welfare is more likely to decrease. 


\begin{corollary}\label{beauty contest proposition}
Assume that $\alpha=r$ and $(\zeta,\eta)=(1+r,1-r)$. 
If $r> (3-\sqrt{5})/2\simeq 0.38$, then $\overline{W}_+(\tau)$ is deceasing for $\tau \in (\tau_+^*,\bar\tau)$, where $\tau_+^*=f({(r^2-3 r +1)}/{(r(r-2))})$. 
\end{corollary}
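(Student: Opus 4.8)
The plan is to specialize the general welfare analysis of Section~\ref{The welfare effect of public information} to the beauty-contest parameters $(\zeta,\eta)=(1+r,1-r)$ and $\alpha=r$, and then read off the claimed monotonicity of $\overline{W}_+(\tau)$ from Proposition~\ref{main proposition 3} together with the explicit formula for $\gamma_+^*$ in Lemma~\ref{gamma star lemma}. First I would compute the key quantity $\zeta - (1-2\alpha)\eta/(1-\alpha)^2$ appearing in \eqref{key eq welfare}: with $\zeta=1+r$, $\eta=1-r$, $\alpha=r$ and $(1-\alpha)^2=(1-r)^2$ this reduces to $1+r - (1-2r)/(1-r)$, which simplifies to a rational function of $r$ whose numerator is, up to sign, $r^2-3r+1$ evaluated against $(1-r)$; the algebra should give $\zeta-(1-2\alpha)\eta/(1-\alpha)^2 = (r^2 - 3r + 1)\cdot(\text{something})/(1-r)$, and one checks this exceeds $1$ precisely when $r>(3-\sqrt5)/2$, the smaller root of $r^2-3r+1=0$ in $(0,1)$.

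Next I would invoke Lemma~\ref{gamma star lemma}: when $\zeta-(1-2\alpha)\eta/(1-\alpha)^2>1$ — i.e.\ exactly when $r>(3-\sqrt5)/2$ — the welfare-maximizing information fraction is $\gamma_+^* = 1 - 1/\bigl(\zeta-(1-2\alpha)\eta/(1-\alpha)^2\bigr) > 0$, and plugging in the simplified expression yields $\gamma_+^* = (r^2-3r+1)/(r(r-2))$ after rationalizing (one must double-check the sign, since $r(r-2)<0$ and $r^2-3r+1<0$ on the relevant interval, so the ratio is positive and less than $1$). By Lemma~\ref{gamma star lemma}, $dW_+(\gamma)/d\gamma < 0$ for $\gamma>\gamma_+^*$, hence $W_+$ is strictly decreasing on $(\gamma_+^*,1)$. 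Then I would translate this back to $\tau$: since $\alpha=r>1/2$ fails in general (note $(3-\sqrt5)/2\approx0.38<1/2$), I have to be careful about which branch of $f$ is relevant, but for $r<1/2$ the map $f$ is strictly decreasing (Figure~\ref{fig3}a / Lemma~\ref{def of gamma}), so $\overline\phi$ is a decreasing bijection and $\overline{W}_+(\tau)=W_+(\overline\phi(\tau))$ is decreasing in $\tau$ on the interval where $\overline\phi(\tau)>\gamma_+^*$, i.e.\ for $\tau\in(f(\gamma_+^*),\bar\tau)=(\tau_+^*,\bar\tau)$ with $\tau_+^*=f(\gamma_+^*)$. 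For $r\ge1/2$ one argues the same way using Proposition~\ref{main proposition 3} directly, which already phrases the conclusion in terms of $\tau$ and $\overline\phi(\tau)$.

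The main obstacle I anticipate is purely the algebraic bookkeeping: verifying that $1+r-(1-2r)/(1-r)$, after being set greater than $1$, produces the threshold $r>(3-\sqrt5)/2$, and that $1 - 1/\bigl(1+r-(1-2r)/(1-r)\bigr)$ simplifies cleanly to $(r^2-3r+1)/(r(r-2))$. This requires combining fractions over the common denominator $(1-r)$, being vigilant about the sign of $r^2-3r+1$ (negative for $r\in((3-\sqrt5)/2,(3+\sqrt5)/2)$, so in particular on all of $((3-\sqrt5)/2,1)$), and confirming $\gamma_+^*\in(0,1)$ throughout — otherwise the formula for $\tau_+^*$ would be vacuous. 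A secondary point worth stating explicitly is that $\overline{W}_+(\tau)$ is well-defined (single equilibrium branch) on $(\tau_+^*,\bar\tau)$: when $r\le1/2$ this is immediate from Proposition~\ref{proposition 1}(i), and when $r>1/2$ the sender-optimal selection convention still yields $\overline{W}_+(\tau)=\max_{\gamma\in\Gamma(\tau)}W_+(\gamma)$, and since $W_+$ is single-peaked at $\gamma_+^*$ with both elements of $\Gamma(\tau)$ exceeding $\gamma_+^*$ (using $\underline\phi(\tau)\le(2\alpha-1)/\alpha\le\overline\phi(\tau)$ from Lemma~\ref{def of gamma} and checking $\gamma_+^*\le(2r-1)/r$), the maximum is attained at the smaller root and decreasing behaviour is preserved. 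Everything then follows by chaining these observations; no new estimates are needed beyond what Section~\ref{The welfare effect of public information} supplies.
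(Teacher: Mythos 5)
Your algebra in the first half is right: with $(\zeta,\eta,\alpha)=(1+r,1-r,r)$ one gets $\zeta-(1-2\alpha)\eta/(1-\alpha)^2=r(2-r)/(1-r)$, which exceeds $1$ exactly when $r^2-3r+1<0$, i.e.\ $r>(3-\sqrt5)/2$ on $(0,1)$, and then $\gamma_+^*=1-(1-r)/(r(2-r))=(r^2-3r+1)/(r(r-2))\in(0,1)$. This is the same route the paper takes. But the translation back to $\tau$ is carried out with the monotonicities reversed, and this is a genuine gap. You write that $\overline{W}_+(\tau)=W_+(\overline\phi(\tau))$ is decreasing ``on the interval where $\overline\phi(\tau)>\gamma_+^*$, i.e.\ for $\tau\in(\tau_+^*,\bar\tau)$.'' Both halves are wrong: since $\overline\phi$ is decreasing, $\{\tau:\overline\phi(\tau)>\gamma_+^*\}$ is $\{\tau<\tau_+^*\}$, not $(\tau_+^*,\bar\tau)$; and on the set where $\overline\phi(\tau)>\gamma_+^*$ the composition of the decreasing branch of $W_+$ with the decreasing map $\overline\phi$ is \emph{increasing}. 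The correct argument (which is the paper's) is the mirror image: for $\tau\in(\tau_+^*,\bar\tau)$ one has $\overline\phi(\tau)<\gamma_+^*=\overline\phi(\tau_+^*)$, where $W_+$ is \emph{increasing} in $\gamma$, so composing with the decreasing $\overline\phi$ gives a decreasing $\overline{W}_+$. Your two sign errors happen to cancel in the final interval, but the mechanism you describe is the wrong one.

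The same reversal infects your treatment of $r>1/2$. You assert $\gamma_+^*\leq(2r-1)/r$, that both elements of $\Gamma(\tau)$ exceed $\gamma_+^*$, and that the maximum is attained at the smaller root $\underline\phi(\tau)$. In fact $\gamma_+^*-(2r-1)/r=(1-r)^2/((2-r)r)>0$, so for $\tau\in(\tau_+^*,\bar\tau)$ one has $\underline\phi(\tau)\leq(2r-1)/r<\overline\phi(\tau)<\gamma_+^*$: both roots lie \emph{below} the peak, $W_+$ is increasing there, and the sender-optimal selection is the \emph{larger} root $\overline\phi(\tau)$, whence $\overline{W}_+(\tau)=W_+(\overline\phi(\tau))$ is decreasing as before. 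Also note that Proposition~\ref{main proposition 3} is stated only for $\tau<f(0)$, so it cannot be invoked ``directly'' on all of $(\tau_+^*,\bar\tau)$ when $r>1/2$; the multi-equilibrium case must be handled as above. With these directions corrected your outline coincides with the paper's proof.
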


\begin{figure} 
  \centering
  \subfloat[The gross welfare is decreasing whenever the agents acquire private information.]{\label{fig: bcvend}
    \includegraphics[width=4cm, bb=0 0 675 419]{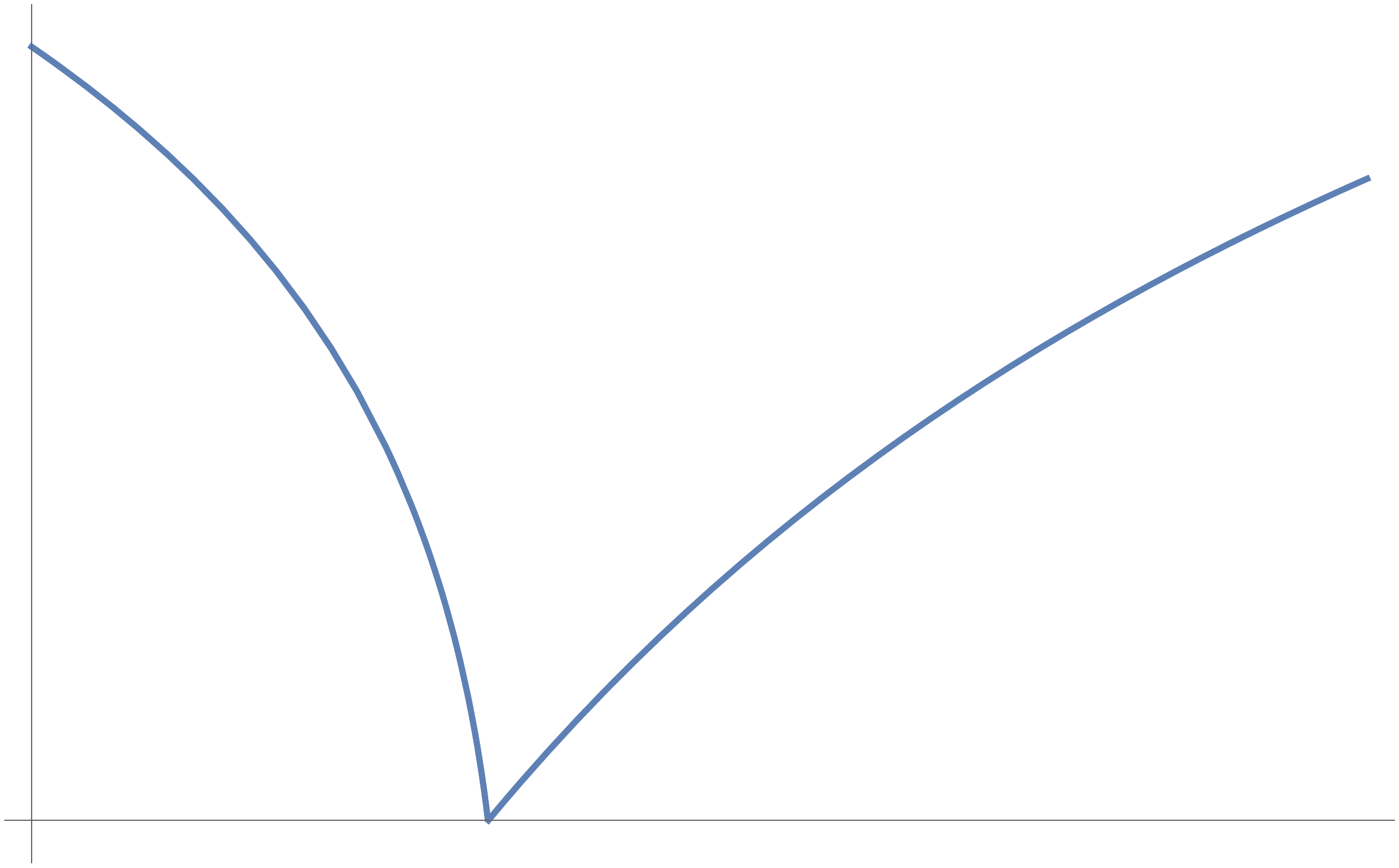}}
    \qquad \ 
  \subfloat[The net welfare is decreasing when the agents acquire a small amount of private information.]{\label{fig: bcend}
    \includegraphics[width=4cm, bb=0 0 675 419]{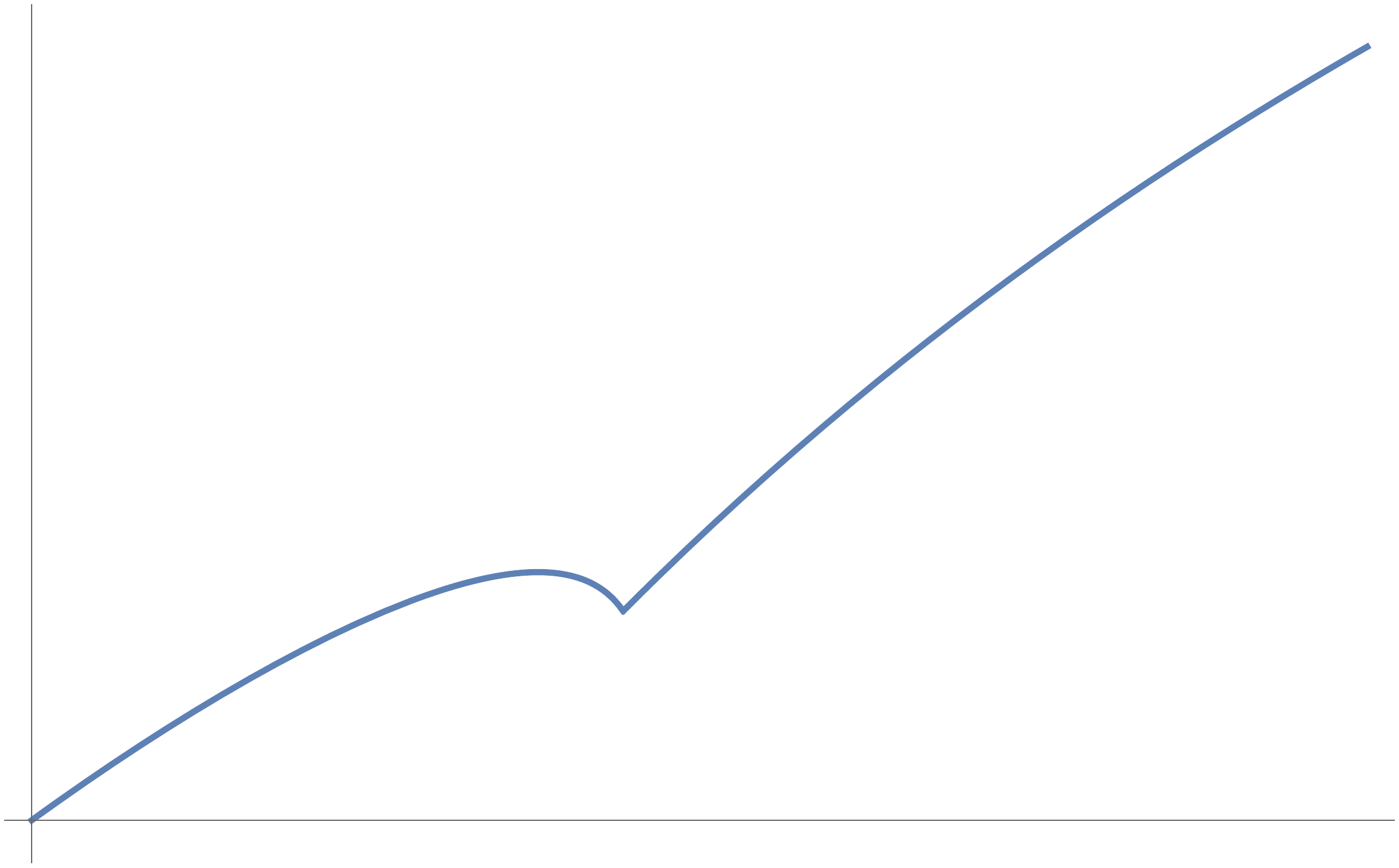}}
  \caption{Welfare in a beauty contest game ($\alpha>(3-\sqrt{5})/2)$).} 
  \label{beauty contest graph}
\end{figure}

This result is attributed to the crowding-out effect enhanced by strategic complementarity, which is essentially the same as the case of an investment game. 
When the policymaker provides more precise public information, the gross welfare excluding the cost of information decreases if $\tau<\bar\tau$ due to the crowding-out effect (see Figures \ref{fig: bcvend}). 
Nonetheless, the net welfare increases when $\tau$ is small enough because of a substantial decrease in the cost of information. 
However, when the degree of strategic complementarity is sufficiently strong (i.e., $r> (3-\sqrt{5})/2)$) and the cost reduction is modest (i.e., $\tau$ is close to $\bar \tau$), a decrease in the gross welfare is so substantial that the net welfare decreases.

\begin{appendices}
	
\bigskip
\appendix
\setcounter{section}{0}
\setcounter{theorem}{0}
\setcounter{lemma}{0}
\setcounter{claim}{0}
\setcounter{proposition}{0}
\setcounter{definition}{0}

\renewcommand{\theequation}{A.\arabic{equation} }
\setcounter{equation}{0}

\renewcommand{\thetheorem}{\Alph{theorem}}
\renewcommand{\thelemma}{\Alph{lemma}}
\renewcommand{\theclaim}{\Alph{claim}}
\renewcommand{\theproposition}{\Alph{proposition}}
\renewcommand{\thedefinition}{\Alph{definition}}

\section{Proofs for Section \ref{section: The second-period subgame}}\label{Proofs for Section {section: The second-period subgame}}

\begin{proof}[Proof of Proposition \ref{proposition 1}]
Assume that the distribution of $(a_i,A,\theta)$ is an equilibrium. 
Because $\Et[\alpha A+\beta\theta|a_i]=a_i$, 
we have $\Et[\alpha A+\beta\theta]=\alpha\Et[ A]+\beta\tilde\theta=\Et[a_i]=\Et[A]$. 
By solving this for $\Et[A]$, we obtain $\Et[a_i]=\Et[A]=\beta \tilde\theta/(1-\alpha)$.

Let $x=\alpha A+\beta\theta$ and $y=a_i$. 
Then, $y$ is an optimal solution of \eqref{CT lemma} and satisifes the condition in Lemma \ref{rate distortion lemma}.
Suppose $\lambda/2<\vart[\alpha A+\beta\theta]$. 
By Lemma \ref{rate distortion lemma}, 
\[
\vart[a_i]=\vart[\alpha A+\beta\theta]-\lambda/2=\vart[a_i]/\gamma-\lambda/2,
\]
which implies $\vart[a_i]={\lambda \gamma}/({2(1-\gamma)})$, 
i.e., \eqref{main equi 1}. 

We obtain a condition for the existence of $\gamma\in (0,1)$. 
An agent acquires information about $\alpha A+\beta \theta$ but does not pay attention to $A$ and $\theta$ separately (because it is more costly).  
Thus, we must have 
\begin{equation}
	\Et[a_i|\alpha A+\beta \theta]=\Et[a_i|A,\theta]. \label{key calc eq}
\end{equation} 
Using the formula for conditional distributions, we obtain 
\begin{align}
\Et[a_i|\alpha A+\beta \theta]
&=\Et[a_i]+\frac{\covt[a_i,\alpha A+\beta \theta]}{\vart[\alpha A+\beta \theta]}(\alpha A+\beta \theta-\Et[\alpha A+\beta \theta])\notag\\
&=\Et[a_i]+\gamma(\alpha A+\beta \theta-\Et[\alpha A+\beta \theta])\label{key calc eq'}
\end{align}
because 
$
\covt[a_i,\alpha A+\beta \theta]
=\covt[\Et[\alpha A+\beta \theta|a_i],\alpha A+\beta \theta]=
\vart[\Et[\alpha A+\beta \theta|a_i]]=\vart[a_i]$.
Similarly, 
\begin{align}
\Et[a_i|A, \theta]
&=\Et[a_i]+
\begin{pmatrix}
\covt[a_i,A] & \covt[a_i,\theta]
\end{pmatrix}
\begin{pmatrix}
\vart[A] & \covt[A,\theta]\\
\covt[\theta,A] & \vart[\theta]
\end{pmatrix}^{-1}
\begin{pmatrix}
A-\Et [A] \notag \\\theta-\tilde\theta
\end{pmatrix}\\
&=\Et[a_i]+(A-\Et [A])\label{key calc eq''}
\end{align}
because 
$(\covt[a_i,A],\covt[a_i,\theta])=(\vart[A],\covt[A,\theta])$. 
Hence, we have 
\begin{align}
A-\Et[A]
=\gamma(\alpha A+\beta \theta-\Et[\alpha A+\beta \theta])
\label{key calc 2}
\end{align}
by \eqref{key calc eq}, \eqref{key calc eq'}, and \eqref{key calc eq''}, and 
the variances of both sides are equal, i.e.,  
\begin{equation}
\vart[A]=\gamma^2\vart[\alpha A+\beta \theta]=\gamma\vart[a_i]=
{\lambda \gamma^2}/({2(1-\gamma)}),\label{key calc 3}	
\end{equation}
which implies \eqref{main equi 2}. 
Solving \eqref{key calc 2} for $A$, we have
$A-\Et[A]
=\beta \gamma(\theta-\tilde\theta)/(1-\alpha \gamma)$, 
which implies \eqref{main equi 0} and 
\begin{equation}
\vart[A]
=\beta^2 \gamma^2 \vart[\theta]/(1-\alpha \gamma)^2=\beta^2 \gamma^2\tau^{-1}/(1-\alpha \gamma)^2.\label{key calc 4}
\end{equation}
Then, \eqref{main result eq 2} follows from \eqref{key calc 3} and \eqref{key calc 4}.

By multiplying both sides of $\Et[\alpha A+\beta\theta|a_i]=a_i$ by $a_i$ and taking the expectation, we have 
$\alpha \covt[a_i,A]+\beta \covt[a_i,\theta]=\vart[a_i]$, which implies \eqref{main equi 3}.

In summary, if an equilibrium with $\gamma>0$ exists, then \eqref{main equi 0}, \eqref{main equi 1}, \eqref{main equi 2}, and \eqref{main equi 3} hold.  
It is easy to prove the converse: if the joint normal distribution of $(a_i,A,\theta)$ satisfies these equations, then it is an equilibrium. 
We can also verify the following. 
\begin{itemize}
	\item Suppose that $\alpha\leq 1/2$. Then, $f(\gamma)$ is decreasing for all $\gamma\in [0,1)$ (see Figure \ref{fig:case (i)}). Thus, \eqref{main result eq 2} has a unique solution in $(0,1)$ if and only if $\tau< f(0)=2\beta^2/\lambda$. 
	\item Suppose that $\alpha> 1/2$. Then, $f(\gamma)$ is increasing for $\gamma\in [0,(2\alpha-1)/\alpha]$ and decreasing for $\gamma\in [(2\alpha-1)/\alpha,1)$ (see Figure \ref{fig:case (ii)}). Thus, \eqref{main result eq 2} has a unique solution in $(0,1)$ if $\tau< f(0)$ or $\tau=f(({2 \alpha -1})/{\alpha })={\beta^2}/{(2 \alpha(1-\alpha)\lambda)}$, and two solutions if $f(0)<\tau<f(({2 \alpha -1})/{\alpha })$.
\end{itemize}

Next, we show that an equilibrium with no information acquisition exists if and only if $\tau\geq 2\beta^2/\lambda$. 
Suppose that an equilibrium with no information acquisition exists. 
Beacuse $a_i$ and $A$ are constant, we must have $\lambda/2\geq \vart[\alpha A+\beta\theta]=\beta^2/\tau$ by Lemma \ref{rate distortion lemma}. 
Conversely, suppose that $\tau\geq 2\beta^2/\lambda$.  
In an equilibrium, if $A$ is constant, then 
 $\lambda/2\geq \vart[\alpha A+\beta\theta]=\beta^2/\tau$, so $a_i$ is also constant by Lemma~\ref{rate distortion lemma}.  
Because $a_i=\Et[\alpha A+\beta\theta]$ and $a_i=\Et[a_i]=\Et[A]=A$,  we have $a_i=A=\alpha\tilde\theta/(1-\beta)$. 
\end{proof}

\section{Proofs for Section \ref{The crowding-out effect of public information}
}\label{Proofs for Section The crowding-out effect of public information}

\begin{proof}[Proof of Lemma \ref{lemma: crowding-out}]
Because $\overline\phi(\tau)$ is the unique solution to $\tau=f(\gamma)$,  
\[
\overline\phi'(\tau)=1/f'(\overline\phi(\tau))
=\frac{\lambda  (1-\alpha  \overline\phi(\tau))^3}{2 \beta ^2 ( (2-\overline\phi(\tau))\alpha -1)}
\]
by the implicit function theorem. The numerator is strictly positive since $\overline\phi(\tau)\in[0,1)$ and $\alpha<1$. 
The denominator is strictly negative: if $\alpha\leq 1/2$, then 
\[
(2-\overline\phi(\tau))\alpha -1\leq (2-\overline\phi(\tau))/2 -1=-\overline\phi(\tau))/2 <0,
\]
and if $\alpha>1/2$, then $(2\alpha-1)/\alpha< \overline\phi(\tau)$ by Lemma \ref{def of gamma}, and 
\[
(2-\overline\phi(\tau))\alpha -1< (2-(2\alpha-1)/\alpha)\alpha -1=0.
\]
Therefore, $\overline\phi'(\tau)<0$. 
\end{proof}

\section{Proofs for Section \ref{The definition of optimality}}

\begin{proof}[Proof of Lemma \ref{welfare lemma}]
Suppose that $\tau=f(\gamma)$. By Proposition \ref{proposition 1} and the law of total variance, 
\begin{align*}
\var[A]&=\E[\vart[A]]+\var[\Et[A]]=\frac{\lambda \gamma^2}{2(1-\gamma)}+\frac{\beta^2}{(1-\alpha)^2}\var[\tilde\theta],\\
\var[a_i]&=\E[\vart[a_i]]+\var[\Et[a_i]]=\frac{\lambda \gamma}{2(1-\gamma)}+\frac{\beta^2}{(1-\alpha)^2}\var[\tilde\theta].
\end{align*}
In addition, 
\[
\var[\tilde\theta]=\tau_\theta^{-1}-\tau^{-1}=
\tau_\theta^{-1}-
\frac{\lambda}{2 \beta ^2}\cdot\frac{(1-\alpha  \gamma)^2}{1-\gamma}
\]
follows from \eqref{main result eq 2}.  
The above three equations imply $W(\tau,\gamma)=W_+(\gamma)$.

Suppose that $\tau>f(0)$ and $\gamma=0$. Then, $a_i=A=\beta\tilde\theta/(1-\alpha)$ by Proposition~\ref{proposition 1}, and thus 
$\var[a_i]=\var[A] =\beta^2\var[\tilde\theta]/(1-\alpha)^2$, which implies $W(\tau,\gamma)=W_0(\tau)$.
\end{proof}

\section{Proofs for Section \ref{The optimal precision}}\label{Proofs for Section The optimal precision}

\begin{proof}[Proof of Proposition \ref{main proposition opt}]
By direct calculation, we have 
\begin{equation}
\overline{W}_+(\tau_+^*)-W_0(\infty)=
\begin{cases}
\chi& \text{ if }\gamma_+^*>0,\\
\displaystyle -\frac{\eta }{(1-\alpha )^2} & \text{ if }\gamma_+^*=0.
\end{cases}
\label{chi condition}
\end{equation}
Thus, when $\gamma_+^*=0$, 
\[
\tau^*
\begin{cases}
	=\infty & \text{ if $\eta>0$},\\
	=\tau_+^* & \text{ if $\eta<0$},\\
	\in[\tau_+^*,\infty] & \text{ if }\eta=0.
\end{cases}\]
Similarly, when $\gamma_+^*>0$, 
\[
\tau^*
\begin{cases}
	=\infty & \text{ if $\chi<0$},\\
	=\tau_+^* & \text{ if $\chi>0$},\\
	\in \{\tau_+^*,\infty\} & \text{ if $\chi=0$}.
\end{cases}\]
This implies the following.
\begin{enumerate}[(a)]
	\item If $\gamma_+^*=0$ and $\eta>0$ or if $\gamma_+^*>0$ and $\chi<0$, then $\tau^*=\infty$. 
	\item If $\gamma_+^*=0$ and $\eta<0$ or if $\gamma_+^*>0$ and $\chi>0$, then $\tau^*=\tau_+^*$.
\end{enumerate}
We show that the above conditions are equivalent to those in the proposition. 

Consider (a). It is clear that the condition in (a) holds if $\eta>0$ and $\chi<0$. 
To show the converse, assume first that $\gamma_+^*=0$ and $\eta>0$. 
Recall that $\gamma_+^*=0$ implies $\zeta -(1-2 \alpha) \eta /(1-\alpha)^2\leq 1$. 
Since 
\begin{equation}
2/(1-\alpha)-
(1-2 \alpha) /(1-\alpha)^2=1/(1-\alpha)^2>0, \label{key alpha}	
\end{equation}
we have
\[
\chi= \zeta-1-2 \eta /(1-\alpha)< \zeta-1 -(1-2 \alpha) \eta /(1-\alpha)^2\leq 0.
\] 
Next, assume that $\gamma_+^*>0$ and $\chi<0$. 
Note that $W_+(\gamma_+^*)>W_0(\infty)\geq W_0(f(0))$, i.e., $\eta\geq 0$. 
This implies $\eta> 0$ because if $\eta=0$, then $\zeta -(1-2 \alpha) \eta /(1-\alpha)^2=\zeta> 1$, and $\chi= \zeta-1-\log \zeta\geq 0$, contradicting $\chi<0$.

Consider (b). It is clear that $\eta<0$ or $\chi>0$ if the condition in (b) holds. 
To show the converse, assume that $\eta<0$ or $\chi>0$ holds. 
When $\gamma_+^*=0$ and $\eta\geq 0$, 
\[
0\geq \zeta-1 -(1-2 \alpha) \eta /(1-\alpha)^2\geq 
\zeta-1-2 \eta /(1-\alpha)=\chi
\] 
by \eqref{key alpha}, which contradicts the assumption. 
Thus, $\eta<0$ must hold when $\gamma_+^*=0$.  
When $\gamma_+^*>0$, $\log(1-\gamma_+^*)^{-1}=\log(\zeta-1 -(1-2 \alpha) \eta /(1-\alpha)^2)$. 
Thus, if $\eta<0$, then 
\begin{align*}
\chi&= \zeta-1-2 \eta /(1-\alpha)-\log(1-\gamma_+^*)^{-1}\\
&> \zeta-1 -(1-2 \alpha) \eta /(1-\alpha)^2-\log(\zeta-1 -(1-2 \alpha) \eta /(1-\alpha)^2)\geq 0	
\end{align*}
by \eqref{key alpha}. 
Therefore, $\chi<0$ must hold when $\gamma_+^*>0$.  
\end{proof}

\section{Proofs for Section \ref{Cournot}}\label{Proofs for Section Cournot}

\begin{proof}[Proof of Corollary \ref{cournot proposition}]
When $\zeta=\eta=1$, $\gamma_+^*=0$ if $\alpha\leq 1/2$ and $\gamma_+^*=(2 \alpha-1)/{\alpha ^2}$ if $\alpha> 1/2$, which implies that $\chi=-{2}/({1-\alpha }) -\log (1-\gamma_+^*)^{-1}<0$. 
Because $\eta>0$, full disclosure is optimal for all $\alpha<1$.  

Consider $\overline{W}_+(\tau)$. First, suppose that $\alpha\leq 1/2$, where $\Gamma(\tau)=\{\overline{\phi}(\tau)\}$ for all $\tau\in [\tau_\theta,\bar\tau]$. 
Because $\gamma_+^*=0$, $W_+(\gamma)$ is decreasing, and thus $\overline{W}_+(\tau)=W_+(\overline{\phi}(\tau))$ is increasing because $\overline{\phi}(\tau)$ is decreasing. 
Next, suppose that $\alpha\leq 1/2$. 
Because $\gamma_+^*=(2 \alpha-1)/{\alpha ^2}>0$, $W_+(\gamma)$ is increasing if $\gamma<\gamma_+^*$. Moreover, $f(\gamma_+^*)={2 \beta ^2}/{\lambda }=f(0)$. 
Thus, for each $\tau\in (f(0),\bar\tau)$, $\Gamma(\tau)=\{\overline{\phi}(\tau),\underline{\phi}(\tau)\}$ and $\overline{W}_+(\tau)=W_+(\overline{\phi}_+(\tau))$ because $\gamma_+^*>\overline{\phi}(\tau)>\underline{\phi}(\tau)$.   
This implies that  $\overline{W}_+(\tau)=W_+(\overline{\phi}(\tau))$ is decreasing if $\tau\in (f(0),\bar\tau)$ because $\overline{\phi}(\tau)$ is decreasing. 
\end{proof}

\begin{proof}[Proof of Corollary \ref{beauty contest proposition}]
When $\zeta=1+r$ and $\eta=1-r$, $\gamma_+^*=0$ if $r\leq (3-\sqrt{5})/2$ and $\gamma_+^*={(r^2-3 r +1)}/{(r(r-2))}>0$ if $r> (3-\sqrt{5})/2$. Let $\tau_+^*=f(\gamma_+^*)$ as before. 

Suppose that $(3-\sqrt{5})/2<r\leq 1/2$.  
Then, $\overline{W}_+(\tau)={W}_+(\overline\phi(\tau))$ for all $\tau\leq \bar\tau=f(0)$, 
so $\overline{W}_+(\tau)$ is decreasing for $\tau\in (\tau_+^*,\bar\tau)$ because 
${W}_+(\gamma)$ is increasing for $\gamma< \gamma_+^*=\overline\phi(\tau_+^*)$.  

Suppose that $r> 1/2$.  
Recall that, for all $t>(2r-1)/r=(2\alpha-1)/\alpha$, $f(t)$ is decreasing and $t=\overline{\phi}(\tau)$ for $\tau=f(t)<f((2r-1)/r)=\bar\tau$. 
Note that $\gamma_+^*-(2r-1)/r={(1-r)^2}/{((2-r) r)}>0$. 
Thus, ${W}_+(\gamma)$ is increasing if $(2r-1)/r<\gamma<\gamma_+^*=\overline\phi(\tau_+^*)$. 
Consequently, for $\tau\in (\tau_+^*,\bar\tau)=(f(\gamma_+^*),f((2r-1)/r))$, $\overline{W}_+(\tau)={W}_+(\overline\phi(\tau))$ is decreasing because $(2r-1)/r<\overline\phi(\tau)<\gamma_+^*$. 
\end{proof}

\section{Flexible acquisition with the Fisher information costs}\label{Flexible information acquisition with the Fisher information costs}

In this appendix, we consider the model of flexible information acquisition with the Fisher information cost \citep{hebertwoodford2021} instead of the mutual information cost and show that Propositions \ref{proposition 1} and \ref{main proposition 0} remain valid. 
We also demonstrate that the crowding-out effect enhanced by strategic complementarity has the essentially same effect on welfare as in the case of the mutual information cost discussed in Section \ref{Cournot}. 
      
Let $x$ be a normally distributed random variable with mean $\bar x$ and variance $\sigma^2$. 
Let $y$ be a signal about $x$, which has a conditional probability density function $p(y|x)$ given $x$.  
By regarding $x$ as a parameter determining the distribution of $y$ when $x$ is given, we can define the log likelihood function $l(x|y)\equiv \log p(y|x)$.
The following value is referred to as the Fisher information:
\[
\mathcal{I}_y(x)=\int \left(\frac{d l(x|y)}{d x}\right)^2p(y|x)dy.
\]
A cost of a private signal $y$ is called the Fisher information cost if it is proportional to the expected value of $\mathcal{I}_y(x)$:
\[
C^F(y)=\frac{c}{4} \cdot \E[\mathcal{I}_y(x)]=\frac{c}{4} \int \frac{\mathcal{I}_y(x)}{\sqrt{2\pi\sigma^2}}\exp\left(-\frac{(x-\bar x)^2}{2\sigma^2}\right)\sigma dx,
\]
where $c>0$ is constant. 
The following result is a special case of Proposition 4 in \citet{hebertwoodford2021}.\footnote{This result is closely related to the well-known result that the standard normal distribution has the smallest Fisher information for location among all distributions with variance less than or equal to one. See \citet{huber2009}, for example.}
\begin{lemma}
Consider an optimization problem  
\begin{equation}
\max_y-\E[(x-y)^2]-C^F(y), \notag\label{HW lemma}
\end{equation} 
where $y$ is a random variable with an arbitrary distribution.
An optimal solution exists and satisfies the following.
\begin{itemize}
\item If $c^{1/2}/2<\sigma^2$, then $y$ is a normally distributed random variable satisfying $\E[x|y]=y$, $\E[(x-y)^2]=\var[x]-\var[y]=c^{1/2}/2$, and $C^F(y)=c^{1/2}/2-c/(4\sigma^{2})$. 
\item If $c^{1/2}/2\geq\sigma^2$, then $y=\bar x$ and $C^F(y)=0$.
\end{itemize}	
\end{lemma}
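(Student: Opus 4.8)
The plan is to follow the same architecture as the proof of Lemma~\ref{rate distortion lemma}, replacing the rate--distortion (entropy) estimate by a Fisher-information estimate, namely the Bayesian Cram\'er--Rao (van Trees) inequality. First I would argue that it is enough to consider \emph{direct} signals, i.e.\ signals $y$ with $\E[x\mid y]=y$: replacing an arbitrary signal by $y'=\E[x\mid y]$ weakly lowers the mean squared error, and weakly lowers the Fisher information cost because $y'$ is a deterministic function of $y$ and the Fisher information about $x$ obeys a data-processing inequality, $\mathcal{I}_{y'}(x)\le\mathcal{I}_y(x)$ for every $x$; moreover $\E[x\mid y']=y'$, so this reduction is internally consistent. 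For a direct signal one has $\E[(x-y)^2]=\var[x]-\var[y]$, which is the source of the ``$=\var[x]-\var[y]$'' in the statement.

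Next I would obtain the upper bound. Writing $J\equiv\E[\mathcal{I}_y(x)]$ and using that the Gaussian prior $N(\bar x,\sigma^2)$ has Fisher information $\sigma^{-2}$ for its location, the van Trees inequality gives, for any signal $y$,
\[
\E[(x-y)^2]\ \ge\ \frac{1}{J+\sigma^{-2}},
\]
so the objective is at most $g(J)\equiv -1/(J+\sigma^{-2})-(c/4)J$. Since $t\mapsto -1/t$ is concave on $(0,\infty)$, $g$ is strictly concave, and its first-order condition reads $(J+\sigma^{-2})^2=4/c$, i.e.\ $J^{\ast}=2c^{-1/2}-\sigma^{-2}$. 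If $c^{1/2}/2<\sigma^2$ then $J^{\ast}>0$ and $g$ is maximized at $J^{\ast}$, with optimal mean squared error $1/(J^{\ast}+\sigma^{-2})=c^{1/2}/2$ and optimal cost $(c/4)J^{\ast}=c^{1/2}/2-c/(4\sigma^2)$; if $c^{1/2}/2\ge\sigma^2$ then $g'(0)=\sigma^4-c/4\le 0$, so $g$ is nonincreasing and is maximized at $J=0$.

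To show these bounds are attained I would exhibit explicit signals. In the first case take $z=x+\varepsilon$ with $\varepsilon\sim N(0,1/J^{\ast})$ independent of $x$ and set $y=\E[x\mid z]$; then $y$ is normal with $\E[x\mid y]=y$, the location family $z$ has constant Fisher information $J^{\ast}$ which is preserved under the affine bijection $z\mapsto y$, and $\E[(x-y)^2]=\E[\var[x\mid z]]=(\sigma^{-2}+J^{\ast})^{-1}$, so the value equals $g(J^{\ast})$ and all the claimed identities hold; in the second case $y=\bar x$ has zero cost and mean squared error $\sigma^2=-g(0)$. The remaining and, I expect, hardest work is to upgrade ``a Gaussian signal is optimal'' to ``every optimal signal is Gaussian with the stated parameters'': this requires inspecting the equality case of the van Trees inequality (which forces $y-x$ to be proportional to the score $\partial_x\log\big(p(y\mid x)\pi(x)\big)$, hence the pair $(x,y)$ to be jointly Gaussian) together with the first-order condition $J=J^{\ast}$, plus the subsidiary observation that a signal with $\E[\mathcal{I}_y(x)]=0$ must be essentially independent of $x$. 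Setting up the van Trees inequality with the correct integrability and boundary regularity hypotheses for the present unrestricted class of signals is the other delicate ingredient.
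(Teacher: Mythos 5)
Your proposal is essentially correct, but it is a genuinely different route from the paper's, because the paper does not prove this lemma at all: it simply invokes it as a special case of Proposition 4 of H\'ebert and Woodford (2021), with a footnote pointing to the classical fact that the standard normal minimizes Fisher information among distributions of bounded variance. Your self-contained argument via the van Trees (Bayesian Cram\'er--Rao) inequality is the natural way to actually prove it, and it correctly reproduces every quantity in the statement: the concave envelope $g(J)=-1/(J+\sigma^{-2})-(c/4)J$ has its interior critical point at $J^{\ast}=2c^{-1/2}-\sigma^{-2}$, which is positive exactly when $c^{1/2}/2<\sigma^2$, yielding $\E[(x-y)^2]=c^{1/2}/2$ and $C^F(y)=c^{1/2}/2-c/(4\sigma^2)$, while $g'(0)\le 0$ in the complementary case forces $J=0$; the Gaussian additive-noise signal attains the bound, and the reduction to direct signals via the data-processing inequality for Fisher information is sound and parallels the structure of Lemma~\ref{rate distortion lemma}. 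What the citation buys the paper is brevity and the avoidance of two technical points that you candidly flag but do not complete: (i) the regularity hypotheses under which van Trees applies to an \emph{arbitrary} signal (signals whose likelihood is not absolutely continuous in $x$ must be handled, e.g.\ by assigning them infinite cost or excluding them from the feasible set), and (ii) the equality analysis showing that \emph{every} optimizer is Gaussian with the stated parameters, which matters downstream because the paper uses the uniqueness of the optimal signal's distribution to transfer the equilibrium characterization from the entropy cost to the Fisher cost. Your sketch of (ii) --- equality in the Cauchy--Schwarz step forces $y-x$ to be proportional to the posterior score, hence $\log p(y|x)$ to be quadratic in $x$ and the pair $(x,y)$ jointly Gaussian --- is the right idea and would close the gap if written out; what your approach buys in exchange is transparency about where the threshold $c^{1/2}/2$ and the cost formula come from.
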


Compare this lemma with Lemma \ref{rate distortion lemma}. 
The distribution of the optimal signal in each lemma is the same when $\lambda=c^{1/2}$.  
Because the proofs of  Propositions \ref{proposition 1} and \ref{main proposition 0} rely only on the distribution of the optimal signal in Lemma \ref{rate distortion lemma},  the same proofs go through under the assumption of the Fisher information cost. That is, Propositions \ref{proposition 1} and \ref{main proposition 0} remain valid, where $\lambda$ is replaced with $c^{1/2}$. 

From now on, let $\lambda=c^{1/2}$.  
In the second-period equilibrium with $\tau=f(\gamma)$, the Fisher information cost is 
\[
\lambda/2-\lambda^2/(4\vart[\alpha A+\beta \theta])
=\lambda/2-\lambda^2\gamma/(4\vart[a_i])
=\lambda\gamma/2.
\]
By replacing the mutual information cost in Lemma \ref{welfare lemma} with the above Fisher information cost, we  obtain the expected welfare:  
\begin{align}
W^F(\tau,\gamma)&=
\begin{cases}
 \displaystyle
W_+^F(\gamma) \equiv \zeta D_+(\gamma)+\eta V_+(\gamma)-\lambda\gamma/2
& \text{ if $\tau=f(\gamma)$},
 \\	
\displaystyle W_0(\tau) \equiv
 \eta V_0(\tau) & \text{ if $\tau\geq f(0)$ and $\gamma=0$}.
\end{cases}
\notag
\end{align}
For each $\gamma\in (0,1)$, it holds that 
\begin{align}
\frac{dW_+^F(\gamma)}{d\gamma} =
 \frac{\lambda}{2}\left(\zeta -\eta\frac{1-2 \alpha }{ (1-\alpha)^2}-1\right),
\notag
\end{align}
which implies that $\gamma^F_+$ maximizes $W_+(\gamma)$ over $[0,1]$, where \begin{equation}
	\gamma^F_+
\begin{cases}
= 1 
&\text{if }\zeta -(1-2 \alpha) \eta /(1-\alpha)^2>1,\\
=0 
&\text{if }\zeta -(1-2 \alpha) \eta /(1-\alpha)^2<1,\\
\in [0,1] 
&\text{if }\zeta -(1-2 \alpha) \eta /(1-\alpha)^2=1.
\end{cases}
\notag
\end{equation}

Let $\overline{W}_+^F(\tau)\equiv\max_{\gamma\in\Gamma(\tau)}W_+^F(\gamma)$ be the maximum achievable welfare given $\tau=f(\gamma)$. 
The domain of $\overline{W}_+^F(\tau)$ is the same as that of $\overline{W}_+(\tau)$,  i.e., $[\tau_\theta,\bar\tau]$. 
For $\tau<f(0)$, $\overline{W}_+^F(\tau)=W_+^F(\overline{\phi}(\tau))$, and thus 
\begin{align}
\frac{\partial \overline{W}_+^F(\tau)}{\partial \tau}\gtrless 0&\ \Leftrightarrow\ 
\zeta -(1-2 \alpha)\eta/ (1-\alpha)^2- 1\lessgtr 0.
\notag
\label{A13}
\end{align}
As a consequence, Corollary \ref{cournot proposition} on Cournot and investment games has the following counterpart in the case of the Fisher information cost. 
\begin{corollary}\label{cournot proposition fisher}
Assume that $\zeta=\eta=1$.  
For $\tau<f(0)$,  $\overline{W}_+^F(\tau)$ is increasing if $\alpha< 1/2$ and decreasing if $\alpha>1/2$. 
\end{corollary}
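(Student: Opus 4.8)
The plan is to specialize the sign characterization of $\partial \overline{W}_+^F(\tau)/\partial\tau$ that was established in the display immediately preceding the statement. Recall that for $\tau<f(0)$ Lemma \ref{def of gamma} gives $\Gamma(\tau)=\{\overline\phi(\tau)\}$, so $\overline{W}_+^F(\tau)=W_+^F(\overline\phi(\tau))$, and the chain rule together with the fact that $dW_+^F/d\gamma$ does not depend on $\gamma$ and with $\overline\phi'(\tau)<0$ (Lemma \ref{lemma: crowding-out}, applicable since $\tau<f(0)\le f(\max\{0,(2\alpha-1)/\alpha\})$) yields
\[
\frac{\partial \overline{W}_+^F(\tau)}{\partial \tau}\gtrless 0 \ \Leftrightarrow\ \zeta-(1-2\alpha)\eta/(1-\alpha)^2-1\lessgtr 0 .
\]

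First I would substitute $\zeta=\eta=1$ into the right-hand inequality; the constant terms cancel and it collapses to $-(1-2\alpha)/(1-\alpha)^2\lessgtr 0$. Since $\alpha<1$ by standing assumption, $(1-\alpha)^2>0$, so the sign of this expression is exactly the sign of $-(1-2\alpha)$, i.e. it is negative precisely when $\alpha<1/2$ and positive precisely when $\alpha>1/2$. Reading this back through the equivalence gives $\partial \overline{W}_+^F/\partial\tau>0$ for every $\tau<f(0)$ when $\alpha<1/2$, and $\partial \overline{W}_+^F/\partial\tau<0$ for every $\tau<f(0)$ when $\alpha>1/2$, which is the claim.

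I do not expect any real obstacle here: unlike the mutual-information counterpart (Corollary \ref{cournot proposition}), where $\gamma_+^*$ may be interior to $[0,1)$ and one must track whether $\overline\phi(\tau)$ has crossed it, the Fisher-information welfare $W_+^F$ has a $\gamma$-derivative that is constant in $\gamma$, hence $W_+^F$ is globally monotone on $[0,1]$; composing with the monotone decreasing map $\tau\mapsto\overline\phi(\tau)$ immediately delivers monotonicity of $\overline{W}_+^F$ on $\{\tau<f(0)\}$ with the sign read off above. If full self-containedness were desired, one could instead differentiate $\overline{W}_+^F(\tau)=\zeta D_+(\overline\phi(\tau))+\eta V_+(\overline\phi(\tau))-\lambda\overline\phi(\tau)/2$ directly, using $f(\overline\phi(\tau))=\tau$ and \eqref{main result eq 2}, but this merely reproduces the displayed equivalence.
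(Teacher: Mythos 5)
Your proof is correct and follows the same route the paper intends: the corollary is read off directly from the displayed sign equivalence $\partial \overline{W}_+^F(\tau)/\partial\tau \gtrless 0 \Leftrightarrow \zeta-(1-2\alpha)\eta/(1-\alpha)^2-1\lessgtr 0$ by substituting $\zeta=\eta=1$, which reduces the right-hand side to $(2\alpha-1)/(1-\alpha)^2$ and gives the stated dichotomy at $\alpha=1/2$. Your added justification of that equivalence (constant $\gamma$-derivative of $W_+^F$, chain rule, and $\overline\phi'(\tau)<0$) matches the paper's own derivation in the paragraph preceding the corollary, so there is nothing to add.
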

That is, more precise public information can reduce the total profit if and only if the game exhibits strong strategic complementarity, which is essentially the same as in the case of the mutual information cost. 

Similarly, Corollary \ref{cournot proposition} on a beauty contest game has the following counterpart in the case of the Fisher information cost. 

\begin{corollary}\label{beauty contest proposition fisher}
Assume that $\alpha=r$ and $(\zeta,\eta)=(1+r,1-r)$. 
If $r> (3-\sqrt{5})/2$, then $\overline{W}_+(\tau)$ is deceasing for $\tau<f(0)$. 
\end{corollary}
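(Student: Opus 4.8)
The plan is to apply the sign characterization established in the lines just before the statement. For $\tau<f(0)$ the second-period equilibrium is unique with $\gamma=\overline\phi(\tau)$, so $\overline{W}_+^F(\tau)=W_+^F(\overline\phi(\tau))$; since $dW_+^F/d\gamma$ equals the $\gamma$-independent constant $(\lambda/2)\bigl(\zeta-\eta(1-2\alpha)/(1-\alpha)^2-1\bigr)$ and $\overline\phi'(\tau)<0$ by Lemma \ref{lemma: crowding-out}, the chain rule gives $\partial\overline{W}_+^F(\tau)/\partial\tau<0$ if and only if $\zeta-(1-2\alpha)\eta/(1-\alpha)^2-1>0$. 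Lemma \ref{lemma: crowding-out} applies throughout $\tau<f(0)$ because $f(0)\le f(\max\{0,(2\alpha-1)/\alpha\})$ when $\alpha=r$: the two bounds coincide for $r\le 1/2$, while $f(0)<f((2r-1)/r)=\bar\tau$ for $r>1/2$. Hence it suffices to verify that the bracketed quantity is strictly positive under $r>(3-\sqrt5)/2$.

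Next I would substitute $\alpha=r$, $\zeta=1+r$, $\eta=1-r$ into $\zeta-(1-2\alpha)\eta/(1-\alpha)^2-1$. One factor $(1-r)$ cancels against $(1-r)^2$, leaving $r-(1-2r)/(1-r)$, which over the common denominator $1-r>0$ equals $(-r^2+3r-1)/(1-r)$. Its sign is therefore that of $-r^2+3r-1$, so the bracketed quantity is strictly positive exactly when $r^2-3r+1<0$.

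Finally I would solve this quadratic inequality: the roots of $r^2-3r+1$ are $(3\pm\sqrt5)/2$, and because $(3+\sqrt5)/2>1>r$, on the interval $(0,1)$ one has $r^2-3r+1<0$ precisely when $r>(3-\sqrt5)/2$. Under that hypothesis the bracketed quantity is positive, so by the characterization of the first paragraph $\partial\overline{W}_+^F(\tau)/\partial\tau<0$ for all $\tau<f(0)$, which is the claim (the $\overline{W}_+$ in the statement being the welfare $\overline{W}_+^F$ under the Fisher information cost).

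I do not expect any genuine obstacle: the argument is a one-line algebraic reduction to an already-proved sign condition, entirely parallel to the proof of Corollary \ref{beauty contest proposition}, with the linear Fisher cost contributing the constant "$1$" where the logarithmic cost contributed "$1/(1-\gamma)$" — which is exactly why the threshold $(3-\sqrt5)/2$ is the same as there. The only step needing a word of care is verifying that Lemma \ref{lemma: crowding-out} covers the whole interval $\tau<f(0)$ when $r>1/2$, which is dealt with by the parenthetical remark in the first paragraph.
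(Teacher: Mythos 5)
Your proposal is correct and follows exactly the route the paper intends: the paper states this corollary as an immediate consequence of the displayed sign characterization $\partial\overline{W}_+^F(\tau)/\partial\tau \gtrless 0 \Leftrightarrow \zeta-(1-2\alpha)\eta/(1-\alpha)^2-1\lessgtr 0$ for $\tau<f(0)$, and your substitution reducing the bracket to the sign of $-(r^2-3r+1)$ over $(1-r)>0$, hence to $r>(3-\sqrt5)/2$, is the intended (omitted) algebra. Your added checks --- that Lemma \ref{lemma: crowding-out} covers all of $\tau<f(0)$ and that $\overline{W}_+$ in the statement means $\overline{W}_+^F$ --- are both right and fill small gaps the paper leaves implicit.
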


That is, welfare can decrease with public information even if $r<1/2$, which is also essentially the same as in the case of the mutual information cost.  

We can also identify the optimal precision. 
No disclosure can be optimal, which is different from the case of the mutual information cost. 
The difference arises from the following fact: the mutual information cost is proportional to $\log(1-\gamma)^{-1}$ and unbounded, whereas the Fisher information cost is proportional to $\gamma$ and bounded. 
That is, the cost reduction cannot be substantial enough to make no disclosure suboptimal in the case of the Fisher information cost.

\begin{proposition}\label{Proposition Fisher}
Assume that the agents have the Fisher information cost. 
Let $\bar\gamma=\overline\phi(\tau_\theta)$. 
The optimal precision $\tau^*$ is uniquely given by
\[
\tau^*=
\begin{cases}
	\infty & \text{ if $\eta>0$ and $\zeta< ({(1-2 \alpha ) \bar\gamma+1})\eta/({(1-\alpha )^2 \bar\gamma})+1$},\\
	0 & \text{ if $\eta\leq 0$ and $\zeta>({(1-2 \alpha ) \bar\gamma+1})\eta/({(1-\alpha )^2 \bar\gamma})+1$},\\
	0 & \text{ if $\eta<0$ and $\zeta >(1-2 \alpha) \eta /(1-\alpha)^2+1$},\\
	f(0) & \text{ if $\eta<0$ and $\zeta <(1-2 \alpha) \eta /(1-\alpha)^2+1$}.
\end{cases}
\]
\end{proposition}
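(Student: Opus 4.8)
The structural fact to exploit, already derived in this appendix, is that $W_+^F(\gamma)=\zeta D_+(\gamma)+\eta V_+(\gamma)-\lambda\gamma/2$ is \emph{affine} in $\gamma$, since $dW_+^F/d\gamma=\tfrac{\lambda}{2}\sigma$ with $\sigma\equiv\zeta-\tfrac{(1-2\alpha)\eta}{(1-\alpha)^2}-1$ constant, while $W_0(\tau)=\eta V_0(\tau)$ is strictly monotone in $\tau$ (increasing iff $\eta>0$) by Proposition~\ref{T0}. Hence in each regime the objective is optimized at a boundary, and the global optimum is one of the three points $\tau_\theta$ (``no disclosure,'' written $0$ in the statement following the convention of Proposition~\ref{Proposition UY}), $f(0)$, and $\infty$.

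First I set up the two regime maxima. Under information acquisition, the reachable information fractions as $\tau$ ranges over $[\tau_\theta,\bar\tau]$ form $[0,\bar\gamma]$ with $\bar\gamma=\overline\phi(\tau_\theta)$ (exactly as in Section~\ref{The welfare effect of public information}, using $\tau_\theta<f(0)$), with $f(\bar\gamma)=\tau_\theta$ and $f(0)=2\beta^2/\lambda$; so $\max_{\tau\in[\tau_\theta,\bar\tau]}\overline{W}_+^F(\tau)=\max_{\gamma\in[0,\bar\gamma]}W_+^F(\gamma)$, which equals $W_+^F(\bar\gamma)$ (at $\tau=\tau_\theta$) if $\sigma>0$ and $W_+^F(0)=\eta V_+(0)$ (at $\tau=f(0)$) if $\sigma<0$. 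Under no information acquisition, $\max_{\tau\ge f(0)}W_0(\tau)$ equals $W_0(\infty)=\eta\beta^2\tau_\theta^{-1}/(1-\alpha)^2$ if $\eta>0$ and $W_0(f(0))$ if $\eta\le 0$. I will record the identity $W_0(f(0))=\eta V_+(0)=W_+^F(0)$ (from $\beta^2 f(0)^{-1}=\lambda/2$): the two regimes agree at the junction $\tau=f(0)$, and for $\alpha>1/2$ the extra second-period equilibrium $\gamma=(2\alpha-1)/\alpha$ available at $\tau=f(0)$ never beats it since $W_+^F$ is monotone.

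Then I compare the two regime maxima by a case split on $\operatorname{sgn}\eta$ and $\operatorname{sgn}\sigma$. When $\eta\le 0$ the relevant no-acquisition value is $W_0(f(0))=W_+^F(0)$, so the comparison is just $\operatorname{sgn}\bigl(W_+^F(\bar\gamma)-W_+^F(0)\bigr)=\operatorname{sgn}\sigma$, giving $\tau^*=\tau_\theta$ when $\sigma>0$ and $\tau^*=f(0)$ when $\sigma<0$ (the $\zeta\gtrless (1-2\alpha)\eta/(1-\alpha)^2+1$ cases). When $\eta>0$ the case $\sigma\le 0$ is immediate, since then $W_+^F(0)=\eta V_+(0)<W_0(\infty)$, so $\tau^*=\infty$; in the remaining case $\sigma>0$ a direct computation gives
\[
W_+^F(\bar\gamma)-W_0(\infty)=\frac{\lambda\bar\gamma}{2}\left(\zeta-1-\frac{(1-2\alpha)\bar\gamma+1}{(1-\alpha)^2\bar\gamma}\,\eta\right),
\]
whose sign separates $\tau^*=\infty$ from $\tau^*=\tau_\theta$, yielding the threshold in the first two displayed cases. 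Uniqueness holds because $W_0$ is strictly monotone on each regime (for $\eta\neq 0$) and $W_+^F$ is strictly monotone in $\gamma$ whenever $\sigma\neq 0$, and the strict threshold inequalities preclude $\sigma=0$; the boundary configurations ($\eta=0$, $\sigma=0$, or equality in the threshold) are the non-generic ones, handled the same way with the optimum then possibly non-unique.

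The main obstacle is not any single inequality — each comparison is affine or monotone — but organizing the case analysis so the finitely many candidates $\tau_\theta$, $f(0)$, $\infty$ are correctly ranked in every sign pattern, and in particular verifying that no coexisting equilibrium at the junction $\tau=f(0)$ (relevant for $\alpha>1/2$) is strictly better than the endpoint; the affine-in-$\gamma$ form of $W_+^F$ is precisely what makes this routine. Note finally that, in contrast with the mutual-information case, no auxiliary hypothesis of the form $\tau_\theta<\tau_+^*$ is needed: $W_+^F$ is bounded on $[0,\bar\gamma]$, which is exactly why $\tau^*=\tau_\theta$ — no disclosure — can be optimal here.
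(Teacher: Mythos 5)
Your proof is correct and takes essentially the same route as the paper's: reduce to the candidate precisions $\tau_\theta$, $f(0)$, and $\infty$ via the strict monotonicity of $W_0$ and the affineness of $W_+^F$ in $\gamma$, then decide by the signs of $W_0(\infty)-W_+^F(\bar\gamma)$ and $W_+^F(0)-W_+^F(\bar\gamma)$, whose explicit forms agree with the paper's computations. You are in fact a bit more explicit than the paper about the junction identity $W_0(f(0))=W_+^F(0)$, the sub-case $\eta>0$ with $\zeta-(1-2\alpha)\eta/(1-\alpha)^2-1\le 0$, and the equilibrium selection at $\tau=f(0)$ when $\alpha>1/2$.
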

\begin{proof}
Suppose that $\eta>0$. The optimal precision is either $\infty$ or $\tau_\theta$ because $W_0(\tau)$ is strictly increasing. By direct calculation, 
\[
W_0(\infty)-W_+^F(\overline\phi(\tau_\theta))=W_0(\infty)-W_+^F(\bar\gamma)=
\frac{\lambda}{2}
\left(\frac{ 1+\bar\gamma(1-2 \alpha)}{(1-\alpha)^2}\eta -\bar\gamma \zeta+\bar\gamma\right),
\]	
which is strictly positive if and only if $\zeta< ({(1-2 \alpha ) \bar\gamma+1})\eta/({(1-\alpha )^2 \bar\gamma})+1$.  

Suppose that $\eta<0$. The optimal precision is either $f(0)$ or $\tau_\theta$ because $W_0(\tau)$ is strictly decreasing, and 
\[
W_0(f(0))-W_+^F(\overline\phi(\tau_\theta))=W_+^F(0)-W_+^F(\bar\gamma)
=
- \frac{\lambda\bar\gamma}{2}\left(\zeta -\eta\frac{1-2 \alpha }{ (1-\alpha)^2}-1\right),
\]	
which is strictly positive if and only if $\zeta <(1-2 \alpha) \eta /(1-\alpha)^2+1$. 
\end{proof}

\section{Rigid acquisition with linear information costs}\label{Rigid information acquisition with linear information costs}
In this appendix, we compare the model of rigid information acquisition with linear information costs \citep{colombofemminis2008,colombofemminis2014,ui2022} with the model in this paper. 
We focus on linear information costs (linear in the precision of private information) because the crowding-out effect is largest among all convex information costs \citep{ui2014}. 
We show that the crowding-out effect is larger in the case of flexible information acquisition than in the case of rigid information acquisition if and only if the game exhibits strategic complementarity. 

An agent receives a private signal $\tilde\theta_i$ that follows a normal distribution in the second period at a cost $c\tau_i$, where $c>0$ is constant and $\tau_i\geq 0$ is the the precision of private information defined as $\tau_i\equiv 1/\var[\theta|\tilde\theta_i]-\tau_\theta$. 
A private signal is conditionally independent across agents given $\theta$ and $\tilde\theta$. 
The second period subgame has a unique equilibrium \citep{colombofemminis2014}, and the precision of private information is given by\footnote{See \citet{colombofemminis2014} and \citet{ui2022}.}
\[
\psi_c(\tau)\equiv
\begin{cases}
 \left({\beta}/{\sqrt{c}}-\tau\right)/(1-\alpha) &\text{ if }\tau< \beta/\sqrt{c},\\
0 &\text{ if }\tau\geq  \beta/\sqrt{c}.
\end{cases}
\]
For $\tau< \beta/\sqrt{c}$, $\psi_c(\tau)$ is linear and decreasing in $\tau$, and $\psi_c'(\tau)=-1/(1-\alpha)$ is decreasing in $\alpha$; that is, the crowding-out effect is more siginificant when $\alpha$ is larger, which is essentially the same as in the case of flexible information acquisition. 

The total amount of information also has the same property as in the case of flexible information acquisition. 
The mutual information of an agent's signals $(\tilde\theta,\tilde\theta_i)$ and a state $\theta$ is calculated as 
\begin{equation}
\I(\tilde\theta,\tilde\theta_i;\theta)=\frac{1}{2}\log\frac{\big|\var[\tilde\theta,\tilde\theta_i]\big|\,\big|\var[\theta]\big|}{\big|\var[\tilde\theta,\tilde\theta_i,\theta]\big|}=
I_c(\tau)\equiv \frac{1}{2}\log\frac{\tau+\psi_c(\tau)}{\tau_\theta},
\label{MI linear}	
\end{equation}
where $\big|\var[x_1,\ldots,x_n]\big|$ denotes the determinant of the covariance matrix of a random vector $(x_1,\ldots,x_n)$. 
Then, we obtain the following counterpart for Proposition \ref{main proposition 0}.
\begin{proposition}\label{crowding out linear 1}
If $\tau<\beta/\sqrt{c}$, then 
\begin{equation}
\frac{d\I(\tilde\theta,\tilde\theta_i;\theta)}{d\tau}=\frac{d{{I}_c}(\tau)}{d\tau}=-\frac{\alpha}{2(1-\alpha)(\tau+\psi_c(\tau))}. \label{main prop 1 linear eq}
\end{equation}
Thus, ${I}_c$ 
is increasing in $\tau$ if $\alpha<0$, decreasing in $\tau$ if $\alpha>0$, and independent of $\tau$ if $\alpha=0$. 	
\end{proposition}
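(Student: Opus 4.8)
The plan is to take \eqref{MI linear} as given and differentiate the closed form $I_c(\tau)=\tfrac12\log\big((\tau+\psi_c(\tau))/\tau_\theta\big)$ on the branch $\tau<\beta/\sqrt c$. The first equality $\I(\tilde\theta,\tilde\theta_i;\theta)=I_c(\tau)$ is the standard Gaussian mutual-information identity underlying \eqref{MI linear}: the posterior of $\theta$ given the two signals is normal with precision $\tau+\psi_c(\tau)$ (the public precision $\tau$ plus the equilibrium private precision $\tau_i=\psi_c(\tau)$), while the prior of $\theta$ has precision $\tau_\theta$, so the reduction in differential entropy is $\tfrac12\log\big((\tau+\psi_c(\tau))/\tau_\theta\big)$. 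Hence it remains only to compute $dI_c/d\tau$.

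First I would restrict to $\tau<\beta/\sqrt c$, where $\psi_c(\tau)=(\beta/\sqrt c-\tau)/(1-\alpha)$, and simplify the argument of the logarithm:
\[
\tau+\psi_c(\tau)=\tau+\frac{\beta/\sqrt c-\tau}{1-\alpha}=\frac{\beta/\sqrt c-\alpha\tau}{1-\alpha},
\]
which is affine in $\tau$ with derivative $-\alpha/(1-\alpha)$. Here $1-\alpha>0$ by the maintained assumption $\alpha<1$, and $\tau+\psi_c(\tau)>0$ on this region since $\tau\ge0$ and $\psi_c(\tau)>0$ for $\tau<\beta/\sqrt c$; thus the logarithm and its derivative are well defined.

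Then the chain rule gives
\[
\frac{dI_c(\tau)}{d\tau}=\frac12\cdot\frac{1}{\tau+\psi_c(\tau)}\cdot\left(-\frac{\alpha}{1-\alpha}\right)=-\frac{\alpha}{2(1-\alpha)\big(\tau+\psi_c(\tau)\big)},
\]
which is \eqref{main prop 1 linear eq}. Since $2(1-\alpha)(\tau+\psi_c(\tau))>0$, the derivative has the sign of $-\alpha$, so $I_c$ is strictly increasing if $\alpha<0$, strictly decreasing if $\alpha>0$, and constant if $\alpha=0$, as claimed. There is essentially no obstacle in this argument; the only points needing care are staying on the linear (rather than identically zero) branch of $\psi_c$ and recording positivity of the denominator from $\alpha<1$.
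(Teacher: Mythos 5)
Your proposal is correct and matches what the paper intends: it takes the mutual-information formula \eqref{MI linear} as given (the paper states it immediately before the proposition and omits the proof of the derivative as a direct calculation), simplifies $\tau+\psi_c(\tau)=(\beta/\sqrt c-\alpha\tau)/(1-\alpha)$ on the branch $\tau<\beta/\sqrt c$, and differentiates the logarithm. The sign discussion using $\alpha<1$ and positivity of $\tau+\psi_c(\tau)$ is exactly the right closing step.
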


We compare $dI_{\overline\phi}(\tau)/d\tau$ and $dI_{c}(\tau)/d\tau$ when $I_{\overline\phi}(\tau)=I_c(\tau)$, i.e., the total amount of information is the same for both types of information acquisition.  
If $dI_{\overline\phi}(\tau)/d\tau<dI_{c}(\tau)/d\tau$, the change in the total amount of information caused by a one-unit increase in public information is smaller under flexible information acquisition, which means that the crowding-out effect is larger.  
This is true if and only if the game exhibits strategic complementarity, as shown by the next proposition.

\begin{proposition}\label{crowding out linear 2}
Suppose that ${I}_{\overline\phi}(\tau)={I}_c(\tau)$ and $\tau<\min\{f(\max\{0,(2\alpha-1)/\alpha\}),\beta/\sqrt{c}\}$. 
Then, 
\begin{equation}
dI_{\overline\phi}(\tau)/d\tau\lessgtr dI_{c}(\tau)/d\tau\ \Leftrightarrow\ \alpha\gtrless 0.
\label{linear vs flexible}	
\end{equation}
\end{proposition}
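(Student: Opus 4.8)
The plan is to reduce the comparison of the two derivatives to a single algebraic inequality in $\tau$ (under the constraint that the total amounts of information coincide) and then determine its sign. First I would write both derivatives in closed form using the results already in hand: by Proposition \ref{main proposition 0}, $dI_{\overline\phi}(\tau)/d\tau = \alpha\overline\phi'(\tau)/(1-\alpha\overline\phi(\tau))$, and by Proposition \ref{crowding out linear 1}, $dI_c(\tau)/d\tau = -\alpha/(2(1-\alpha)(\tau+\psi_c(\tau)))$. Since both expressions carry a factor $\alpha$, the sign of the difference $dI_{\overline\phi}(\tau)/d\tau - dI_c(\tau)/d\tau$ is $\alpha$ times the sign of $\overline\phi'(\tau)/(1-\alpha\overline\phi(\tau)) + 1/(2(1-\alpha)(\tau+\psi_c(\tau)))$. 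So the task becomes: show this bracketed quantity is strictly negative whenever $I_{\overline\phi}(\tau)=I_c(\tau)$, $\tau<\min\{f(\max\{0,(2\alpha-1)/\alpha\}),\beta/\sqrt{c}\}$, and $\alpha\neq 0$; this then yields \eqref{linear vs flexible} directly (with the case $\alpha=0$ trivial since both derivatives vanish).

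Next I would exploit the hypothesis $I_{\overline\phi}(\tau)=I_c(\tau)$ to eliminate one of the two models' parameters. By \eqref{total amount of information 2}, $I_{\overline\phi}(\tau) = \tfrac12\log\bigl(\tau/(1-\overline\phi(\tau))\bigr) - \tfrac12\log\tau_\theta$, and by \eqref{MI linear}, $I_c(\tau) = \tfrac12\log\bigl((\tau+\psi_c(\tau))/\tau_\theta\bigr)$. Equating them gives the clean identity $\tau+\psi_c(\tau) = \tau/(1-\overline\phi(\tau))$, i.e. $\psi_c(\tau) = \tau\overline\phi(\tau)/(1-\overline\phi(\tau))$. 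Substituting this into $dI_c/d\tau$ replaces $\tau+\psi_c(\tau)$ by $\tau/(1-\overline\phi(\tau))$, so that $dI_c(\tau)/d\tau = -\alpha(1-\overline\phi(\tau))/(2(1-\alpha)\tau)$. Writing $\gamma\equiv\overline\phi(\tau)\in[0,1)$ and using $\tau=f(\gamma)$ from \eqref{main result eq 2} together with the explicit formula $\overline\phi'(\tau)=1/f'(\gamma)$ (computed in the proof of Lemma \ref{lemma: crowding-out}), both derivatives become explicit rational functions of $\gamma$ and $\alpha$ alone. After substituting $\tau=f(\gamma)=\tfrac{2\beta^2}{\lambda}\cdot\tfrac{1-\gamma}{(1-\alpha\gamma)^2}$ and $f'(\gamma)=\tfrac{2\beta^2}{\lambda}\cdot\tfrac{(2-\gamma)\alpha-1}{(1-\alpha\gamma)^3}$, the whole comparison collapses to comparing two explicit expressions in $\gamma$ and $\alpha$.

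The remaining step is the routine-but-delicate algebra: I expect the difference $dI_{\overline\phi}(\tau)/d\tau - dI_c(\tau)/d\tau$, after clearing denominators, to factor as $\alpha$ times a manifestly-signed rational function of $\gamma$ and $\alpha$ (with denominator $(1-\alpha)(1-\alpha\gamma)\bigl((2-\gamma)\alpha-1\bigr)$, whose sign is controlled by $\alpha<1$ and, via Lemma \ref{def of gamma}, by the constraint $\tau<f(\max\{0,(2\alpha-1)/\alpha\})$ forcing $(2-\gamma)\alpha-1<0$, exactly as in the proof of Lemma \ref{lemma: crowding-out}). The main obstacle will be verifying that the numerator of that factored difference has a definite sign for all $\gamma\in[0,1)$ and all $\alpha<1$ — I anticipate it simplifies to something like a positive multiple of $\gamma^2$ (or a visibly nonnegative polynomial in $\gamma$), which would pin down the sign of the bracket uniformly. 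Once that sign is established, multiplying back by the $\alpha$ factor and by the sign of the denominator gives precisely the equivalence in \eqref{linear vs flexible}, and the edge case $\alpha=0$ is immediate since then $I_{\overline\phi}$ and $I_c$ are both constant in $\tau$. I would present the key algebraic identity $\psi_c(\tau)(1-\overline\phi(\tau))=\tau\overline\phi(\tau)$ prominently, since it is the conceptual heart that makes the two models directly comparable at equal information levels.
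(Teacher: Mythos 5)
Your proposal follows the paper's proof essentially verbatim: the identity $\tau+\psi_c(\tau)=\tau/(1-\overline\phi(\tau))$ obtained by equating the two mutual informations, the substitution $\gamma=\overline\phi(\tau)$ with $\overline\phi'(\tau)=1/f'(\gamma)$ and $\tau=f(\gamma)$, and the sign of $(2-\gamma)\alpha-1$ inherited from the proof of Lemma \ref{lemma: crowding-out} are exactly the steps the paper uses. The algebra you leave as ``anticipated'' does close, though not quite in the form you guess: the bracketed sum equals $\lambda(1-\alpha\gamma)^3/\bigl(4\beta^2(1-\alpha)((2-\gamma)\alpha-1)\bigr)$, which is strictly negative on the stated domain, confirming the sign you need.
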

\begin{proof}
By \eqref{MI linear} and\eqref{total amount of information 2}, 
\[
{I}_{\overline\phi}(\tau)-{I}_c(\tau)=
\frac{1}{2}\left(\log\frac{\tau}{1-\overline\phi(\tau)}
-\log(\tau+\psi_c(\tau))\right)=0,\]	
and thus $\tau+\psi_c(\tau)={\tau}/({1-\overline\phi(\tau)})$. 
Therefore, 
\begin{align*}
\frac{d{{I}_{\overline\phi}}(\tau)}{d\tau}-\frac{d{{I}_c}(\tau)}{d\tau}
&=\frac{\alpha\overline\phi'(\tau)}{1-\alpha\overline\phi(\tau)}+\frac{\alpha}{2(1-\alpha)(\tau+\psi_c(\tau))}	\\
&=\alpha\left(\frac{\overline\phi'(\tau)}{1-\alpha\overline\phi(\tau)}+\frac{1-\overline\phi(\tau)}{2(1-\alpha)\tau}	\right)\\
&=\alpha\left(\frac{1/f'(\gamma)}{1-\alpha\gamma}+\frac{1-\gamma}{2(1-\alpha)f(\gamma)}	\right)\\
&=\frac{\lambda \alpha (\alpha  \gamma -1)^3}{4 (1-\alpha) \beta ^2 (1-  (2-\gamma)\alpha)},
\end{align*}
where $\gamma=\overline\phi(\tau)$. 
Because $1-(2-\gamma)\alpha>0$ (see the proof of Lemma \ref{lemma: crowding-out}), 
we obtain \eqref{linear vs flexible}. 
\end{proof}

In a game with strategic complementarity, 
a one-unit increase in public information causes a greater decrease in the total amount of information (i.e., the crowding-out effect is larger) under flexible information acquisition. 
In a game with strategic complementarity, 
it causes a greater increase in the total amount of information (i.e., the crowding-out effect is smaller)  under flexible information acquisition. 
This implies that the crowding-out effect is more sensitive to the slope of the best response in the flexible information acquisition model than in the rigid information acquisition model. 





\section{The crowding-in effect of public information}\label{total information under multiple equilibria}

In this appendix, we study the problem of Section \ref{The crowding-out effect of public information} in the case of multiple equilibria. 
Assume that $\alpha>1/2$ and $f(0)< \tau< f((2\alpha-1)/\alpha)$. 
By Lemma \ref{def of gamma}, $\Gamma(\tau)=\{\underline\phi(\tau),\overline\phi(\tau)\}$. 
We focus on the equilibrium with $\gamma=\underline\phi(\tau)$ and show that when more precise public information is provided, 
the agents acquire more precise private information, so the total amount of information increases. 
In contrast, in the equilibrium with $\gamma=\overline\phi(\tau)$, it is straightforward to show that every result in Section \ref{The crowding-out effect of public information} remains valid.


As discussed in Section \ref{The crowding-out effect of public information},  
more precise public information decreases private information in the equilibrium with $\gamma=\overline\phi(\tau)$.  
However, in the equilbrium with $\gamma=\underline\phi(\tau)$, more precise public information increases private information. 
This effect of public information is referred to as the crowding-in effect. 

\begin{lemma}\label{lemma: crowding-in}
If $\alpha>0$ and $f(0)< \tau< f((2\alpha-1)/\alpha)$, then $\underline\phi'(\tau)>0$ and $\overline\phi'(\tau)<0$.
\end{lemma}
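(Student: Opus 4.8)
The plan is to deduce both inequalities from the shape of the graph of $f$ established in the proof of Proposition~\ref{proposition 1} (illustrated in Figure~\ref{fig:case (ii)}) together with the inverse function theorem. Recall that for $\alpha>1/2$ — the standing assumption of this appendix, under which $\Gamma(\tau)$ is the doubleton $\{\underline\phi(\tau),\overline\phi(\tau)\}$ by Lemma~\ref{def of gamma} — the proof of Proposition~\ref{proposition 1} shows that $f$ is strictly increasing on $[0,(2\alpha-1)/\alpha]$ and strictly decreasing on $[(2\alpha-1)/\alpha,1)$. Concretely, the implicit-function-theorem computation in the proof of Lemma~\ref{lemma: crowding-out} records
\[
f'(\gamma)=\frac{2\beta^2\big((2-\gamma)\alpha-1\big)}{\lambda(1-\alpha\gamma)^3},
\]
so that, since $1-\alpha\gamma>0$ on $[0,1)$, one has $f'(\gamma)>0$ for $\gamma<(2\alpha-1)/\alpha$ and $f'(\gamma)<0$ for $\gamma>(2\alpha-1)/\alpha$.

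The next step is to pin down which branch each of $\underline\phi(\tau)$ and $\overline\phi(\tau)$ lives on. By Lemma~\ref{def of gamma} we already know $\underline\phi(\tau)\le(2\alpha-1)/\alpha\le\overline\phi(\tau)$; I would upgrade these to strict inequalities using the \emph{strictness} of the hypothesis $f(0)<\tau<f((2\alpha-1)/\alpha)$. Indeed, $f$ is strictly monotone, hence injective, on each of the two subintervals $[0,(2\alpha-1)/\alpha]$ and $[(2\alpha-1)/\alpha,1)$; so $\underline\phi(\tau)$ is the unique preimage of $\tau$ in the first subinterval and $\overline\phi(\tau)$ the unique preimage in the second, and $f(0)<\tau<f((2\alpha-1)/\alpha)$ forces $0<\underline\phi(\tau)<(2\alpha-1)/\alpha$ and $(2\alpha-1)/\alpha<\overline\phi(\tau)<1$. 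Consequently $f'(\underline\phi(\tau))>0$ and $f'(\overline\phi(\tau))<0$, so both $\underline\phi$ and $\overline\phi$ are differentiable near $\tau$ by the inverse function theorem, and differentiating the identities $f(\underline\phi(\tau))=\tau=f(\overline\phi(\tau))$ gives $\underline\phi'(\tau)=1/f'(\underline\phi(\tau))>0$ and $\overline\phi'(\tau)=1/f'(\overline\phi(\tau))<0$. The claim $\overline\phi'(\tau)<0$ is in fact immediate from Lemma~\ref{lemma: crowding-out}, since $\tau<f((2\alpha-1)/\alpha)=f(\max\{0,(2\alpha-1)/\alpha\})$ here; only the statement about $\underline\phi$ needs the argument above.

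I do not expect a genuine obstacle. The only point requiring care is the passage from the weak inequalities of Lemma~\ref{def of gamma} to strict ones, which is precisely where the open-interval hypothesis on $\tau$ enters, and it follows at once from the strict monotonicity of $f$ on each subinterval. The remainder is the standard inverse-function-theorem bookkeeping, with the explicit formula for $f'$ making the sign determination transparent.
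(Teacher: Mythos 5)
Your proposal is correct and follows essentially the same route as the paper: both compute $\underline\phi'(\tau)=1/f'(\underline\phi(\tau))$ and $\overline\phi'(\tau)=1/f'(\overline\phi(\tau))$ via the implicit/inverse function theorem and read off the sign from the factor $(2-\gamma)\alpha-1$, which changes sign at $\gamma=(2\alpha-1)/\alpha$. If anything, you are slightly more careful than the paper in upgrading the weak inequality $\underline\phi(\tau)\le(2\alpha-1)/\alpha$ of Lemma~\ref{def of gamma} to a strict one using the open-interval hypothesis on $\tau$, a step the paper's proof takes for granted.
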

\begin{proof}
The proof for $\overline\phi'(\tau)<0$ is given in the proof of Lemma \ref{lemma: crowding-out}. 
We prove $\underline\phi'(\tau)<0$. 
By the implicit function theorem, 
\[
\underline\phi'(\tau)=1/f'(\underline\phi(\tau))
=\frac{\lambda  (1-\alpha  \underline\phi(\tau))^3}{2 \beta ^2 ( (2-\underline\phi(\tau))\alpha -1)}.
\]
The numerator is strictly positive since $\underline\phi(\tau)\in[0,1)$ and $\alpha<1$. 
The denominator is strictly positive because $\underline\phi(\tau)<(2\alpha-1)/\alpha$ by Lemma \ref{def of gamma}, and 
$
(2-\overline\phi(\tau))\alpha -1> (2-(2\alpha-1)/\alpha)\alpha -1=0
$. 
Therefore, we have $\underline\phi'(\tau)<0$. 
\end{proof}

To see why the crowding-in effect arises when $\gamma=\underline\phi(\tau)$, imagine that the policymaker provides more precise public information and, at the same time, the opponents acquire more precise private information.
Then, private information is more valuable to an agent because both private information and public information are strategic complements in a game with strategic complementarity \citep{hellwigveldkamp2009}. 
Because an agent acquires a small amount of information when $\gamma=\underline\phi(\tau)$, the marginal cost with respect to the information fraction, $(1-\gamma)^{-1}/2$, is also small. 
Thus, an agent has a strong incentive to acquire more precise private information, making the crowding-in effect consistent with equilibrium.

Due to the crowding-in effect, more public information increases the total amount of information. 
More formally, the total amount of information when $\gamma=\underline\phi(\tau)$ is given by  
\begin{equation}
\I(\tilde\theta,a_i;\theta)=I_{\underline\phi(\tau)}(\tau)\equiv 
\frac{1}{2}\left(\log\frac{\tau}{1-\underline\phi(\tau)}-\log\tau_\theta\right)
=\frac{1}{2}\left(\log\frac{2 \beta ^2}{\lambda(1-\alpha  \underline\phi(\tau))^2}-\log\tau_\theta\right),\notag
\label{total amount of information 2'}
\end{equation}
and we have 
${d I_{\underline\phi(\tau)}/(\tau)}{d\tau}={\alpha\underline\phi'(\tau)}/({1-\alpha\underline\phi(\tau)})>0$. 


\end{appendices}


\begin{thebibliography}{99}





\bibitem[Angeletos and Pavan(2004)]{angeletospavan2004}  Angeletos, G.-M., Pavan, A., 2004. Transparency of information and coordination in economies with investment complementarities. Amer. Econ. Rev. 94, 91--98.

\bibitem[Angeletos and Pavan(2007)]{angeletospavan2007} 
Angeletos, G.-M., Pavan, A., 2007. Efficient use of information and social value of information. Econometrica 75, 1103--1142.



\bibitem[Bergemann and Morris(2013)]{morrisbergemann2013} Bergemann, D., Morris, S., 2013.  Robust predictions in games with incomplete information. Econometrica 81, 1251--1308. 

\bibitem[Bergemann and Morris(2019)]{morrisbergemann2017} Bergemann, D., Morris, S., 2019.  Information design: A unified perspective. J.\ Econ.\ Lit.\ 57, 44--95.  

\bibitem[Bizzotto et al.(2020)]{nizzottoetal2020}
Bizzotto, J., R\"udiger, J., Vigier, A., 2020. 
Testing, disclosure and approval. 
J. Econ. Theory 187, 105002.



\bibitem[Bloedel and Segal(2020)]{bloedelsegal2020}
Bloedel, A., Segal, I., 2020. 
Persuading a rationally inattentive agent. Working
Paper.




\bibitem[Bowsher and Swain(2012)]{bowsherswain2012} 
Bowsher, C.\ G., Swain, P.\ S., 2012. 
Identifying sources of variation and the flow of information in biochemical networks. Proc. Natl. Acad. Sci. U.S.A. 109, E1320--E1328.


\bibitem[Colombo and Femminis(2008)]{colombofemminis2008} 
Colombo, L., Femminis, G.,  2008. 
The social value of public information with costly information acquisition.
Econ. Letters 100, 196--199.


\bibitem[Colombo et al.(2014)]{colombofemminis2014} Colombo, L., Femminis, G.,  Pavan, A., 2014. Information acquisition and welfare. Rev.\ Econ. Stud. 81, 1438--1483.


\bibitem[Cover and Thomas(2006)]{coverthomas2006}
Cover, T.\ M., Thomas, J.\ A., 2006. Elements of Information Theory. Wiley.






\bibitem[Denti(2020)]{denti2020}
Denti, T., 2020. Unrestricted Information Acquisition. Working paper. 





\bibitem[H\'ebert and La'O(2021)]{hebertlao2021}
H\'ebert, B., La'O, J., 2021. 
Information acquisition, efficiency, and non-fundamental volatility. 
NBER Working Paper 26771.

\bibitem[H\'ebert and Woodford(2021)]{hebertwoodford2021}
H\'ebert, B., Woodford, M., 2021. 
Neighborhood-based information costs.
Amer.\ Econ.\ Rev. 111, 3225--3255.




\bibitem[Hellwig and Veldkamp(2009)]{hellwigveldkamp2009}
Hellwig, C., Veldkamp, L., 2009. 
Knowing what others know: coordination motives in information acquisition. 
Rev.\ Econ.\ Stud.\ 76, 223--251.





\bibitem[Huber and Ronchetti(2009)]{huber2009}
Huber, P.\ J.,  Ronchetti, E.\ M., 2009. Robust Statistics. Wiley. 

\bibitem[Kamenica(2019)]{kamenica2019} Kamenica, E., 2019. Bayesian persuasion and information design. Annu.\ Rev.\ Econ.\  11, 249--72

\bibitem[Kamenica and Gentzkow(2011)]{kamenicagentzkow2011}
Kamenica, E., Gentzkow, M., 2011. Bayesian persuasion. 
 Amer.\ Econ.\ Rev. 101, 2590--2615.
 









\bibitem[Lipnowski et al.(2020)]{lipnowskietal2020a}
Lipnowski, E., Mathevet, L., Wei, D., 2020. Attention management. 
 Amer.\ Econ.\ Rev.: Insights 2, 17--32.



\bibitem[Mackowiak et al.(2021)]{mackowiaketal2021}
Mackowiak, B., Mat\v  ejka, F., Wiederholt, M., 2021. Rational inattention:
A review.  J.\ Econ.\ Lit.  (forthcoming).


\bibitem[Mackowiak and Wiederholt(2009)]{mackowiakwiederholt2009}
Mackowiak, B., Wiederholt, M., 2009. Optimal sticky prices under rational inattention. Amer.\ Econ.\ Rev.\ 99, 769--803.



\bibitem[Matyskov\'a and Montes(2021)]{matyskovamontes2021}
Matyskov\'a, L., Montes, A., 2021.
Bayesian persuasion with costly information acquisition. 
Working paper.

\bibitem[Morris and Shin(2002)]{morrisshin2002} Morris, S., Shin, H. S., 2002.  Social value of public information. Amer. Econ. Rev. 92, 1521--1534. 


\bibitem[Myatt and Wallace(2012)]{myattwallace2012}
Myatt, D. P., Wallace, C., 2012. 
Endogenous information acquisition in coordination games. 
Rev.\ Econ. Stud. 79, 340--374. 


\bibitem[Myatt and Wallace(2015)]{myattwallace2015}
Myatt, D. P., Wallace, C., 2015. 
Cournot competition and the social value of information. 
J. Econ. Theory 158, 466--506.



\bibitem[Radner(1962)]{radner1962} 
Radner, R., 1962. Team decision problems. Ann. Math. Stat. 33, 857--881.

\bibitem[Rigos(2020)]{rigos2020} 
Rigos, A., 2020. Flexible Information acquisition in large coordination games. Working paper.


\bibitem[Sims(2003)]{sims2003} 
Sims, C.\ A., 2003. Implications of rational inattention. J.\ Monet.\ Econ.\ 50, 665--690
 
 




\bibitem[Ui(2009)]{ui2009}Ui, T., 2009. Bayesian potentials and information structures: Team decision problems revisited. Int.\ J.\ Econ.\ Theory 5, 271--291.

\bibitem[Ui(2014)]{ui2014}Ui, T., 2014. 
The social value of public information with convex costs of information acquisition. 
Econ.\ Letters 125, 249--252.

\bibitem[Ui(2016)]{ui2016}
Ui, T., 2016. 
Bayesian Nash equilibrium and variational inequalities. J.\ Math.\ Econ. 63, 139--146.

\bibitem[Ui(2020)]{ui2020}
Ui, T., 2020. 
LQG information design. Working paper. 

\bibitem[Ui(2022)]{ui2022}
Ui, T., 2022. 
Optimal and robust disclosure of public information. Working paper. 



\bibitem[Ui and Yoshizawa(2015)]{uiyoshizawa2015}
Ui, T., Yoshizawa, Y.,  2015. 
Characterizing social value of information. J.\ Econ.\ Theory 158, 507--535.



\bibitem[Vives(1988)]{vives1988}
Vives, X., 1988. Aggregation of information in large Cournot markets. Econometrica 56,  851--876.

\bibitem[Vives(2008)]{vives2008} Vives, X., 2008. Information and Learning in Markets: the Impact of Market Microstructure.  Princeton Univ.\ Press. 


\bibitem[Wong(2008)]{wong2008}
Wong, J., 2008. Information acquisition, dissemination, and transparency of monetary policy. Can.\ J.\ Econ.\ 41, 46--79.

\bibitem[Yang(2015)]{yang2015}
Yang, M., 2015. Coordination with flexible information acquisition. J.\ 
Econ.\ Theory 158, 721--738.



\end{thebibliography}
\end{document}